\providecommand{\tabularnewline}{\\}
\theoremstyle{plain}
\newtheorem{assumption}{\protect\assumptionname}
\theoremstyle{plain}
\newtheorem{lem}{\protect\lemmaname}
\theoremstyle{remark}
\newtheorem{rem}{\protect\remarkname}
\theoremstyle{plain}
\newtheorem{thm}{\protect\theoremname}
\theoremstyle{definition}
 \newtheorem{example}{\protect\examplename}
\theoremstyle{plain}
\newtheorem{cor}{\protect\corollaryname}
\theoremstyle{definition}
\newtheorem{defn}{\protect\definitionname}
\theoremstyle{plain}
\newtheorem{lyxalgorithm}{\protect\algorithmname}
\providecommand{\algorithmname}{Algorithm}
\providecommand{\assumptionname}{Assumption}
\providecommand{\corollaryname}{Corollary}
\providecommand{\definitionname}{Definition}
\providecommand{\examplename}{Example}
\providecommand{\lemmaname}{Lemma}
\providecommand{\remarkname}{Remark}
\providecommand{\theoremname}{Theorem}
\begin{document}
\global\long\def\Acal{\mathcal{A}}%

\global\long\def\Bcal{\mathcal{B}}%

\global\long\def\Ccal{\mathcal{C}}%

\global\long\def\Dcal{\mathcal{D}}%

\global\long\def\hD{\hat{D}}%

\global\long\def\te{\tilde{e}}%

\global\long\def\tF{\tilde{F}}%

\global\long\def\bF{\bar{F}}%

\global\long\def\bG{\bar{G}}%

\global\long\def\hG{\hat{G}}%

\global\long\def\tG{\tilde{G}}%

\global\long\def\Hcal{\mathcal{H}}%

\global\long\def\tH{\tilde{H}}%

\global\long\def\bH{\bar{H}}%

\global\long\def\cH{\check{H}}%

\global\long\def\hL{\hat{L}}%

\global\long\def\Lcal{\mathcal{L}}%

\global\long\def\Mcal{\mathcal{M}}%

\global\long\def\Ncal{\mathcal{N}}%

\global\long\def\Qcal{\mathcal{Q}}%

\global\long\def\tM{\tilde{M}}%

\global\long\def\bM{\bar{M}}%

\global\long\def\hM{\hat{M}}%

\global\long\def\hr{\hat{r}}%

\global\long\def\tR{\tilde{R}}%

\global\long\def\bR{\bar{R}}%

\global\long\def\RR{\mathbb{R}}%

\global\long\def\SS{\mathbb{S}}%

\global\long\def\ts{\tilde{s}}%

\global\long\def\Tcal{\mathcal{T}}%

\global\long\def\hU{\hat{U}}%

\global\long\def\tu{\tilde{u}}%

\global\long\def\hV{\hat{V}}%

\global\long\def\bW{\bar{W}}%

\global\long\def\tW{\tilde{W}}%

\global\long\def\hw{\hat{w}}%

\global\long\def\tZ{\tilde{Z}}%

\global\long\def\agmin{\arg\min}%

\global\long\def\hbeta{\hat{\beta}}%

\global\long\def\tbeta{\tilde{\beta}}%

\global\long\def\bbeta{\bar{\beta}}%

\global\long\def\teps{\tilde{\varepsilon}}%

\global\long\def\heps{\hat{\varepsilon}}%

\global\long\def\hpi{\hat{\pi}}%

\global\long\def\hxi{\hat{\xi}}%

\global\long\def\txi{\tilde{\xi}}%

\global\long\def\tDelta{\tilde{\Delta}}%

\global\long\def\tdelta{\tilde{\delta}}%

\global\long\def\hdelta{\hat{\delta}}%

\global\long\def\bkappa{\bar{\kappa}}%

\global\long\def\hgamma{\hat{\gamma}}%

\global\long\def\hGamma{\hat{\Gamma}}%

\global\long\def\tGamma{\tilde{\Gamma}}%

\global\long\def\bGamma{\bar{\Gamma}}%

\global\long\def\hSigma{\hat{\Sigma}}%

\global\long\def\bSigma{\bar{\Sigma}}%

\global\long\def\tSigma{\tilde{\Sigma}}%

\global\long\def\hsigma{\hat{\sigma}}%

\global\long\def\hTheta{\hat{\Theta}}%

\global\long\def\dTheta{\dot{\Theta}}%

\global\long\def\bTheta{\bar{\Theta}}%

\global\long\def\cTheta{\check{\Theta}}%

\global\long\def\rTheta{\mathring{\Theta}}%

\global\long\def\hPsi{\hat{\Psi}}%

\global\long\def\hpsi{\hat{\psi}}%

\global\long\def\hPhi{\hat{\Phi}}%

\global\long\def\hphi{\hat{\phi}}%

\global\long\def\tXi{\tilde{\Xi}}%

\global\long\def\lnorm{\left\Vert \right\Vert }%

\global\long\def\abs{\left|\right|}%

\global\long\def\ttheta{\tilde{\theta}}%

\global\long\def\htheta{\hat{\theta}}%

\global\long\def\dv{\dot{v}}%

\global\long\def\rank{{\rm rank}\,}%

\global\long\def\trace{{\rm trace}}%

\global\long\def\boldone{\mathbf{1}}%

\global\long\def\hmu{\hat{\mu}}%

\global\long\def\hA{\hat{A}}%

\global\long\def\tA{\tilde{A}}%

\global\long\def\hq{\hat{q}}%

\global\long\def\Fcal{\mathcal{F}}%

\global\long\def\tmu{\tilde{\mu}}%

\global\long\def\tX{\tilde{X}}%

\global\long\def\bX{\bar{X}}%

\global\long\def\htau{\hat{\tau}}%

\global\long\def\eig{{\rm eig}}%

\global\long\def\tq{\tilde{q}}%

\global\long\def\ttau{\tilde{\tau}}%

\global\long\def\halpha{\hat{\alpha}}%

\global\long\def\hrho{\hat{\rho}}%

\global\long\def\supp{{\rm supp}}%

\global\long\def\NN{\mathbb{N}}%

\global\long\def\tf{\tilde{f}}%

\global\long\def\tC{\tilde{C}}%

\global\long\def\txi{\tilde{\xi}}%

\global\long\def\tr{\tilde{r}}%

\global\long\def\hzeta{\hat{\zeta}}%

\global\long\def\barm{\bar{m}}%

\global\long\def\texti{\text{(i)}}%

\global\long\def\textii{\text{(ii)}}%

\global\long\def\textiii{\text{(iii)}}%

\global\long\def\oneb{\mathbf{1}}%

\global\long\def\Gcal{\mathcal{G}}%

\global\long\def\Zcal{\mathcal{Z}}%

\global\long\def\talpha{\tilde{\alpha}}%

\global\long\def\tp{\tilde{p}}%

\global\long\def\Wcal{\mathcal{W}}%

\global\long\def\ZZ{\mathbb{Z}}%

\global\long\def\BB{\mathbb{B}}%

\title{New possibilities in identification of binary choice models with fixed
effects}
\author{Yinchu Zhu\thanks{Email: yinchuzhu@brandeis.edu. I am extremely grateful to Whitney
Newey for introducing me to the literature on fixed effects in nonlinear
models and for having many inspiring discussions with me. I thank
Roy Allen, Whitney Newey and Martin Mugnier for pointing out mistakes
in previous versions and for providing helpful comments on the paper.
I am also grateful for comments from Xavier D'Haultf{\oe}uille, Shakeeb
Khan, Elie Tamer and participants of various seminars and conferences.
All the errors are my own. }\\
\\
Department of Economics, \\
Brandeis University}
\maketitle
\begin{abstract}
We study the identification of binary choice models with fixed effects.
We propose a condition called sign saturation and show that this condition
is sufficient for identifying the model. In particular, this condition
can guarantee identification even when all the regressors are bounded,
including multiple discrete regressors. We also establish that without
this condition, the model is not identified unless the error distribution
belongs to a special class. Moreover, we show that sign saturation
is also essential for identifying the sign of treatment effects. Finally,
we introduce a measure for sign saturation and develop tools for its
estimation and inference. 
\end{abstract}
Key words: identification, panel model, binary choice, fixed effects

\section{Introduction}

This paper considers panel models with binary outcomes in the presence
of fixed effects. These models are convenient in economic analysis
as they allow for fairly general unobserved individual heterogeneity.
The nonlinear nature of the binary choice models makes it difficult
to eliminate the fixed effects by differencing. Since we view the
fixed effects or their conditional distribution as a nuisance parameter,
the identification of the parameter of interest becomes tricky. In
this paper, we provide new results and insights on what drives the
identification of the coefficients on the regressors and what this
means for applied work. 

Consider independent and identically distributed (i.i.d) observations
$\{(Y_{i},X_{i})\}_{i=1}^{n}$ with $Y_{i}=(Y_{i,0},Y_{i,1})$ and
$X_{i}=(X_{i,0},X_{i,1})$ from the following model
\[
Y_{i,t}=\oneb\{X_{i,t}'\beta+\alpha_{i}\geq u_{i,t}\}\qquad t\in\{0,1\},
\]
where the fixed effects are represented by the scalar variable $\alpha_{i}$
and $\beta$ is a non-random vector of coefficients. This is a semiparametric
model as the distribution of $\alpha_{i}$ given $X_{i}$ is unrestricted.
For notational simplicity, we drop the $i$ subscript in the rest
of the paper. Therefore, we write $Y=(Y_{0},Y_{1})$ and $X=(X_{0},X_{1})\in\mathcal{X}_{0}\times\mathcal{X}_{1}$
with
\begin{equation}
Y_{t}=\oneb\{X_{t}'\beta+\alpha\geq u_{t}\}\qquad t\in\{0,1\}.\label{eq: FE model}
\end{equation}

In this paper, we mainly focus on the identification of $\beta$ but
we will also discuss quantities related to treatment effects. There
are roughly two approaches to identifying $\beta$, depending on whether
or not we impose a parametric model on the distribution of $u_{t}$
given $(X,\alpha)$. Methods without parametric assumptions on the
error distribution are typically based on \citet{manski1987semiparametric}
and maximum-score-type estimation. This approach only assumes that
the distribution $u_{t}\mid(X,\alpha)$ does not depend on $t$. In
contrast, the more parametric approach relies on additional assumptions
on the functional form of the distribution $u_{t}\mid(X,\alpha)$.
For example, perhaps the most popular parametric assumption is that
$u_{t}$'s are i.i.d logistic errors across $t$ and are independent
of $(X,\alpha)$. This approach typically adopts an estimation scheme
based on the (conditional) likelihood. 

Despite the strong restrictions on the functional form, the approach
relying on parametric restrictions has received considerable attention
arguably due to identification reasons. The literature has pointed
out the widespread identification failure, such as \citet{arellano2011nonlinear}.
In particular, identification seems infeasible outside the logistic
case unless the support of $X$ is unbounded. For example, Assumption
2 in \citet{manski1987semiparametric} requires the unboundedness
of at least one component of $W=X_{1}-X_{0}$: for $W=(W_{1},...,W_{K})'$
and $\beta=(\beta_{1},...,\beta_{K})'$, there exists $k$ such that
$\beta_{k}\neq0$ and the conditional distribution $W_{k}\mid(W_{1},...,W_{k-1},W_{k+1},...,W_{K})$
has support equal to $\RR$ almost surely. Theorem 1 of \citet{chamberlain2010binary}
goes even further: for bounded $X$, the identification fails in certain
regions of the parameter space if no further restrictions are imposed
on the distribution of the error term $u_{t}$. %

In this paper, we provide a more precise picture than \citet{chamberlain2010binary}
by showing that identification without parametric assumptions on the
error distribution is quite possible even for bounded $X$. The main
motivation of this paper is to explore new possibilities without functional-form
assumptions. This raises concerns about stability. If identification
crucially hinges on the imposed functional form, the reliability of
the analysis could be a concern. After all, if the model parameter
is unidentified under every distribution function other than the logistic
one, how much should we trust this model? Hence, it is helpful to
explore a more robust setting. In this direction, this paper considers
the identification issue and leaves the estimation problem to future
research. We make the following contributions. 

First, we provide tight and simple identification conditions without
parametric restrictions on the error distribution. We show that a
sufficient condition for identification is what we refer to as sign
saturation (Assumption \ref{assu: sign saturation}), which states
that $P\left(E(Y_{1}-Y_{0}\mid X)>0\right)$ and $P\left(E(Y_{1}-Y_{0}\mid X)<0\right)$
are both strictly positive. This condition can hold with bounded regressors
$X$ and allow for multiple discrete regressors and interaction terms.
Moreover, we show that this is also a necessary condition for identification
unless the distribution of $u_{t}$ is in a special class. Therefore,
although we know (from \citet{chamberlain2010binary}) that for bounded
regressors the identification fails at some values of $\beta$, our
results pinpoint these values: the identification fails exactly at
points where the sign saturation fails. Therefore, whether the regressors
are bounded is not what really drives the identification. The identifiability
is more closely related to the sign saturation condition, which is
simple and intuitive. 

A key advantage of the proposed sign saturation condition is that
it is stated in terms of observed variables only and does not involve
the parameter that we are trying to identify. Hence, it can be used
to answer the question of whether identification holds under the distribution
of observed data, thereby making identification directly testable.
This is different from the \citet{manski1987semiparametric}-type
condition. For example, suppose that $K=2$, $W_{1}$ is bounded and
the conditional distribution $W_{2}\mid W_{1}$ has support $\RR$
almost surely. The identification condition in \citet{manski1987semiparametric}
becomes $\beta_{2}\neq0$. However, how can we check $\beta_{2}\neq0$
when the identifiability of $\beta$ is not yet established? It does
not seem obvious how to do so with the data. In contrast, under the
sign saturation condition, we check statements \textit{only} on observed
data: $P\left(E(Y_{1}-Y_{0}\mid X)>0\right)$ and $P\left(E(Y_{1}-Y_{0}\mid X)<0\right)$. 

Second, we show that the sign saturation condition is also sufficient
and necessary for learning the sign of the marginal effects. Even
though the magnitude of different notions of the marginal effects
is usually not point identified, the sign of the effects is typically
the same as the sign of a component of $\beta$. It turns out that
outside a special class of distributions, if the sign saturation condition
fails, we cannot guarantee to distinguish between zero effects and
strictly positive effects. In many empirical studies, one important
task is to check whether the identified set for the effects includes
zero. Therefore, it is worthwhile to check the sign saturation condition
before exploring additional assumptions that give bounds to treatment
effects. 

Third, we introduce a tool that can be used to assess the sign saturation
condition empirically. We show that $E\max\{P(Y_{1}-Y_{0}\mid X),0\}$
and $E\min\{P(Y_{1}-Y_{0}\mid X),0\}$ can be directly estimated from
the data and confidence intervals can be constructed by a simple bootstrap
without any tuning parameters such as bandwidth or number of basis
functions. For bounded regressors, the existing literature highlights
the risk of identification failure: the identification possibly (not
definitely) fails. This generic warning may have discouraged the application
of fixed effect models in practice. The proposed tool serves as a
more accurate diagnosis so that the identifiability can be assessed
from the data.%

Our work is closely related to the literature of binary response models
with logistic errors. \citet{rasch1960probabilistic}, \citet{andersen1970asymptotic}
and \citet{chamberlain1980analysis} consider the estimation and inference
in this case. The logistic link function is also singled out in \citet{chamberlain2010binary}
as the ``nice'' link function: the identification does not require
unbounded regressors. Recently, the work of \citet{Mugnier2009.08108}
points out that if there are more than 2 time periods, then the nice
link function can be extended to a generalized version of the logistic
distribution. This suggests that imposing a special class of parametric
structures is a useful strategy of achieving identification and this
special class seems to be the logistic functions. We provide new insight
on this. We show that this special class also includes those such
that $\dot{G}(\cdot)$ is periodic, where $\dot{G}(\cdot)$ is the
derivative of $\ln\frac{F(\cdot)}{1-F(\cdot)}$ and $F(\cdot)$ is
the distribution function of $u_{t}$. If $F(\cdot)$ is logistic,
then $\dot{G}(\cdot)$ is a constant function, which is clearly periodic.
However, we find that identification is possible for any periodic
$\dot{G}(\cdot)$. Although \citet{chamberlain2010binary} tells us
that for non-logistic distributions, identification fails in an open
neighborhood, we show that for any periodic $\dot{G}(\cdot)$, identification
must hold in another open neighborhood. Therefore, it seems to us
that the truly problematic distributions are those with non-periodic
$\dot{G}(\cdot)$. Indeed, outside this special class of periodic
$\dot{G}(\cdot)$, we show that identification is impossible at \textit{every}
point if the sign saturation condition is not satisfied. Thus, for
generic distribution functions, sign saturation is a necessary condition
for identification. We summarize the relation between identifiability
and error distribution in Table \ref{tab: ID and sign sat}. %

Our work is also closely related to works that do not assume logistic
errors. First, some results in the literature allow for bounded regressors
but require all regressors to be continuous. For example, Assumption
3.3 of \citet{shi2018estimating} gives a sufficient condition for
identification under bounded support. As commented in the paper, their
assumption essentially requires all regressors to be continuous. Continuous
distribution on all the regressors are also required by Assumption
6 of \citet{toth2017} and Assumption 4' of \citet{gao2023logical}.
Examples on nonseparable models include \citet{hoderlein2012nonparametric},
\citet{chernozhukov2015nonparametric} and \citet{chernozhukov2019nonseparable};
their arguments rely on the derivatives with respect to $X$ and thus
$X$ needs to be continuous. Here, we allow for one or multiple discrete
regressors, which are important in empirical studies as many treatment
variables are binary. Second, Corollary 4.1 of \citet{horowitz2009semiparametric}
considers a related model in the cross-sectional setting. It is possible
to translate this to the panel data in terms of conditional densities,
but these are still not the most general conditions; for example,
some strictly increasing continuous distribution functions have zero
density almost everywhere, e.g., \citet{salem1943some} and \citet{takacs1978increasing}.
Most importantly, none of the aforementioned works show whether their
conditions are necessary. We contribute to the literature by finding
what must be assumed and developing empirical tools to assess it.%
{} %
{} %

Our necessity results complement those in \citet{pakes2024moment}.
By Proposition 3 therein, $\arg\max_{\beta}E(Y_{1}-Y_{0})\cdot\oneb\{W'\beta\geq0\}$,
the set of maximizers of the maximum score criterion function, is
the sharp identification set when we only assume the stationarity
of the distribution $u_{t}\mid(X,\alpha)$. Our results show that
this sharp identified reduces to a singleton up to scale under the
sign saturation condition. A natural question is the necessity of
this condition, especially if we are willing to impose additional
assumptions, namely i.i.d errors that are independent of $(X,\alpha)$.
By classical results, we know that under logistic errors, $\arg\max_{\beta}E(Y_{1}-Y_{0})\cdot\oneb\{W'\beta\geq0\}$
is not the sharp identified set. Our results in Section \ref{subsec: necessary cond}
further complete the puzzle by characterizing point identification
as the periodicity of $\dot{G}(\cdot)$. %
{} 

\section{\label{sec: id result}Identification of coefficients}

In the setting of (\ref{eq: FE model}), \citet{manski1987semiparametric}
identifies $\beta$ with the distributional stationarity condition:
\begin{assumption}
\label{assu: stationary error}The distribution of $u_{t}\mid(X,\alpha)$
does not depend on $t\in\{0,1\}$. Let the cumulative distribution
function (c.d.f) of this common distribution be $F(\cdot|x,\alpha)$.
Assume that for any $(x,\alpha)$, $F(\cdot\mid x,\alpha)$ is a strictly
increasing function on $\RR$.
\end{assumption}
Assumption \ref{assu: stationary error} rules out dynamic models.
In this paper, we focus on static models with two time periods. Without
any restriction on the magnitude of $\alpha$ and $u_{t}$ in (\ref{eq: FE model}),
it is impossible to identify the magnitude of $\beta$ but the following
result by \citet{manski1987semiparametric} gives the identification
of the ``direction'' of $\beta$; in other words, $\beta$ is identified
up to scaling. 
\begin{lem}[\citet{manski1987semiparametric}]
\label{lem: manski lem 1}Let Assumption \ref{assu: stationary error}
hold. Then with probability one, 
\begin{equation}
{\rm sgn}\left(E(Y_{1}-Y_{0}\mid X)\right)={\rm sgn}\left(W'\beta\right),\label{eq: manski condition}
\end{equation}
where $W=X_{1}-X_{0}$ and ${\rm sgn}(\cdot)$ is defined as ${\rm sgn}(t)=1$
for $t>0$, ${\rm sgn}(t)=-1$ for $t<0$ and ${\rm sgn}(0)=0$.
\end{lem}
The key idea of our result is based on the condition in (\ref{eq: manski condition}).
Clearly, the conditional mean function $E(Y_{1}-Y_{0}\mid X)$ is
identified. For simplicity, suppose that $P(E(Y_{1}-Y_{0}\mid X)=0)=0$.
Then (\ref{eq: manski condition}) implies that there is a hyperplane
of $W$ that perfectly classifies ${\rm sgn}(E(Y_{1}-Y_{0}\mid X))$.
This is an ideal support vector machine (SVM) as illustrated in Figure
\ref{fig:SVM}, where the red and blue dots represent $-1$ and $1$,
respectively for ${\rm sgn}(E(Y_{1}-Y_{0}\mid X))$.\footnote{In \citet{komarova2013binary}, the geometry based on SVM is also
to study binary response models with a median restriction; the model
there does not have a panel structure. We thank Christopher Walker
for this reference. } If we want to identify the classification boundary in the SVM, we
obviously need to have both classes (red and blue) in the data. Since
these two classes correspond to $E(Y_{1}-Y_{0}\mid X)>0$ and $E(Y_{1}-Y_{0}\mid X)<0$,
this simple requirement is formalized as the following sign saturation
condition.

\begin{figure}
\caption{\label{fig:SVM}Lemma \ref{lem: manski lem 1} as a support vector
machine}

\begin{centering}
\includegraphics[clip,scale=0.6]{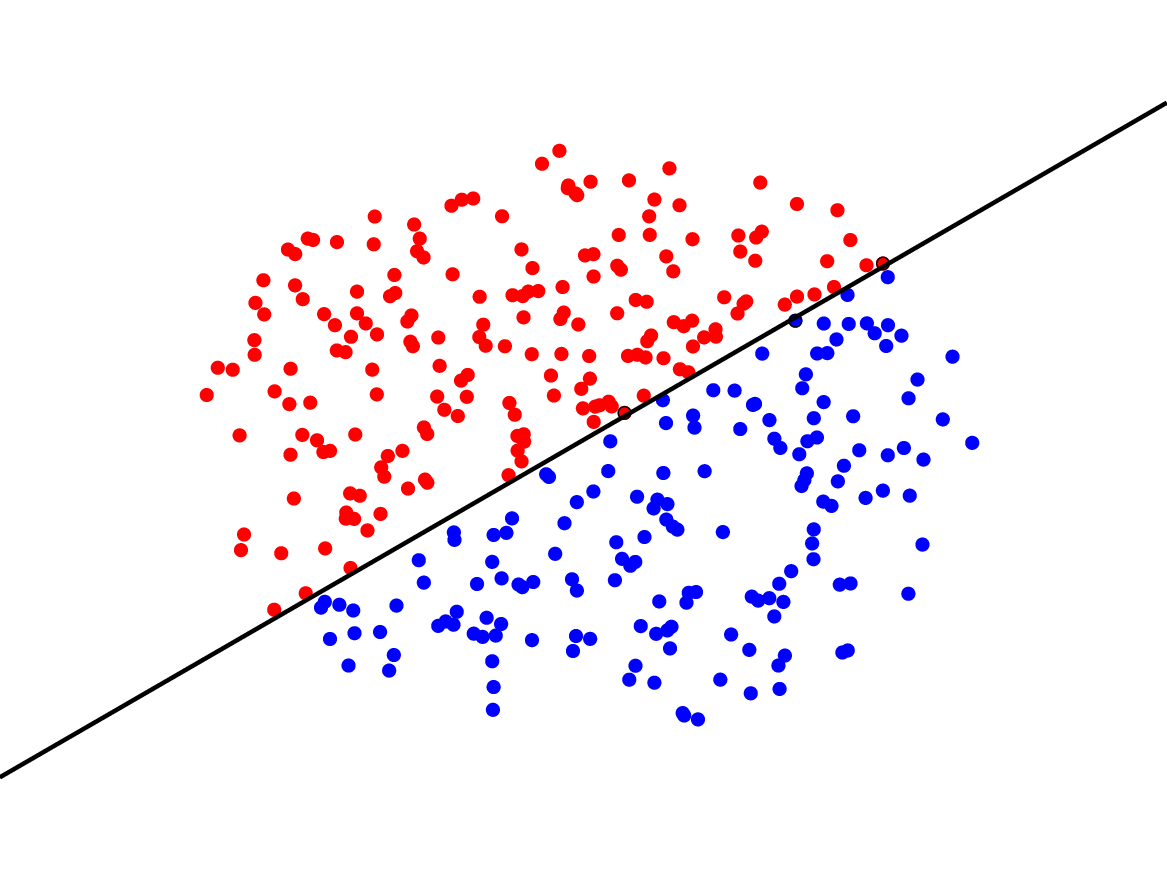}
\par\end{centering}
{\small{}This demonstrates the case with $W\in\RR^{2}$. The red dots
correspond to ${\rm sgn}(W'\beta)=1$ and the blue ones correspond
to ${\rm sgn}(W'\beta)=-1$.}{\small\par}
\end{figure}

\begin{assumption}
\label{assu: sign saturation}Both $P(E(Y_{1}-Y_{0}\mid X)>0)$ and
$P(E(Y_{1}-Y_{0}\mid X)<0)$ are strictly positive. 
\end{assumption}
From Figure \ref{fig:SVM}, it is also clear that in addition to having
both classes, we also need the data points to be ``dense'' around
the hyperplane. For simplicity, we will require that the support of
$W$ is convex and has no-empty interior so the support of $W$ is
dense. This is all we need for uniquely determining the hyperplane
and there is nothing about unbounded supports. We will formalize this
intuition in in the next subsection and extend the results to the
case of multiple discrete regressors in Section \ref{sec: multiple discrete}.
\begin{rem}
Phrasing the problem in terms of support vector machines makes it
easy to see that when all the regressors are discrete, we cannot in
general expect to point identify $\beta$ up to scale. To see this,
notice that by \citet{pakes2024moment}, the sharp identified set
for $\beta$ is $\arg\max_{b}E(Y_{1}-Y_{0}){\rm sgn}(W'b)$, which
can be written as 
\[
\arg\max_{b}E\left[\left|E(Y_{1}-Y_{0}\mid X)\right|\cdot{\rm sgn}(W'\beta)\cdot{\rm sgn}(W'b)\right].
\]
Therefore, if ${\rm sgn}(W'\tilde{\beta})={\rm sgn}(W'\beta)$ with
probability one, then $\tilde{\beta}$ is in the identified set. When
the regressors are discrete, we can in general move $\beta$ a bit
and retain the same ${\rm sgn}(W'\beta)$. As illustrated in Figure
\ref{fig:SVM 2}, the black and gray lines both perfectly classify
the red and blue dots and thus represent two points in the identified
set that are not scalar multiple of each other.
\end{rem}
\begin{figure}
\caption{\label{fig:SVM 2}Discrete regressors as a support vector machine }

\begin{centering}
\includegraphics[clip,scale=0.6]{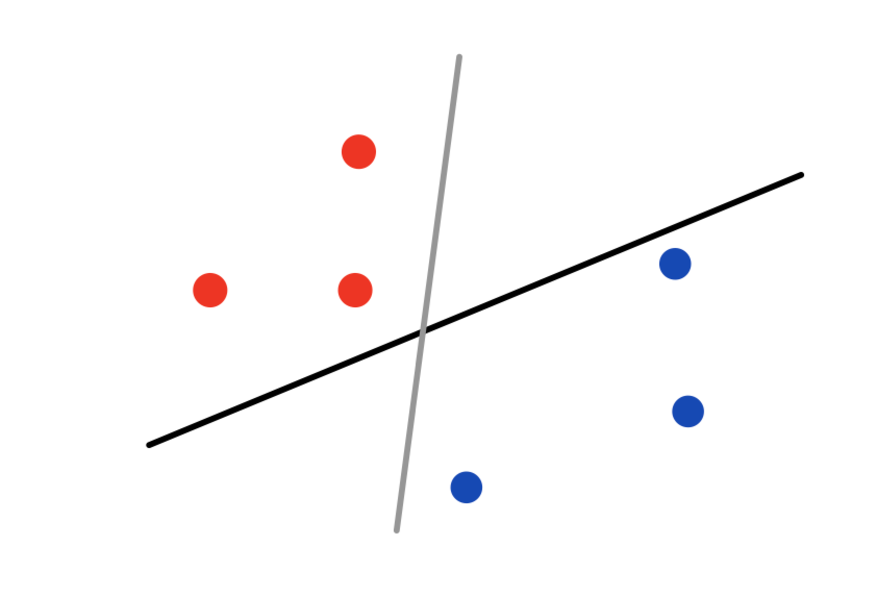}
\par\end{centering}
{\small{}This demonstrates the case in which $W\in\RR^{2}$ is discrete.
The red dots correspond to ${\rm sgn}(W'\beta)=1$ and the blue ones
correspond to ${\rm sgn}(W'\beta)=-1$.}{\small\par}
\end{figure}

\subsection{The sign saturation condition is sufficient for identification}

Following \citet{chamberlain2010binary}, we assume that one of the
regressors is binary in that it is equal to zero at $t=0$ and is
equal to one at $t=1$; in other words, one component of $W=X_{1}-X_{0}$
is one. Thus, without loss of generality, we can partition $W=(Z',1)'\in\RR^{K}$.
If all the regressors are continuous, then the result can still be
applied because the coefficient for the binary regressor is allowed
to be zero, see Corollary \ref{cor: only cont X}. The first main
result of this paper is to show that the sign saturation condition
(together with additional weak assumptions) is enough to guarantee
identification up to scaling. To state the formal result, we first
recall the definition of the support of a random variable (or its
corresponding probability measure): the support of probability measure
$\lambda$ is the smallest closed set $A$ such that $\lambda(A)=1$,
e.g., page 227 of \citet{dudley_2002}.
\begin{thm}
\label{thm: mixed reg}Let Assumption \ref{assu: stationary error}
hold. Partition $W=(Z',1)'\in\RR^{K}$. Suppose that the support of
$Z$ is convex and has non-empty interior. Let Assumption \ref{assu: sign saturation}
hold. 

Then $b=\mu\beta$ for some $\mu>0$ if and only if $R(b)=0$, where
\begin{equation}
R(b)=P\left({\rm sgn}\left(E(Y_{1}-Y_{0}\mid X)\right)\neq{\rm sgn}(W'b)\right).\label{eq: R function}
\end{equation}

Therefore, $\beta$ is identified up to scale. 
\end{thm}
The requirement on the distribution of $Z$ is mild. Let $\Zcal$
be the support of $Z$. First, we do not require $\Zcal$ to be an
unbounded set. Second, the distribution of $Z$ does not have to admit
a density and ``atoms'' (or point masses) are allowed. Theorem \ref{thm: mixed reg}
imposes this via the convexity and non-empty interior of $\Zcal$.
When $\Zcal$ is not convex, the result is still useful: as long as
$\Zcal$ contains a convex subset with non-empty interior, we can
apply the result on this subset by restricting the sample, i.e., the
sign saturation condition holds on the restricted sample. If there
are multiple discrete variables, then we set $Z$ to be the difference
in continuous variables, see Section \ref{sec: multiple discrete}.

By Theorem \ref{thm: mixed reg}, to check whether $b$ is a rescaled
version of the true $\beta$, we only need to check $R(b)$ from the
distribution of the observed data. Notice that the identification
does not assume that $u_{t}$'s are independent across $t$; this
is because Assumption \ref{assu: stationary error} only requires
them to have the same marginal distribution given $(X,\alpha)$. To
give some intuition on the identifying power of sign saturation under
bounded $\Zcal$, we use the following simple example to illustrate
why the criterion function $R(\cdot)$ fails to deliver identification
when Assumption \ref{assu: sign saturation} is not satisfied. 
\begin{example}
\label{exa: time dummy}Suppose that $W=(Z,1)$, where $Z$ is a continuous
random variable with support $[-1,1]$. For simplicity, we consider
$\beta=(\beta_{1},\beta_{2})'$ and $b=(b_{1},b_{2})'$ with $\beta_{1},b_{1}>0$.
The question is whether we can derive $b=\mu\beta$ for some $\mu>0$
from $R(b)=0$. Here, identifying $\beta$ up to scale is equivalent
to identifying $\beta_{2}/\beta_{1}$. After straight-forward calculations,
we can see that 
\[
R(b)=P\left(Z\in\left(-\frac{\beta_{2}}{\beta_{1}},-\frac{b_{2}}{b_{1}}\right)\bigcup\left(-\frac{b_{2}}{b_{1}},-\frac{\beta_{2}}{\beta_{1}}\right)\right).
\]

Suppose that the sign saturation condition fails, say $P(W'\beta>0)=0$,
which implies that $-\beta_{2}/\beta_{1}\geq1$. In this case, the
set $\left(-\frac{\beta_{2}}{\beta_{1}},-\frac{b_{2}}{b_{1}}\right)\bigcup\left(-\frac{b_{2}}{b_{1}},-\frac{\beta_{2}}{\beta_{1}}\right)$
has empty intersection with $[-1,1]$ whenever $-b_{2}/b_{1}>1$.
In other words, if $P(W'\beta>0)=0$, then $R(b)=0$ whenever $-b_{2}/b_{1}>1$;
as a result, we cannot always verify $b_{2}/b_{1}=\beta_{2}/\beta_{1}$
through $R(b)$. 

On the other hand, if the sign saturation condition holds, then we
have $-1<\beta_{2}/\beta_{1}<1$. In this case, the intersection of
$\left(-\frac{\beta_{2}}{\beta_{1}},-\frac{b_{2}}{b_{1}}\right)\bigcup\left(-\frac{b_{2}}{b_{1}},-\frac{\beta_{2}}{\beta_{1}}\right)$
and $[-1,1]$ always has non-empty interior whenever $b_{2}/b_{1}\neq\beta_{2}/\beta_{1}$.
Consequently, $R(b)=0$ if and only $b_{2}/b_{1}=\beta_{2}/\beta_{1}$.
\qed
\end{example}
The following example demonstrates how Theorem 1 gives identification
conditions for $\beta$ with interaction terms. 
\begin{example}[Interaction terms]
Let $D_{t}=\oneb\{t=1\}$ and consider a continuous variable $H_{t}$.
Suppose that we are interested in the treatment effect of $D_{t}$
controlling for $H_{t}$. For a more flexible specification, we include
the interaction $D_{t}H_{t}$. Then $X_{t}=(D_{t},H_{t},D_{t}H_{t})'\in\RR^{3}$
with the corresponding partition $\beta=(\beta_{1},\beta_{2},\beta_{3})'\in\RR^{3}$.
For simplicity, suppose that the support for $(H_{0},H_{1})'$ is
$[-1,1]\times[-1,1]$. Then $Z=(H_{1},D_{1}H_{1})'-(H_{0},D_{0}H_{0})'=(H_{1}-H_{0},H_{1})'$.
One can easily verify that $\Zcal$ is convex and has non-empty interior.\footnote{To see the convexity, simply notice that $Z=\begin{pmatrix}-1 & 1\\
0 & 1
\end{pmatrix}\begin{pmatrix}H_{0}\\
H_{1}
\end{pmatrix}$ and $(H_{0},H_{1})'$ has a convex support.} By Lemma \ref{lem: manski lem 1}, straight-forward calculations
show that the sign saturation condition becomes $|\beta_{2}+\beta_{3}|+|\beta_{3}|>|\beta_{1}|$.
For example, this condition holds when $\beta_{1}=\beta_{3}=0$ and
$\beta_{2}\neq0$. Therefore, to test the null hypothesis of ``zero
effects whatsoever'', it suffices to have a relevant control variable.
\qed
\end{example}
We now discuss an important special case with only continuous covariates.
\begin{cor}
\label{cor: only cont X}Let Assumption \ref{assu: stationary error}
hold. If the interior of the support of $X_{1}-X_{0}$ contains $\boldsymbol{0}_{\dim(X_{t})}$
(zero in $\RR^{\dim(X_{t})})$ and $\beta\neq\boldsymbol{0}_{\dim(X_{t})}$,
then $\beta$ is identified up to scale. 
\end{cor}
A comparison between Example \ref{exa: time dummy} and Corollary
\ref{cor: only cont X} highlights the role of discrete variables
in identification. If all the covariates are continuous and the change
in covariates contains zero in the support, then we can generally
expect identification (up to scale) of $\beta$, regardless of whether
the covariates are bounded. In contrast, if the covariates include
discrete variables (such as the time fixed effect in \citet{chamberlain2010binary}),
we no longer have such a generic guarantee of identification. 

\subsection{\textcolor{blue}{\label{sec: multiple discrete}}Extension to multiple
discrete regressors}

In this subsection, we consider the case of multiple discrete regressors.\footnote{I thank Xavier D'Haultf{\oe}uille for a discussion of this case.}
We partition $X_{t}=(X_{(1),t}',X_{(2),t}')'$, where $X_{(1),t}\in\RR^{K_{1}}$
are discrete regressors and $X_{(2),t}\in\RR^{K_{2}}$ are continuous
regressors. We can partition the corresponding $W=X_{1}-X_{0}=(D',Z')'$
with $D=X_{(1),1}-X_{(1),0}$ and $Z=X_{(2),1}-X_{(2),0}$. Let $\Dcal$
denote the support of $D$. We show that identification still holds
under a conditional version of sign saturation. 
\begin{thm}
\label{thm: discrete cov ID}Let Assumption \ref{assu: stationary error}
hold. Suppose that there exist linearly independent $d_{1},...,d_{K_{1}}\in\Dcal$
with such that for $j\in\{1,...,K_{1}\}$,\\
(1) both $P(E(Y_{1}-Y_{0}\mid X,D=d_{j})>0)$ and $P(E(Y_{1}-Y_{0}\mid X,D=d_{j})<0)$
are strictly positive\\
(2) the support of $Z$ conditional on $D=d_{j}$ is convex and has
non-empty interior.

Then $\beta$ is identified up to scale.
\end{thm}
We note that $|\Dcal|$ can be much larger than $K_{1}$. For example,
if $X_{(1),t}$ represents $K_{1}$ binary variables, then the support
of $X_{(1),t}$ is $\{0,1\}^{K_{1}}$ and $\Dcal=\{-1,0,1\}^{K_{1}}$,
which means $|\Dcal|=3^{K_{1}}$. Theorem \ref{thm: discrete cov ID}
is convenient in that it does not require the sign saturation to hold
conditional on $D=d$ for \textbf{every} $d\in\Dcal$. It suffices
to require this for $K_{1}$ linearly independent elements of $\Dcal$.
If $K_{1}=1$ ($X_{(1),t}$ is a time dummy), then this requirement
reduces to Assumption \ref{assu: sign saturation}. Theorem \ref{thm: discrete cov ID}
also allows for more complicated cases as demonstrated in the following
example.
\begin{example}[Time dummy and categorical variables]
Suppose that $R_{t}$ is a categorical variable taking values in
$\{1,...,K_{1}\}$. Consider $X_{(1),t}=(\oneb\{t=1\},\oneb\{R_{t}=1\},...,\oneb\{R_{t}=K_{1}-1\})'\in\RR^{K_{1}}$,
i.e., a time dummy and $K_{1}-1$ dummy variables for $R_{t}$. Then
we can write the support of $X_{(1),t}\in\RR^{K_{1}}$ as $\{\oneb\{t=1\}\}\times\{e_{1},...,e_{K_{1}-1},\boldsymbol{0}_{K_{1}-1}\}$,
where $e_{j}$ is the $j$-th column of the $(K_{1}-1)\times(K_{1}-1)$
identity matrix and $\boldsymbol{0}_{K_{1}-1}=(0,...,0)'\in\RR^{K_{1}-1}$.
Let us assume that the support of $(R_{0},R_{1})'$ contains $(K_{1},j)$
for any $j\in\{1,...,K_{1}\}$. Then $\Dcal$ contains $\{1\}\times\{e_{1},...,e_{K_{1}-1},\boldsymbol{0}_{K_{1}-1}\}$.
In other words, $\Dcal$ contains all the columns of the following
matrix: 
\[
\begin{pmatrix}1 & 1 & 1 & \cdots & 1\\
0 & 1 & 0 & \cdots & 0\\
0 & 0 & 1 & \cdots & 0\\
\vdots & \vdots & \vdots & \ddots & \vdots\\
0 & 0 & 0 & \cdots & 1
\end{pmatrix}\in\RR^{K_{1}\times K_{1}}.
\]

We can easily see that the determinant of this upper triangular matrix
is equal to one. Therefore, the above matrix has full rank, which
means that $\Dcal$ contains $K_{1}$ linearly independent elements.
Then $\beta$ is identified up to scale if for every $j\in\{1,...,K_{1}\}$,
$P(E(Y_{1}-Y_{0}\mid Z,R_{0}=K_{1},R_{1}=j)>0)$ and $P(E(Y_{1}-Y_{0}\mid Z,R_{0}=K_{1},R_{1}=j)<0)$
are strictly positive and the support of $Z$ conditional on $(R_{0},R_{1})=(K_{1},j)$
is convex and has non-empty interior.\qed
\end{example}

\subsection{\label{subsec: necessary cond}Is the sign saturation condition is
also necessary?}

It turns out that in the many situations, sign saturation characterizes
identifiability. To illustrate this point, we consider the setting
studied by \citet{chamberlain2010binary} and assume that $u_{t}$
is independent of $(X,\alpha)$ and is from a known distribution.\footnote{To establish necessity, it is enough to consider the case in which
$u_{t}$ is independent of $(X,\alpha)$. This is similar to showing
minimax lower bounds. If identification cannot be guaranteed even
with the additional assumption of independence between $u$ and $(X,\alpha)$,
then it is definitely not guaranteed without this assumption.} We show that without sign saturation, identification fails at every
point unless the distribution of $u_{t}$ is from a special class. 

Suppose that in time period $t\in\{0,1\}$, we observe $(Y_{t},X_{t})$,
where $X_{t}=(X_{t,1}',X_{t,2})'$ with $X_{t,1}\in\RR^{K-1}$ and
$X_{t,2}=\oneb\{t=1\}$. Suppose that $u_{t}$ is independent of $(X,\alpha)$
and has c.d.f $F(\cdot)$. Then from the data, the distribution of
$Y$ given $(X,\alpha)$ is determined by the vector
\begin{align}
L(X;\beta,\alpha) & =\begin{pmatrix}F(X_{0,1}'\beta_{1}+\alpha)\\
F(X_{1,1}'\beta_{1}+\beta_{2}+\alpha)\\
F(X_{0,1}'\beta_{1}+\alpha)\cdot F(X_{1,1}'\beta_{1}+\beta_{2}+\alpha)
\end{pmatrix}\nonumber \\
 & =\begin{pmatrix}F(X_{0,1}'\beta_{1}+\alpha)\\
F(X_{0,1}'\beta_{1}+\alpha+W'\beta)\\
F(X_{0,1}'\beta_{1}+\alpha)\cdot F(X_{0,1}'\beta_{1}+\alpha+W'\beta)
\end{pmatrix},\label{eq: L fun}
\end{align}
where $\beta=(\beta_{1}',\beta_{2})$ is partitioned as $\beta_{1}\in\RR^{K-1}$
and $\beta_{2}\in\RR$, and $W:=X_{1}-X_{0}=(Z',1)'$ with $Z=X_{1,1}-X_{0,1}$.
As pointed out in \citet{chamberlain2010binary}, here we should aim
for identification of $\beta$, not just up to scaling, because ``our
scale normalization is built in to the given specification for the
$u_{t}$ distribution''. Let $\Zcal$ be the support of $Z$, i.e.,
again the smallest closed set with the full probability mass. Since
$W=(Z',1)'$, the support of $W$ is $\Wcal=\Zcal\times\{1\}$.

The special class of functions is best described in terms of a transformation
of $F(\cdot)$. Define the function $G(\cdot)=\ln\frac{F(\cdot)}{1-F(\cdot)}$.
Let $\dot{G}(\cdot)$ denote the derivative of $G(\cdot)$. If $F(\cdot)$
is the logistic function, then $G(\cdot)$ is an affine function and
$\dot{G}(\cdot)$ is a constant function. It turns out that if we
are willing to rule out cases with $\dot{G}(\cdot)$ being a periodic
function,\footnote{A function $h(\cdot)$ on $\RR$ is a periodic function if there exists
$c\neq0$ such that $h(c+t)=h(t)$ for any $t\in\RR$. In this case,
$c$ is said to be a period of $h(\cdot)$. Notice that we assume
that a period has to be non-zero; otherwise, every function would
be a periodic function with a period being zero.} then sign saturation is a necessary condition for identification.
This special class includes the logistic function, whose $\dot{G}(\cdot)$
is a constant function and is clearly periodic. (Any real number is
a period of a constant function.)

We now introduce the notations for describing identification. Let
$\Pi$ denote the set of probability measures on $\RR$. The distribution
of $Y\mid X=x$ is determined by $\int L(x;\beta,\alpha)d\pi_{x}(\alpha)$
for some $\pi_{x}\in\Pi$. Here, $\pi_{x}$ denotes the distribution
$\alpha\mid X=x$. We now recall the notation of identification failure
discussed in \citet{chamberlain2010binary}.
\begin{defn}[Identification failure at $\beta$]
\label{def: chamberlain def}We say that the identification fails
at $\beta$ if there exists $b\neq\beta$ such that for any $x$ in
the support, there exist $\pi_{x},\tilde{\pi}_{x}\in\Pi$ depending
on $x$ such that $\int L(x;\beta,\alpha)d\pi_{x}(\alpha)=\int L(x;b,\alpha)d\tilde{\pi}_{x}(\alpha)$.
\end{defn}
This definition says that there exist two ``mixing distributions''
$\pi_{x}$ and $\tilde{\pi}_{x}$ in $\Gcal$ such that the two mixtures
(representing the distribution $Y\mid X=x$) $\int L(x;\beta,\alpha)d\pi_{x}(\alpha)$
and $\int L(x;b,\alpha)d\tilde{\pi}_{x}(\alpha)$ are identical. Following
\citet{chamberlain2010binary}, we can state identification in terms
of convex hulls of $L(x;\beta,\alpha)$. The identification fails
at $\beta$ if there exists $b\neq\beta$ such that ${\rm conv}\{L(x;\beta,\alpha):\alpha\in\RR\}\bigcap{\rm conv}\{L(x;b,\alpha):\alpha\in\RR\}\neq\emptyset$
for any $x$, where ${\rm conv}$ denotes the convex hull. 

Notice that $X_{0,1}'\beta_{1}$ can be absorbed by $\alpha$ since
the distribution of $\alpha$ is allowed to have arbitrary dependence
on $x$. Without loss of generality, we view the entire $X_{0,1}'\beta_{1}+\alpha$
as the fixed effects to simplify notations. Then we can restate Definition
\ref{def: chamberlain def} as follows.
\begin{defn}
\label{def: chamb alter}We say that the identification fails at $\beta$
if there exists $b\neq\beta$ such that $\Acal(w'\beta)\bigcap\Acal(w'b)\neq\emptyset$
for any $w\in\Wcal$, where for any $t\in\RR$, $\Acal(t)={\rm conv}\{p(t,\alpha):\alpha\in\RR\}$
and 
\[
p(t,\alpha):=\begin{pmatrix}F(\alpha)\\
F(t+\alpha)\\
F(\alpha)\cdot F(t+\alpha)
\end{pmatrix}.
\]
\end{defn}
Perhaps the simplest way to see the equivalence between the two definitions
is to notice that ${\rm conv}\{L(x;\beta,\alpha):\alpha\in\RR\}=\Acal(w'\beta)$,
where $w=x_{1}-x_{0}$. We now define the parameter space in which
the sign saturation fails: $\Bcal_{+}=\{\beta\in\RR^{K}:\ w'\beta>0\ \forall w\in\Wcal\}$.
The main result for the necessity is the following. 
\begin{thm}
\label{thm: neccessity part 1}Suppose that $\Zcal$ is bounded. Suppose
that $F(\cdot)$ is continuously differentiable with support on $\RR$.
If $\dot{G}(\cdot)$ is not a periodic function, then the identification
fails at every point in $\Bcal_{+}$. 
\end{thm}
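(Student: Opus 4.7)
The plan is to exhibit, for an arbitrary $\beta \in \Bcal_+$, some $b \ne \beta$ and, for each $w \in \Wcal$, mixing measures $\pi_w, \tilde\pi_w$ satisfying $\int p(w'\beta,\alpha)\,d\pi_w(\alpha) = \int p(w'b,\alpha)\,d\tilde\pi_w(\alpha)$. Because $\Bcal_+$ is open and $\Wcal = \Zcal \times \{1\}$ is compact (as $\Zcal$ is bounded), the image $T := \{w'\beta : w \in \Wcal\}$ is a compact subset of $(0,\infty)$. I would take $b := \beta + \epsilon h$ for a fixed nonzero direction $h$ and a small $\epsilon > 0$, so that $w'b$ stays close to $w'\beta$ uniformly in $w \in \Wcal$. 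The task then reduces to showing that $\Acal(s) \cap \Acal(s') \ne \emptyset$ for every $s \in T$ whenever $|s - s'|$ is sufficiently small, uniformly in $s$.

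The key technical input is the dichotomy that, for $s > 0$, the curve $\alpha \mapsto p(s,\alpha)$ lies in an affine plane of $\RR^3$ if and only if $s$ is a period of $\dot G$. Changing variables to $u = F(\alpha) \in (0,1)$ and setting $\phi_s(u) := F(F^{-1}(u) + s)$, the curve becomes $(u, \phi_s(u), u\phi_s(u))$, and planarity reduces to $a u + b\phi_s(u) + c u \phi_s(u) = 0$ (the constant term vanishes by letting $u \to 0$, since $\phi_s(0) = 0$). Solving for $\phi_s$ under the normalization $a + b + c = 0$ (from $u \to 1$) forces a M\"obius map $\phi_s(u) = Ru/(1 + (R-1)u)$ for some $R > 0$. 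A short computation then shows this is equivalent to $G(\alpha + s) - G(\alpha) \equiv \ln R$; differentiating in $\alpha$ yields $\dot G(\alpha + s) = \dot G(\alpha)$, so $s$ is a period of $\dot G$. The non-periodicity hypothesis therefore rules out planarity for every $s > 0$, so $\Acal(s)$ has non-empty interior in $\RR^3$ for every $s \in T$.

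The geometric conclusion follows from two continuity facts: (a) by compactness of $T$ and continuity of $s \mapsto \Acal(s)$, the inner radius $r(s) := \sup\{r \ge 0 : B(x,r) \subset \Acal(s) \text{ for some } x\}$ is positive on $T$ and admits a uniform lower bound $\tau > 0$; and (b) using that $F$ is $C^1$, the Hausdorff distance satisfies $d_H(\Acal(s), \Acal(s')) \le C|s - s'|$ for some constant $C$. Then for $|s - s'| < \tau/C$, choosing $x$ with $B(x, \tau) \subset \Acal(s)$ and invoking $d_H(\Acal(s), \Acal(s')) < \tau$ produces a point $y \in \Acal(s') \cap B(x, \tau) \subset \Acal(s) \cap \Acal(s')$. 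Taking $\epsilon$ small enough that $C \epsilon \sup_{w \in \Wcal} |w'h| < \tau$ gives $\Acal(w'\beta) \cap \Acal(w'b) \ne \emptyset$ for every $w \in \Wcal$, establishing identification failure at $\beta$. The main obstacle will be the planarity--M\"obius--periodicity chain in the second paragraph, together with the functional-analytic details needed to ensure that $r(s)$ and the Lipschitz constant are uniformly controlled over $T$, in particular handling cases where $F'$ may be unbounded.
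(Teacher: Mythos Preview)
Your approach is correct and genuinely different from the paper's. The paper argues by contradiction: assuming that for every $\delta>0$ the hulls $\Acal(w'\beta)$ and $\Acal(w'(\beta+\delta e_K))$ separate at some $w_\delta$, it applies a separating hyperplane, then through a chain of lemmas (Lemmas \ref{lem: farkas}--\ref{lem: ID part 2}) pins down the exact form of the separating vector $v=(v_1,1-v_1,-1)'$ and translates separation into the functional inequality $\inf_\alpha[G(t+\delta+\alpha)-G(\alpha)]\ge\sup_\xi[G(t+\xi)-G(\xi)]$ for some $t\in T$; a limiting argument as $\delta\downarrow 0$ then forces $\dot G$ to have a period in $T$. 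Your route is the direct geometric one: non-periodicity of $\dot G$ rules out planarity of every curve $\alpha\mapsto p(s,\alpha)$, so every $\Acal(s)$ is full-dimensional, and Hausdorff continuity together with compactness of $T$ gives a uniform inner radius and hence nonempty intersections for small perturbations. Your planarity--M\"obius--periodicity equivalence is exactly right and is arguably the conceptual heart of the matter; the paper arrives at the same M\"obius structure implicitly through the $v=(v_1,1-v_1,-1)'$ calculation in Lemma \ref{lem: ID part 1}. What the paper's approach buys is that the explicit form of the separating $v$ is reused verbatim in Lemma \ref{lem: ID part 4} for the converse direction (Theorem \ref{thm: ID period fun}); your argument is cleaner for this theorem alone but does not automatically hand you that converse.

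One remark on your stated obstacle: the Lipschitz bound $d_H(\Acal(s),\Acal(s'))\le C|s-s'|$ indeed needs $F'$ bounded, but you do not need Lipschitz. Any continuous c.d.f.\ is uniformly continuous on $\RR$ (it is monotone with finite limits at $\pm\infty$), so $\sup_\alpha|F(s+\alpha)-F(s'+\alpha)|\to 0$ as $s'\to s$, which already gives Hausdorff continuity of $s\mapsto\overline{\Acal(s)}$. Since the inradius is $1$-Lipschitz in the Hausdorff metric (via support functions), $r(s)$ is continuous on the compact set $T$ and attains a positive minimum $\tau$. Then uniform continuity of $F$ yields a single $\rho>0$ with $d_H(\overline{\Acal(s)},\overline{\Acal(s')})<\tau$ whenever $|s-s'|<\rho$, and you conclude as you outlined. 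The closure-versus-hull issue is harmless: the point $y\in\overline{\Acal(s')}$ you produce lies in the open ball $B(x,\tau)\subset\mathrm{int}\,\Acal(s)$, so any nearby $y'\in\Acal(s')$ also lies in $\Acal(s)$.
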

We compare this with Theorem 1 of \citet{chamberlain2010binary},
which states that if $F(\cdot)$ is outside a special class, identification
fails in an open neighborhood. Theorem \ref{thm: neccessity part 1}
complements this result by showing that if $F(\cdot)$ is outside
a special class, identification fails at every point, not just in
an open neighborhood. This is also why the proof of Theorem \ref{thm: neccessity part 1}
follows a very different strategy from the proof of Theorem 1 of \citet{chamberlain2010binary};
the latter essentially only needs to find one point at which the identification
fails whereas we need to show that identification fails universally
in $\Bcal_{+}$. 

The special class in \citet{chamberlain2010binary} only includes
logistic functions, but here our special class is larger and includes
all functions with periodic $\dot{G}(\cdot)$. This enlargement of
the special class cannot be avoided because we show that there are
instances of non-logistic $F(\cdot)$ for which identification holds,
e.g., $F(t)=[1+\exp(-2t-\sin(t))]^{-1}$. We can push our arguments
further and show that the identification under periodic $\dot{G}(\cdot)$
is ``robust'' and is based on a set with strictly positive probability
mass. 
\begin{thm}
\label{thm: ID period fun}Suppose that $\dot{G}(\cdot)$ is a continuous
periodic function that is non-constant. Let $\beta=(\beta_{1}',\beta_{2})'\in\Bcal_{+}$.
Then\\
(1) there exists a minimal positive period $\eta>0$ for $\dot{G}(\cdot)$,
i.e., the set $\{a>0:\ \dot{G}(a+t)=\dot{G}(t)\ \forall t\in\RR\}$
has a smallest element. \\
(2) If $(z'\beta_{1}+\beta_{2})/\eta$ is an integer for some $z$
in the interior of $\Zcal$, then $\beta$ is identified; moreover,
for any $b\in\Bcal_{+}$ with $b\neq\beta$, $P\left(\Acal(W'\beta)\bigcap\Acal(W'b)=\emptyset\right)>0$,
where $W=(Z',1)'$. \\
(3) If $\Zcal$ is compact and has non-empty interior, then there
exists an open ball $D\subset\Bcal_{+}$ such that every point in
$D$ is identified. 
\end{thm}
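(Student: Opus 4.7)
For Part (1), I view the set of periods
$\mathcal{P} := \{a \in \RR : \dot G(\alpha + a) = \dot G(\alpha) \text{ for all } \alpha \in \RR\}$
as an additive subgroup of $\RR$. Continuity of $\dot G$ makes $\mathcal{P}$ closed. By the standard classification of closed additive subgroups of the line, $\mathcal{P}$ equals $\{0\}$, all of $\RR$, or $c\ZZ$ for some $c > 0$. Periodicity of $\dot G$ rules out $\{0\}$, while non-constancy combined with continuity rules out $\RR$ (a continuous function with every real as a period is constant). Hence $\mathcal{P} = \eta\ZZ$ with $\eta > 0$, and $\eta$ is the minimal positive period.

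For Part (2), the cornerstone is the identity obtained by integrating $\dot G(\alpha + \eta) = \dot G(\alpha)$: for every integer $n$ and every $\alpha$,
\[
G(\alpha + n\eta) - G(\alpha) = n\gamma, \qquad \gamma := G(\eta) - G(0),
\]
or equivalently the odds-ratio identity $F(\alpha + n\eta)/(1-F(\alpha+n\eta)) = e^{n\gamma}\cdot F(\alpha)/(1-F(\alpha))$. This is exactly the logit identity at $t = n\eta$, so the generating curve $\{p(n\eta,\alpha):\alpha\in\RR\}$ lies on the specific two-dimensional surface
\[
S_n := \{(p_1,p_2,p_1p_2)\in[0,1]^3: p_2(1-p_1) = e^{n\gamma}p_1(1-p_2)\},
\]
and $\Acal(n\eta)$ inherits the structure of a ``logit-like slice''. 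Parameterizing the curve by $u = F(\alpha)$ so that $p_2 = h_t(u) := F(t + F^{-1}(u))$, the map $h_{n\eta}$ is the explicit Mobius transformation $u \mapsto e^{n\gamma}u/(1+(e^{n\gamma}-1)u)$.

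For the identification claim, fix $b \in \Bcal_+$ with $b \neq \beta$ and pick $z_0$ in the interior of $\Zcal$ with $w_0'\beta = n\eta$, where $w_0 := (z_0',1)'$. The goal is an open neighborhood $U \subset \Wcal$ of $w_0$ on which $\Acal(w'\beta) \cap \Acal(w'b) = \emptyset$; since $z_0$ is interior, any such $U$ carries positive probability, giving the desired conclusion. Along perturbations $w = w_0 + \delta$, the two scalars $w'\beta - n\eta$ and $w'b - w_0'b$ are linearly independent affine functions of $\delta$ (because $\beta \neq b$ and $\Zcal$ has non-empty interior), so $w'b$ generically avoids $\eta\ZZ$ while $w'\beta$ moves off $n\eta$ at a different rate. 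If the separation failed on every such $U$, Caratheodory would produce finitely supported representations $\sum_i \lambda_i p(w'\beta,\alpha_i) = \sum_j \mu_j p(w'b,\alpha_j')$ for a neighborhood of $w$, and I intend to differentiate these identities with respect to $w$ to extract functional constraints on $F$.

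The main obstacle is to convert the persistence of these Caratheodory identities into a contradiction with the minimal periodicity of $\eta$. The crux is that at $t = n\eta$ the curve $h_{n\eta}$ is Mobius, while for $t = w_0'b \notin \eta\ZZ$ the curve $h_{t}$ is genuinely non-Mobius (since $\dot G$ is non-constant). Propagating a matching between the two convex hulls through an open set of perturbations should force a new functional relation on $\dot G$, which I expect to be incompatible with $\eta$ being the smallest positive period; concretely, it should produce either a strictly smaller period (contradicting Part (1)) or force $\dot G$ to be constant (contradicting the hypothesis). Making this rigorous---identifying the correct expression to differentiate and carefully tracking how the surfaces $S_n$ and the non-logit analogue interact under perturbation---is where the technical work will lie.
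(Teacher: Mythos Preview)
Your argument for Part~(1) via the classification of closed additive subgroups of $\RR$ is fine and is a clean alternative to the paper's direct argument (Lemma~\ref{lem: ID part 3}).

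Part~(2), however, has a genuine gap. First, a technical slip: your claim that $w'\beta-n\eta$ and $w'b-w_0'b$ are ``linearly independent affine functions of $\delta$ because $\beta\neq b$'' is false. Writing $\delta=(\delta_z',0)'$, these reduce to $\delta_z'\beta_1$ and $\delta_z' b_1$, which fail to be independent whenever $\beta_1$ and $b_1$ are proportional (in particular when $\beta_1=b_1$ but $\beta_2\neq b_2$). The paper handles exactly this through a case analysis (its Step~1) to locate $z_*\in\Zcal^\circ$ with $(z_*',1)\beta\neq(z_*',1)b$ and one of the two an integer multiple of $\eta$; you cannot skip this.

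More seriously, your core mechanism---``if separation fails on every neighborhood, differentiate the Carath\'eodory identities in $w$ to force a new period''---is not a proof and is unlikely to become one along the lines you sketch. The negation of ``separation holds on some open $U$'' only gives a \emph{sequence} $w_n\to w_0$ on which $\Acal(w_n'\beta)\cap\Acal(w_n'b)\neq\emptyset$, not matching identities on a whole neighborhood, so there is nothing to differentiate. Even if you had a one-parameter family, Carath\'eodory representations are neither unique nor a priori smooth in $w$, so ``differentiating'' them is not legitimate without substantial additional work you have not indicated. Finally, your geometric heuristic (that at $t=n\eta$ the curve is M\"obius while at $t=w_0'b\notin\eta\ZZ$ it is not) is correct as an observation, but you never convert it into an actual separation statement for the convex hulls $\Acal(\cdot)$.

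By contrast, the paper's route is entirely constructive and avoids all of this. It builds an explicit separating hyperplane $v=(v_1,1-v_1,-1)'$ (Lemma~\ref{lem: ID part 4}), reducing $\Acal(s)\cap\Acal(t)=\emptyset$ to the scalar inequality $\inf_\alpha[G(s+\alpha)-G(\alpha)]>\sup_\xi[G(t+\xi)-G(\xi)]>0$. It then shows (Lemma~\ref{lem: ID part 5}) that when $s>t>0$ and one of $s/\eta,t/\eta$ is an integer, a slightly weakened version of this inequality holds with slack $\varepsilon$, using only the identity $G(\alpha+n\eta)=G(\alpha)+n\gamma$ you already derived together with uniform continuity of $G$ on compacts. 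The $\varepsilon$-slack is exactly what yields an open neighborhood $\Zcal_\tau\subset\Zcal^\circ$ on which separation persists, and Lemma~\ref{lem: topology fact 1} converts that into positive probability. If you want to salvage your plan, the right move is not a contradiction argument but to exhibit a separating functional directly; the paper's choice $v=(v_1,1-v_1,-1)'$ and the reduction to the $G$-inequality are the missing idea.
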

By Theorem \ref{thm: ID period fun}, as long as $\dot{G}(\cdot)$
is periodic and $z'\beta_{1}+\beta_{2}$ is in $\eta\cdot\NN$ for
some interior point $z$ ($\NN$ denotes the set of positive integers),
we still have identification. This complements Theorem 1 of \citet{chamberlain2010binary}
in an interesting way. For non-logistic $F(\cdot)$, it is true that
the identification fails at every point in an open set. On the other
hand, we show that the identification also holds at every point in
another open set for periodic $\dot{G}(\cdot)$. It is worth noting
that in Theorem \ref{thm: ID period fun}, the identification of $\beta$
is robust in the sense that it is not based on a small number of ``unimportant
points'' in $\Wcal$; the points in $\Wcal$ that allow us to identify
$\beta$ have strictly positive probability mass. Therefore, the truly
hopeless cases for identification are those with non-periodic $\dot{G}(\cdot)$.
Finally, for the case of compact and convex $\Zcal$ with non-empty
interior, we summarize our results and the existing literature in
Table \ref{tab: ID and sign sat}. 

\begin{table}[h]
\caption{\label{tab: ID and sign sat}Identification and sign saturation}

\bigskip{}

For bounded and convex $\Zcal$ with non-empty interior:

{\centering

\begin{tabular}{c|cc}
\multicolumn{3}{c}{}\tabularnewline
 & Sign saturation holds & Sign saturation does not hold\tabularnewline
\hline 
 &  & \tabularnewline
$\dot{G}(\cdot)$ non-periodic & \begin{tabular}{@{}c@{}}ID holds at every point\\ (Theorem \ref{thm: mixed reg})\end{tabular} & \begin{tabular}{@{}c@{}}ID fails at every point\\ (Theorem \ref{thm: neccessity part 1})\end{tabular}\tabularnewline
 &  & \tabularnewline
 &  & \tabularnewline
\begin{tabular}{@{}c@{}}$\dot{G}(\cdot)$ periodic \\ but non-constant\end{tabular} & \begin{tabular}{@{}c@{}}ID holds at every point\\ (Theorem \ref{thm: mixed reg})\end{tabular} & \begin{tabular}{@{}c@{}}ID holds in an open set \\(Theorem \ref{thm: ID period fun})\\ but fails in another open set \\(\cite{chamberlain2010binary}) \end{tabular}\tabularnewline
 &  & \tabularnewline
 &  & \tabularnewline
\begin{tabular}{@{}c@{}}$\dot{G}(\cdot)$ constant \\(logistic $F$)\end{tabular} & \begin{tabular}{@{}c@{}}ID holds at every point\\ (\cite{chamberlain1980analysis}\\ \cite{DaveziesID2021})\end{tabular} & \begin{tabular}{@{}c@{}}ID holds at every point\\ (\cite{chamberlain1980analysis}\\ \cite{DaveziesID2021})\end{tabular}\tabularnewline
 &  & \tabularnewline
\hline 
\multicolumn{3}{c}{}\tabularnewline
\end{tabular}

}
\end{table}

Exploiting the periodicity of $\dot{G}(\cdot)$ can be seen as a generalization
of the conditional maximum likelihood estimator for logistic distributions.
To see this, we notice that 
\[
\frac{P(Y_{1}=1,Y_{0}=0\mid X,\alpha)}{P(Y_{1}=0,Y_{1}=1\mid X,\alpha)}=\exp\left[G(W'\beta+\alpha)-G(\alpha)\right].
\]
Under logistic errors, $\dot{G}(\cdot)$ is a constant and thus $G(W'\beta+\alpha)-G(\alpha)$
does not depend on $\alpha$ for any $W$. If $\dot{G}(\cdot)$ is
a periodic function with the smallest positive period $\eta>0$, then
$G(W'\beta+\alpha)-G(\alpha)$ also does not depend on $\alpha$ whenever
$W'\beta/\eta$ is an integer. Since $W=(Z',1)'$, we only need to
have an interior point $z$ such that $(z',1)'\beta/\eta$ is an integer. 
\begin{rem}
As we have seen, the sign saturation condition guarantees the identification
of $\beta$ up to scale. Does it also guarantee the parametric rate
for estimation? The answer is no because Theorem 2 of \citet{chamberlain2010binary}
shows that the information bound is always zero outside the logistic
case, regardless of the identification status. This highlights a key
difference between identification and the parametric rate in estimation.
The question of identification depends on the sign saturation condition,
whereas the root-$n$ estimability depends on the semiparametric efficiency
calculation or the functional projection argument developed by \citet{bonhomme2012functional}.
\end{rem}

\section{Implications on marginal effects}

The identification of $\beta$ is directly related to the identification
of marginal effects in panel models with binary responses. The bounds
on the marginal effects often assume point identification of $\beta$,
see e.g., \citet*{DaveziesID2021}, \citet{liu2105.12891} and Theorem
6 of \citet*{chernozhukov2013average}. 

Here, we explore an important link between $\beta$ and the treatment
effects through the sign of components of $\beta$. Let us consider
the setting of Section \ref{subsec: necessary cond}: (1) the errors
$u_{t}$'s are i.i.d across $t$ with distribution $F(\cdot)$ and
are independent of $(X,\alpha)$ and (2) $X_{2,t}$ is binary with
$X_{2,t}=\oneb\{t=1\}$. We can interpret $X_{2,t}$ as a binary treatment;
in period $t=0$, no one is treated and in period $t=1$, everyone
is treated. For $\beta=(\beta_{1}',\beta_{2})'$, the sign of $\beta_{2}$
corresponds to the sign of common measures of treatment effects. For
example, the average partial effect effect at $X_{1,t}=z$ is 
\[
\Delta_{APE}(z)=P(z'\beta_{1}+\beta_{2}+\alpha\geq u_{t})-P(z'\beta_{1}+\alpha\geq u_{t}),
\]
see e.g., \citet{chernozhukov2013average}.

Although the magnitude of average partial effect is often not point
identified (see e.g., \citet{honore2006bounds} and \citet{chernozhukov2013average}),
there is hope that the sign of the effect is point identified, which
means that the identified set contains only positive numbers, only
negative numbers or only zero. Under Assumption \ref{assu: stationary error},
the support of $u_{t}$ is $\RR$, which means that ${\rm sgn}(\Delta_{APE}(z))={\rm sgn}(\beta_{2})$
for any $z$. Hence, identifying ${\rm sgn}(\Delta_{APE}(z))$ is
equivalent to identifying ${\rm sgn}(\beta_{2})$. In the literature,
there are also terms such as average treatment effects or ceteris
paribus effects, e.g., \citet{hoderlein2012nonparametric} and \citet{chernozhukov2015nonparametric}.
Under the assumptions of a linear index, the sign of these other measures
of treatment effects is also the sign of $\beta_{2}$. 
\begin{rem}
There are alternative ways of defining treatment effects but due to
the single-index nature of the model, the sign of other notions of
treatment effects is still related to ${\rm sgn}(\beta_{2})$. For
example, consider the individual effect $\oneb\{X_{1,t}\beta_{1}+\beta_{2}+\alpha\geq u_{t}\}-\oneb\{X_{1,t}\beta_{1}+\alpha\geq u_{t}\}$.
Clearly, this effect is negative with zero probability if and only
if $\beta_{2}>0$. 
\end{rem}
We now show that the sign saturation condition is necessary for this
purpose. To formally state the result, we rephrase Definition \ref{def: chamberlain def}. 
\begin{defn}
We say that $\beta$ and $b$ are observationally equivalent if for
any $x$ in the support, there exist $\pi_{x},\tilde{\pi}_{x}\in\Pi$
depending on $x$ such that $\int L(x;\beta,\alpha)d\pi_{x}(\alpha)=\int L(x;b,\alpha)d\tilde{\pi}_{x}(\alpha)$,
where $L$ is defined in (\ref{eq: L fun}).
\end{defn}
Recall that $\Bcal_{+}=\{\beta\in\RR^{K}:\ w'\beta>0\ \forall w\in\Wcal\}$
is the set of parameter values that do not satisfy the sign saturation
condition ($E(Y_{1}-Y_{0}\mid X)$ always positive). Consider two
subsets $\Bcal_{+,0}=\{\beta=(\beta_{1}',\beta_{2})'\in\Bcal_{+}:\ \beta_{2}=0\}$
and $\Bcal_{+,+}=\{\beta=(\beta_{1}',\beta_{2})'\in\Bcal_{+}:\ \beta_{2}>0\}$,
which denote the set of parameter values that imply zero effects and
positive effects of $X_{2,t}$, respectively. The following result
states the identification failure of ${\rm sgn}(\beta_{2})$. 
\begin{thm}
\label{thm: neccessity sign}Suppose that $\Zcal$ is bounded. Suppose
that $F(\cdot)$ is continuously differentiable such that $\dot{G}(\cdot)$
is not a periodic function. Then every point in $\Bcal_{+,0}$ is
observationally equivalent to a point in $\Bcal_{+,+}$. 
\end{thm}
By Theorem \ref{thm: mixed reg}, the sign saturation condition guarantees
point identification of $\beta$ up to scale and thus point identification
of ${\rm sgn}(\beta_{2})$. By Theorem \ref{thm: neccessity sign},
when the sign saturation condition fails, we cannot distinguish between
$\Delta_{APE}(z)=0$ and $\Delta_{APE}(z)>0$ unless $\dot{G}(\cdot)$
is periodic. Therefore, unless $\dot{G}(\cdot)$ is periodic, the
sign saturation condition is sufficient and necessary to guarantee
point identification of the sign of the marginal effects of $X_{2,t}$.
{} 

\section{\label{sec: test}Assessing the sign saturation condition}

We have seen that the sign saturation condition is sufficient and
necessary for the identification. The conditional mean function $\phi(X)=E(Y_{1}-Y_{0}\mid X)$
is identified. Therefore, ideally the sign saturation condition can
be checked in data. Here, we provide a simple check that does not
involve nonparametric estimation of $\phi$.

By Lemma \ref{lem: manski lem 1}, $P(\phi(X)\leq0)=1$ if and only
if $P(W'\beta\leq0)=1$. Hence, we define $\rho(q)=E\oneb\{W'q\geq0\}(Y_{1}-Y_{0})$
for $q\in\RR^{K}$. Here, we notice that we can replace $\RR^{K}$
with $[-1,1]^{K}$ or any set that contains an open neighborhood of
zero. It turns out that the sign saturation condition can be written
in terms of $\rho(\cdot)$ once we observe 
\[
\rho(\beta)=E\left(\oneb\{\phi(X)\geq0\}\cdot\phi(X)\right)\quad{\rm and}\quad\rho(-\beta)=E\left(\oneb\{\phi(X)\leq0\}\cdot\phi(X)\right).
\]

We give the formal statement below. 
\begin{lem}
\label{lem: test}Let Assumption \ref{assu: stationary error} hold.
Then
\[
\sup_{q\in\RR^{K}}\rho(q)=E\max\left\{ \phi(X),0\right\} \ {\rm and}\ \inf_{q\in\RR^{K}}\rho(q)=E\min\left\{ \phi(X),0\right\} .
\]
 
\end{lem}
Define $\tau_{*}=\min\{\tau_{1},-\tau_{2}\}$ with $\tau_{1}=\sup_{q\in\RR^{K}}\rho(q)$
and $\tau_{2}=\inf_{q\in\RR^{K}}\rho(q)$. By Lemma \ref{lem: test},
measuring $\tau_{*}$ can give us some indication of whether (or how
``well'') the sign saturation condition holds. In particular, the
sign saturation fails if and only if $\tau_{*}=0$. From the data,
we can construct a one-sided confidence interval for $\tau_{*}$.
The natural estimate for $\tau_{*}$ is $\hat{\tau}_{*}=\min\{\hat{\tau}_{1},-\hat{\tau}_{2}\}$,
where $\hat{\tau}_{1}=\sup_{q\in\RR^{K}}\hat{\rho}_{n}(q)$, $\hat{\tau}_{2}=\inf_{q\in\RR^{K}}\hat{\rho}_{n}(q)$
and
\[
\hrho_{n}(q)=n^{-1}\sum_{i=1}^{n}\oneb\{W_{i}'q\geq0\}(Y_{i,1}-Y_{i,0}).
\]

In the proof of Theorem \ref{thm: test} below, we show that
\[
\sqrt{n}\left(\hat{\tau}_{*}-\tau_{*}\right)\leq\left(\sup_{q\in\RR^{K}}S_{n}(q)\right)\cdot\oneb\{\tau_{1}\leq-\tau_{2}\}+\left(\sup_{q\in\RR^{K}}(-S_{n}(q))\right)\cdot\oneb\{\tau_{1}>-\tau_{2}\},
\]
where $S_{n}(q)=\sqrt{n}(\hat{\rho}_{n}(q)-\rho(q))$. By the standard
arguments of empirical processes, $S_{n}$ converges to a mean-zero
Gaussian process, which means that $\sup_{q\in\RR^{K}}S_{n}(q)$ and
$\sup_{q\in\RR^{K}}(-S_{n}(q))$ have the same distribution. Thus,
a simple $(1-\alpha)$-confidence interval for $\tau_{*}$ can be
obtained once we approximate the $(1-\alpha)$ quantile of $\sup_{q\in\RR^{K}}S_{n}(q)$.
This motivates the following bootstrapping algorithm. 
\begin{lyxalgorithm}
\label{alg: test }For a $(1-\alpha)$-confidence interval for $\tau_{*}$:
\begin{enumerate}
\item Collect data $\{(W_{i},Y_{i,1},Y_{i,0})\}_{i=1}^{n}$.
\item Compute the estimate $\hat{\tau}_{*}=\min\{\hat{\tau}_{1},-\hat{\tau}_{2}\}$. 
\item Draw a random sample $\{(\tilde{W}_{i},\tilde{Y}_{i,1},\tilde{Y}_{i,0})\}_{i=1}^{n}$
with replacement from the data and compute $\sup_{q\in\RR^{K}}\tilde{S}_{n}(q)$,
where $\tilde{S}_{n}(q)=\sqrt{n}\left(n^{-1}\sum_{i=1}^{n}\oneb\{\tilde{W}_{i}'q>0\}(\tilde{Y}_{i,1}-\tilde{Y}_{i,0})-\hrho_{n}(q)\right)$. 
\item Repeat the previous step many times and compute $c_{1-\alpha}$, the
$(1-\alpha)$ quantile of $\sup_{q\in\RR^{K}}\tilde{S}_{n}(q)$.\footnote{In practice, optimization over $q\in\RR^{K}$ can be done by a grid
search over $N$ points on the unit sphere $\{v\in\RR^{K}:\ \|v\|_{2}=1\}$.
The computation seems reasonably fast for a normal sample size. For
example, in the setting of Section \ref{sec: monte carlo} with $\beta_{2}=1$,
$n=5000$ and grid size $N=2000$, computing $\sup_{q}\tilde{S}_{n}(q)$
with 500 bootstrap samples take less than 17 seconds in Matlab on
a 2022 MacBook Pro laptop. Further speedups are possible depending
on $n$, $N$ and available memory.} 
\item The $(1-\alpha)$-confidence interval for $\tau_{*}$ is $\left[\max(\hat{\tau}_{*}-c_{1-\alpha}n^{-1/2},0),\ 1\right]$. 
\end{enumerate}
\end{lyxalgorithm}
Notice that computing $\hat{\tau}_{1}$ is equivalent to computing
the maximum score estimator\footnote{Notice that maximizing $\hat{\rho}_{n}(q)$ is equivalent to maximizing
$n^{-1}\sum_{i=1}^{n}{\rm sgn}(W_{i}'q)(Y_{i,1}-Y_{i,0})$ because
${\rm sgn}(W_{i}'q)=-1+2\cdot\oneb\{W_{i}'q\geq0\}$.} and all the existing computational algorithms and software for the
maximum score estimator can be used. Finding $\hat{\tau}_{2}$ also
reduces to computing the maximum score estimator once we swap $Y_{1}$
and $Y_{0}$. The above algorithm can be justified by the following
result.
\begin{thm}
\label{thm: test}Let Assumption \ref{lem: manski lem 1} hold. Assume
that $P(Y_{1}=Y_{0})<1$. Then
\[
\limsup_{n\rightarrow\infty}P\left(\sqrt{n}(\hat{\tau}_{*}-\tau_{*})>c_{1-\alpha}\right)\leq\alpha.
\]
\end{thm}
Note that the cube-root asymptotics typically associated with the
maximum score estimator (see e.g., \citet{kim1990cube} and \citet{seo2018local})
does not arise here. The reason is that $\hat{\tau}_{1}$ is the maximum
of $\hat{\rho}_{n}(\cdot)$, rather than the argmax. The cube-root
asymptotics of the maximum score estimator is due to the non-standard
rate of some terms in the expansion for analyzing the argmax. Fortunately,
we do not have to deal with such terms for our purpose. 

It is worth pointing out that Algorithm \ref{alg: test } is asymptotically
exact in the sense that there exists a data-generating process under
which the asymptotic coverage probability is exactly $\alpha$. For
example, if $\rho(q)=0$ for any $q$, then $\sqrt{n}(\hat{\tau}_{*}-\tau_{*})=\sup_{q\in\RR^{K}}S_{n}(q)$
and $P\left(\sqrt{n}(\hat{\tau}_{*}-\tau_{*})>c_{1-\alpha}\right)\rightarrow\alpha$.

Our results characterize what drives the identification and introduce
$\tau_{*}$ as a measure for the identification strength. This theoretical
insight motivates a preliminary diagnostic step in applied work. Reporting
the confidence interval of $\tau_{*}$ in Algorithm \ref{alg: test }
can offer important insight on the identification strength. For example,
if this confidence interval contains zero, then one should be concerned
about identification failure. 

However, testing $\tau_{*}=0$ might not detect all the problematic
cases. In practice, the identification might not be clear-cut and
additional caution is warranted if we worry about the ``weak'' identification
scenario, which can complicate econometric analysis due to the generic
issue of post-selection inference, see e.g., \citet{Leeb2005}. 

Fortunately, since $\tau_{*}$ can be learned from the data, one option
is to proceed with the identification-based inference (i.e., cube-root
asymptotics) only when the identification is strong enough. For instance,
one might adopt a robust default inference method and switch to the
cube-root approach only when a test fails to reject $H_{0}:\ \tau_{*}\geq\tau_{0}$,
where $\tau_{0}>0$ is a pre-specified fixed threshold. This is uniformly
valid asymptotically as it rules out $\tau_{*}$ being ``local''
to zero. 

This strategy is conceptually analogous to a common treatment of weak
instruments in the linear instrumental variable (IV) models: low correlation
between the instrument and the exogenous variable suggests potential
identification failure and one solution is to proceed with the classical
asymptotics only when this correlation is large enough, e.g., measured
by the first-stage $F$-statistic \citep{stock2002testing}.

This raises a natural question: what should this default identification-robust
method be? In the IV literature, approaches such as the Anderson-Rubin
test have been developed. In our model, one could consider inverting
the maximum score criterion function, in a manner similar to the Anderson--Rubin
approach. While more refined methods may be possible, they would require
substantial additional analysis on asymptotic theory, which lies beyond
the current focus on characterizing identification and are left for
future research.%
{} %

\section{\label{sec: monte carlo}Numerical illustration}

We have seen that $\tau_{*}$ is a measure of the sign saturation
condition and thus serves a gauge for identification. Using Monte
Carlo simulations, we now illustrate how $\tau_{*}$ relates to the
identification strength and the accuracy of the cube-root asymptotics.

Consider the model in (\ref{eq: FE model}) with $X_{t}=(\oneb\{t=0\},X_{2,t})'$
for $t\in\{0,1\}$, where $X_{2,t}$ is from the uniform distribution
on $[-1,1]$, $u_{t}$ is from the standard normal distribution and
$\alpha_{t}=(X_{2,0}+X_{2,1})/2$. Here, $X_{2,0}$, $X_{2,1}$, $u_{0}$
and $u_{1}$ are mutually independent. We set $\beta=(1,\beta_{2})'$.
Therefore, $W'\beta=1+\beta_{2}(X_{2,1}-X_{2,0})$. Clearly, the support
of $W'\beta$ is $[-2|\beta_{2}|+1,2|\beta_{2}|+1]$, which means
that the sign saturation holds if and only if $|\beta_{2}|>0.5$.
In Figure \ref{fig: ID strength}, we plot $\tau_{*}$ (computed using
$\hat{\tau}_{*}$ with $n=10^{8}$) as a function of $\beta_{2}$.
It confirms the intuition that $\beta_{2}=0.5$ corresponds to $\tau_{*}=0$,
which means that the sign saturation condition fails. Higher value
of $\beta_{2}$ corresponds to a higher degree to which the sign saturation
condition holds. 

\begin{figure}
\caption{\label{fig: ID strength}Coefficient and identification}

\centering{}\includegraphics[scale=0.6]{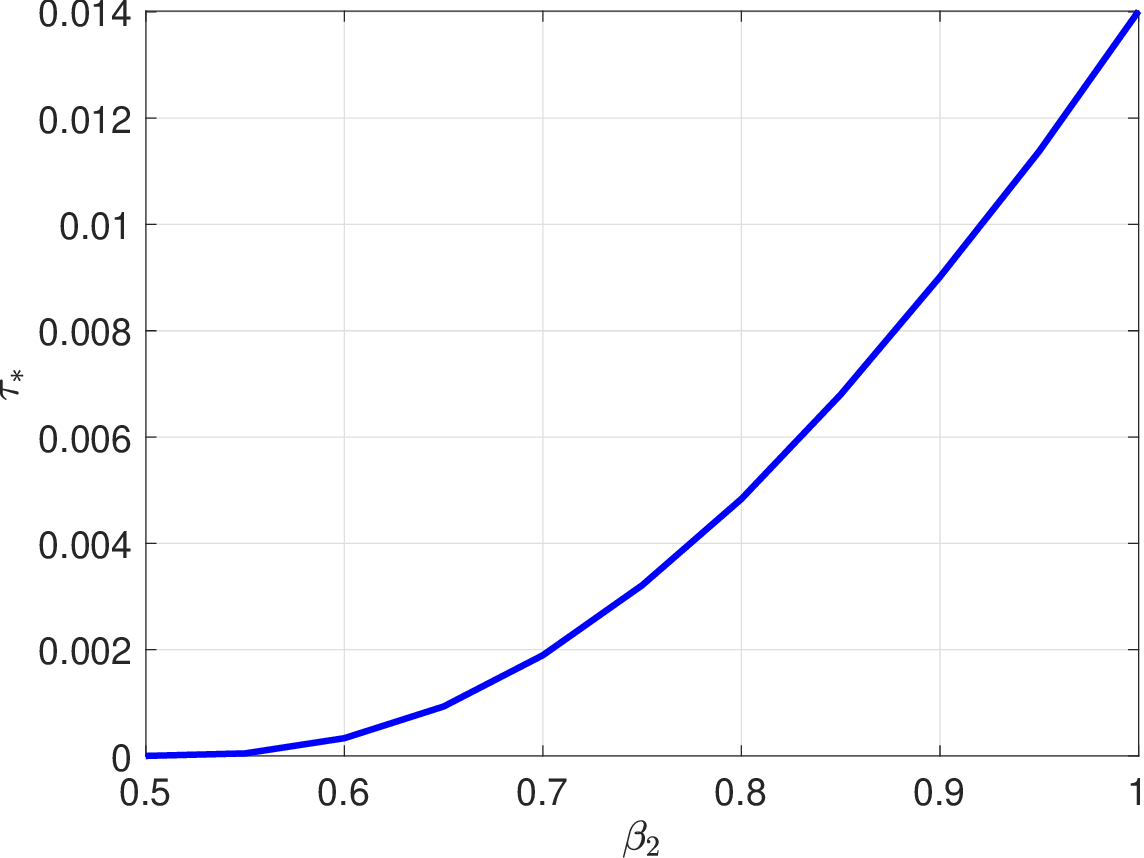}
\end{figure}

When the identification holds (together with other regularity conditions),
one can expect the cube-root asymptotics. Given a particular value
of $\tau_{*}$, we compare the finite-sample distribution of the maximum
score estimator with its asymptotic distribution. Explicitly deriving
the cube-root asymptotic distribution requires quantities that are
difficult to compute. To circumvent this problem, we treat the distribution
of the estimator with a sample size of $n=1,000,000$ as the asymptotic
distribution.\footnote{Here, the goal is to assess the importance of identification by examining
how well the asymptotic theory applies. In practice, there is another
issue of approximating the asymptotic distribution either with sub-sampling
or with a modified bootstrap.} The finite-sample distribution is computed using $n=1,000$. To make
these two distributions comparable, we compare the maximum score estimator
$\hat{\beta}_{2}$ and consider the distribution of $n^{1/3}(\hat{\beta}_{2}-\beta_{2})$,
which should converge to a limiting distribution under the cube-root
asymptotics. We present the comparison in Figure \ref{fig: dist}
for $\beta_{2}\in\{0.6,\ 1\}$ based on 5,000 repetitions. We see
that when the identification is stronger ($\beta_{2}=1$) and the
finite-sample distribution of the maximum score estimator is better
approximated by the asymptotic distribution. %

\begin{figure}
\caption{\label{fig: dist}Distribution of $n^{1/3}(\hat{\beta}_{2}-\beta_{2})$
for $\beta_{2}=0.6$ (left) and $\beta_{2}=1$ (right)}

\centering{}\includegraphics[scale=0.4]{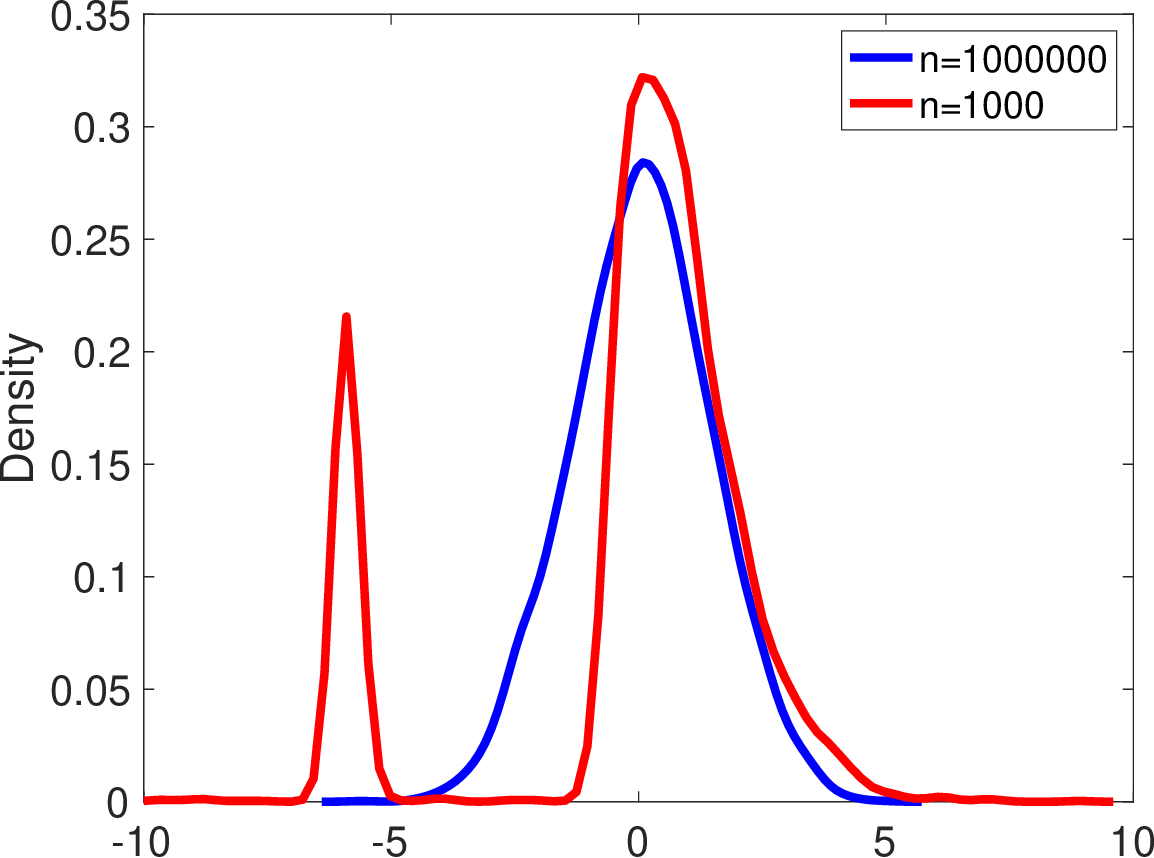}\includegraphics[scale=0.4]{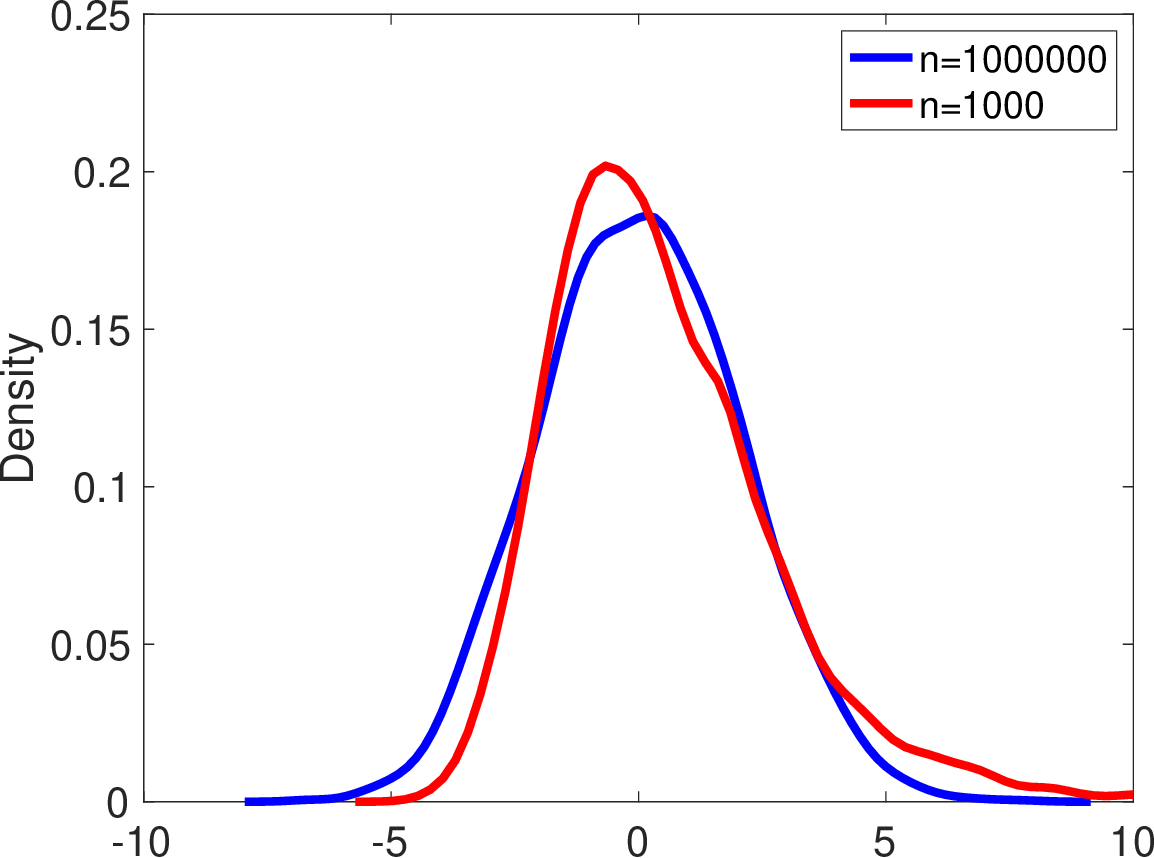}
\end{figure}

\appendix

\section{Appendix: proofs of theoretical results}

\subsection{Proof of Theorem \ref{thm: mixed reg}}

We define the following notations that will be used throughout this
subsection (proof of Theorem \ref{thm: mixed reg}). For any set $A$,
$\overline{A}$ denotes the closure of $A$ and $A^{\circ}$ denotes
the interior of $A$. Let $\Zcal$ be the support of $Z$. We partition
$\beta=(\beta_{1}',\beta_{2})'$ and $b=(b_{1}',b_{2})'$ with $\beta_{1},b_{1}\in\RR^{K-1}$
and $\beta_{2},b_{2}\in\RR$. Then $W'\beta=Z'\beta_{1}+\beta_{2}$.
Define the following sets: $A_{1}=\{z\in\RR^{K-1}:\ z'\beta_{1}+\beta_{2}>0\}$,
$A_{2}=\{z\in\RR^{K-1}:\ z'b_{1}+b_{2}<0\}$, $B_{1}=\{z\in\RR^{K-1}:\ z'\beta_{1}+\beta_{2}<0\}$
and $B_{2}=\{z\in\RR^{K-1}:\ z'b_{1}+b_{2}>0\}$. Before we prove
Theorem \ref{thm: mixed reg}, we establish some auxiliary results.
\begin{lem}
\label{lem: topology fact 1} Suppose that $A$ is an open set (in
$\RR^{K-1}$) such that $A\bigcap\Zcal\neq\emptyset$. Then $P(Z\in A)>0$.
\end{lem}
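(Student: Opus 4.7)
The plan is a standard support–minimality argument by contradiction. Recall that $\Zcal$ is defined as the smallest closed set with $P(Z \in \Zcal) = 1$. I would fix an open set $A \subseteq \RR^{K-1}$ with $A \cap \Zcal \neq \emptyset$ and suppose, for contradiction, that $P(Z \in A) = 0$.

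First, I would observe that $A^c := \RR^{K-1} \setminus A$ is closed (since $A$ is open) and has $P(Z \in A^c) = 1 - P(Z \in A) = 1$. Thus $A^c$ is a closed set of full probability mass. Next, by the minimality characterization of the support $\Zcal$, any closed set of full probability mass must contain $\Zcal$, so $\Zcal \subseteq A^c$. This is equivalent to $\Zcal \cap A = \emptyset$, which directly contradicts the hypothesis that $A \cap \Zcal \neq \emptyset$. Hence the assumption $P(Z \in A) = 0$ is false, and we conclude $P(Z \in A) > 0$.

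There is no real obstacle here; the lemma is essentially the defining property of the topological support of a probability measure, and the whole argument rests on the minimality clause in the definition quoted from \citet{dudley_2002}. I would keep the write-up to a few lines, since the rest of the paper (in particular the proof of Theorem \ref{thm: mixed reg}) will use this fact repeatedly to pass from geometric statements about intersections with $\Zcal$ to probabilistic statements about $P(Z \in \cdot)$.
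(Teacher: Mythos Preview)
Your proof is correct and essentially identical to the paper's: both argue by contradiction from the minimality of the support, with the only cosmetic difference being that the paper works with the proper closed subset $\Zcal\setminus A$ directly while you pass through $A^{c}$ and the containment $\Zcal\subseteq A^{c}$.
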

\begin{proof}
Recall that $\Zcal$ is the smallest closed set such that $P(Z\in\Zcal)=1$,
see the definition of the support of a measure on page 227 of \citet{dudley_2002}. 

Suppose that $P(Z\in A)=0$. Then $P(Z\in\Zcal\backslash A)=1$. Since
$A$ is open, $\Zcal\backslash A$ is a closed set. Notice that $\Zcal\backslash A$
is a proper subset of $\Zcal$ because $A\bigcap\Zcal$ is non-empty.
Then $\Zcal\backslash A$ is a closed set with full probability mass
and is a smaller set than $\Zcal$, contradicting the assumption that
$\Zcal$ is the support of $Z$.
\end{proof}
\begin{lem}
\label{lem: topology fact 2} Suppose that $B$ is an open set (in
$\RR^{K-1}$) such that $\overline{B}\bigcap\Zcal^{\circ}\neq\emptyset$.
Then $P(Z\in B)>0$.
\end{lem}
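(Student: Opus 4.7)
The plan is to reduce Lemma~\ref{lem: topology fact 2} to the previous Lemma~\ref{lem: topology fact 1} by shrinking $B$ to a nonempty open subset that is entirely inside $\Zcal$. The geometric intuition is that although $B$ itself may stick out of $\Zcal$, the hypothesis $\overline{B}\cap\Zcal^{\circ}\neq\emptyset$ pins down a point that is simultaneously a limit point of $B$ and an interior point of $\Zcal$; near such a point we can carve out an open piece of $B$ that lies in $\Zcal$.

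Concretely, the steps I would carry out are the following. First, pick any $z_{0}\in\overline{B}\cap\Zcal^{\circ}$. Since $z_{0}\in\Zcal^{\circ}$, there exists an open ball $U\subseteq\Zcal$ with $z_{0}\in U$. Second, form the set $U\cap B$. This is open (as the intersection of two open sets in $\RR^{K-1}$) and it is nonempty: because $z_{0}\in\overline{B}$, every open neighborhood of $z_{0}$ meets $B$, so in particular $U\cap B\neq\emptyset$. Third, observe that $U\cap B\subseteq U\subseteq\Zcal$, so the open set $U\cap B$ meets $\Zcal$ (it is in fact contained in $\Zcal$).

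At this point Lemma~\ref{lem: topology fact 1} applies directly to the open set $U\cap B$ and yields $P(Z\in U\cap B)>0$. Since $U\cap B\subseteq B$, monotonicity of probability gives $P(Z\in B)\geq P(Z\in U\cap B)>0$, which is the desired conclusion.

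I do not foresee a serious obstacle: the entire argument is a short topological manipulation together with one appeal to the preceding lemma. The only thing worth being careful about is the use of the hypothesis $\overline{B}\cap\Zcal^{\circ}\neq\emptyset$ rather than $B\cap\Zcal\neq\emptyset$; the former is weaker (a boundary point of $B$ that lies in $\Zcal^{\circ}$ suffices), and the role of $\Zcal^{\circ}$ is exactly to supply the ball $U$ around $z_{0}$ that sits inside $\Zcal$ so that the previous lemma is applicable.
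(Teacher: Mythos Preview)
Your proposal is correct and takes essentially the same approach as the paper: both reduce to Lemma~\ref{lem: topology fact 1} by exhibiting a nonempty open subset of $B$ contained in $\Zcal$. The only cosmetic difference is that the paper works with $A=B\cap\Zcal^{\circ}$ and argues by contradiction (if $A=\emptyset$ then $\overline{B}\subseteq(\Zcal^{\circ})^{c}$), whereas you pick a specific $z_{0}\in\overline{B}\cap\Zcal^{\circ}$ and use the limit-point characterization of the closure directly; the underlying topological fact is the same.
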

\begin{proof}
Let $A=B\bigcap\Zcal^{\circ}$. By the openness of $B$ and $\Zcal^{\circ}$,
$A$ is open. The desired result follows by Lemma \ref{lem: topology fact 1}
once we show that $A$ is non-empty. 

We do so by contradiction. Suppose $B\bigcap\Zcal^{\circ}=\emptyset$.
Then $B\subseteq(\Zcal^{\circ})^{c}$. Since $\Zcal^{\circ}$ is open,
$(\Zcal^{\circ})^{c}$ is closed. Thus, $(\Zcal^{\circ})^{c}$ is
a closed set containing $B$. Recall that $\overline{B}$ is the smallest
closed set containing $B$. It follows that $\overline{B}\subseteq(\Zcal^{\circ})^{c}$,
which means that $\overline{B}\bigcap\Zcal^{\circ}=\emptyset$. This
contradicts the assumption of $\overline{B}\bigcap\Zcal^{\circ}\neq\emptyset$.
Hence, $A=B\bigcap\Zcal^{\circ}$ is non-empty. 
\end{proof}
\begin{lem}
\label{lem: topology fact 3}Suppose that $b\neq\mu\beta$ for any
$\mu>0$. If $P(Z\in A_{1})$, $P(Z\in A_{2})$, $P(Z\in B_{1})$
and $P(Z\in B_{2})$ are all strictly positive, then $\overline{A_{1}\bigcap A_{2}}=\overline{A}_{1}\bigcap\overline{A}_{2}$
and $\overline{B_{1}\bigcap B_{2}}=\overline{B}_{1}\bigcap\overline{B}_{2}$. 
\end{lem}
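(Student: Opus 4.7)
The containment $\overline{A_1 \cap A_2} \subseteq \overline{A}_1 \cap \overline{A}_2$ is automatic, so the only work is to show $\overline{A}_1 \cap \overline{A}_2 \subseteq \overline{A_1 \cap A_2}$, and the plan is to split this into the cases where $A_1 \cap A_2$ is non-empty and where it is empty. One preliminary that I use throughout: the hypotheses $P(Z \in A_1) > 0$ and $P(Z \in B_1) > 0$ together rule out $\beta_1 = 0$ (otherwise one of $A_1, B_1$ would be empty), so $A_1$ is a genuine open half-space with boundary hyperplane $\{z:\, z'\beta_1 + \beta_2 = 0\}$; the same reasoning with $A_2, B_2$ gives $b_1 \neq 0$.

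\textbf{Case with non-empty intersection.} I pick any $z^* \in A_1 \cap A_2$ and any $z \in \overline{A}_1 \cap \overline{A}_2$, and use the shorthand $\ell_1(z) = z'\beta_1 + \beta_2$ and $\ell_2(z) = z'b_1 + b_2$, so that $\ell_1(z) \geq 0,\, \ell_2(z) \leq 0,\, \ell_1(z^*) > 0,\, \ell_2(z^*) < 0$. For $t \in (0,1]$, linearity gives $\ell_1\bigl((1-t)z + tz^*\bigr) > 0$ and $\ell_2\bigl((1-t)z + tz^*\bigr) < 0$, so $(1-t)z + tz^* \in A_1 \cap A_2$. Letting $t \to 0^+$ places $z$ in $\overline{A_1 \cap A_2}$.

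\textbf{Case with empty intersection.} The plan here is to show that under the hypothesis $b \neq \mu\beta$ for every $\mu > 0$, one actually has $\overline{A}_1 \cap \overline{A}_2 = \emptyset$, making the target inclusion trivial. From $A_1 \cap A_2 = \emptyset$ and the non-emptiness of both open half-spaces, $\beta_1$ and $b_1$ must be linearly dependent: otherwise the linear map $z \mapsto (\ell_1(z), \ell_2(z))$ would be surjective onto $\RR^2$ and would attain $(1,-1)$, contradicting disjointness. So $b_1 = \mu\beta_1$ for some $\mu \neq 0$. A negative $\mu$ is excluded because then $A_1$ and $A_2$ would be half-spaces $\{\beta_1'z > c_i\}$ pointing in the same direction, whose intersection is non-empty. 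With $\mu > 0$, the intersection $A_1 \cap A_2 = \{z:\, -\beta_2 < \beta_1'z < -b_2/\mu\}$ is empty precisely when $b_2 \geq \mu\beta_2$; the assumption $b \neq \mu\beta$ upgrades this to the strict inequality $b_2 > \mu\beta_2$, so $-b_2/\mu < -\beta_2$ and consequently $\overline{A}_1 \cap \overline{A}_2 = \{z:\, -\beta_2 \leq \beta_1'z \leq -b_2/\mu\} = \emptyset$, as required.

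The identity $\overline{B_1 \cap B_2} = \overline{B}_1 \cap \overline{B}_2$ then follows by running the same argument with $(-\beta,-b)$ in place of $(\beta,b)$, since the non-proportionality hypothesis is invariant under this sign flip. The main obstacle is the short case analysis in the empty-intersection case, specifically pinning down that disjoint non-empty open half-spaces must have \emph{positively} proportional normals, and then leveraging the strict inequality $b_2 > \mu\beta_2$ to separate the two closed half-spaces; everything else is routine affine geometry.
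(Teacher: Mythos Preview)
Your proof is correct but takes a different route from the paper's. The paper fixes an arbitrary $z_*\in\overline{A}_1\cap\overline{A}_2$ and builds an explicit perturbation in each of three sub-cases (boundary of one set, boundary of the other, or both); the double-boundary case is handled via Farkas's lemma to extract a direction $\Delta$ with $\Delta'\beta_1>0$ and $\Delta'b_1<0$. You instead split on whether $A_1\cap A_2$ is empty. In the non-empty case your convex-combination argument with a single interior point $z^*$ disposes of all three of the paper's sub-cases at once, which is cleaner and avoids Farkas. In the empty case you use linear dependence of $\beta_1,b_1$ and the strict inequality $b_2>\mu\beta_2$ (forced by the non-proportionality hypothesis) to show $\overline{A}_1\cap\overline{A}_2=\emptyset$ as well; the paper never isolates this case explicitly, since its pointwise construction implicitly produces a point in $A_1\cap A_2$ whenever $\overline{A}_1\cap\overline{A}_2\neq\emptyset$. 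Your argument is more economical and relies only on elementary affine geometry; the paper's is more constructive in that it exhibits the approximating sequence directly from $z_*$ without first locating an auxiliary interior point.
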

\begin{proof}
We prove the first claim; the second claims follows by an analogous
argument. 

Since $P(Z\in A_{1})$ and $P(Z\in B_{1})$ are both strictly positive,
we have that $\beta_{1}\neq0$. To see this, observe that if $\beta_{1}=0$,
then the two events $\{Z\in A_{1}\}=\{\beta_{2}>0\}$ and $\{Z\in B_{1}\}=\{\beta_{2}<0\}$
cannot both have strictly positive probability. Similarly, $P(Z\in A_{2})>0$
and $P(Z\in B_{2})$ yield yield $b_{1}\neq0$.

First observe that $\overline{A_{1}\bigcap A_{2}}\subseteq\overline{A}_{1}\bigcap\overline{A}_{2}$
(because $A_{1}\bigcap A_{2}\subseteq\overline{A}_{1}\bigcap\overline{A}_{2}$
and $\overline{A}_{1}\bigcap\overline{A}_{2}$ is closed). 

To show the other direction, we fix an arbitrary $z_{*}\in\overline{A}_{1}\bigcap\overline{A}_{2}$.
We need to show that $z_{*}\in\overline{A_{1}\bigcap A_{2}}$. 

Notice that $z_{*}'\beta_{1}+\beta_{2}\geq0$ (due to $z_{*}\in\overline{A}_{1})$
and $z_{*}'b_{1}+b_{2}\leq0$ (due to $z_{*}\in\overline{A}_{2})$.
If $z_{*}'\beta_{1}+\beta_{2}>0$ and $z_{*}'b_{1}+b_{2}<0$, then
$z_{*}\in A_{1}\bigcap A_{2}$, which means that $z_{*}\in\overline{A_{1}\bigcap A_{2}}$.
Therefore, we only need to show that $z_{*}\in\overline{A_{1}\bigcap A_{2}}$
in the following three cases:
\begin{itemize}
\item[(i)] $z_{*}'\beta_{1}+\beta_{2}=0$ and $z_{*}'b_{1}+b_{2}<0$
\item[(ii)] $z_{*}'\beta_{1}+\beta_{2}>0$ and $z_{*}'b_{1}+b_{2}=0$
\item[(iii)] $z_{*}'\beta_{1}+\beta_{2}=0$ and $z_{*}'b_{1}+b_{2}=0$.
\end{itemize}
Consider Case (i). Let $z_{j}=z_{*}+cj^{-1}\beta_{1}$ for $j\in\{1,2,...\}$.
Since $z_{*}'b_{1}+b_{2}<0$, we can choose a constant $c>0$ to be
small enough such that $cb_{1}'\beta_{1}<|z_{*}'b_{1}+b_{2}|/2$,
which would imply that $z_{j}'b_{1}+b_{2}<(z_{*}'b_{1}+b_{2})/2<0$.
On the other hand, notice that $z_{j}'\beta_{1}+\beta_{2}=(z_{*}'\beta_{1}+\beta_{2})+cj^{-1}\|\beta_{1}\|_{2}^{2}=cj^{-1}\|\beta_{1}\|_{2}^{2}>0$
due to $\beta_{1}\neq0$. Thus, we have constructed a sequence $\{z_{j}\}_{j=1}^{\infty}$
in $A_{1}\bigcap A_{2}$ such that $\lim_{j\rightarrow\infty}z_{j}=z_{*}$.
Hence, $z_{*}$ is a limit point of $A_{1}\bigcap A_{2}$, which means
that $z_{*}\in\overline{A_{1}\bigcap A_{2}}$. 

Similarly, we can show that $z_{*}\in\overline{A_{1}\bigcap A_{2}}$
in Case (ii). 

We now consider Case (iii). We first show that there exists $\Delta\in\RR^{K-1}$
such that $\Delta'\beta_{1}>0$ and $\Delta'b_{1}<0$. We proceed
by contradiction. Suppose that there does not exist $\Delta\in\RR^{K-1}$
with $\Delta'\beta_{1}>0$ and $\Delta'b_{1}<0$. Then for any $\Delta\in\RR^{K-1}$,
$\Delta'\beta_{1}>0$ implies $\Delta'b_{1}\geq0$. By the continuity
of $\Delta\mapsto\Delta'\beta_{1}$, $\Delta'\beta_{1}\geq0$ implies
$\Delta'b_{1}\geq0$. By Farkas's lemma (e.g., Corollary 22.3.1 of
\citet{rockafellar1970convex}), there exists $\xi\geq0$ such that
$b_{1}=\xi\beta_{1}$. Since we have proved $b_{1}\neq0$, it follows
that $\xi\neq0$, which means that $\xi>0$. Now we have $z_{*}'b_{1}+b_{2}=\xi z_{*}'\beta_{1}+b_{2}=\xi(z_{*}'\beta_{1}+b_{2}\xi^{-1})$.
Since $z_{*}'\beta_{1}+\beta_{2}=0$ and $z_{*}'b_{1}+b_{2}=0$, it
follows that $b_{2}\xi^{-1}=\beta_{2}$. Therefore, we have $b=(b_{1}',b_{2})'=\xi(\beta_{1}',\beta_{2})'$,
contradicting the assumption $b\neq\mu\beta$ for any $\mu>0$. Hence,
there exists $\Delta\in\RR^{K-1}$ such that $\Delta'\beta_{1}>0$
and $\Delta'b_{1}<0$. 

Now we choose $\Delta\in\RR^{K-1}$ such that $\Delta'\beta_{1}>0$
and $\Delta'b_{1}<0$. Define $\tilde{z}_{j}=z_{*}+j^{-1}\Delta$
for $j\in\{1,2,...\}$. Then $\tilde{z}_{j}'\beta_{1}+\beta_{2}=(z_{*}'\beta_{1}+\beta_{2})+j^{-1}\Delta'\beta_{1}=j^{-1}\Delta'\beta_{1}>0$
and $\tilde{z}_{j}'b_{1}+b_{2}=(z_{*}'b_{1}+b_{2})+j^{-1}\Delta'b_{1}=j^{-1}\Delta'b_{1}<0$.
Again, we have constructed a sequence $\{\tilde{z}_{j}\}_{j=1}^{\infty}$
in $A_{1}\bigcap A_{2}$ such that $\lim_{j\rightarrow\infty}\tilde{z}_{j}=z_{*}$.
Hence, $z_{*}$ is a limit point of $A_{1}\bigcap A_{2}$, which means
that $z_{*}\in\overline{A_{1}\bigcap A_{2}}$. 
\end{proof}
\begin{lem}
\label{lem: topology fact 4}Suppose that $\Zcal$ is convex and $\Zcal^{\circ}$
is non-empty. Moreover, assume that $P(Z\in A_{1})$ and $P(Z\in A_{2})$
are both strictly positive. Then at least one of $\overline{A}_{1}\bigcap\overline{A}_{2}\bigcap\Zcal^{\circ}$
and $\overline{B}_{1}\bigcap\overline{B}_{2}\bigcap\Zcal^{\circ}$
is not empty. 
\end{lem}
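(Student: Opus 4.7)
My plan is to exhibit a single point $z^{\ast}\in\Zcal^{\circ}$ lying on the hyperplane $H_{\beta}:=\{z:z'\beta_{1}+\beta_{2}=0\}$, and then read off from the sign of $c:=(z^{\ast})'b_{1}+b_{2}$ which of the two closed triple-intersections in the conclusion contains it. The setup is tailor-made for this: because the four closures $\overline{A}_{1},\overline{A}_{2},\overline{B}_{1},\overline{B}_{2}$ are half-spaces defined by weak inequalities, any $z^{\ast}$ on $H_{\beta}$ already lies in $\overline{A}_{1}\cap\overline{B}_{1}$, so only the $b$-side of the question remains.

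The step that carries the weight of the lemma is producing such a $z^{\ast}$, i.e.\ showing $H_{\beta}\cap\Zcal^{\circ}\neq\emptyset$. I would argue by contradiction. Since $\Zcal^{\circ}$ is a nonempty open convex set it is connected, so if it missed $H_{\beta}$ the continuous linear functional $z\mapsto z'\beta_{1}+\beta_{2}$ would have constant strict sign on $\Zcal^{\circ}$; say it is strictly positive throughout (the other case is symmetric). Convexity of $\Zcal$ together with $\Zcal^{\circ}\neq\emptyset$ gives the standard identity $\Zcal=\overline{\Zcal^{\circ}}$, so $z'\beta_{1}+\beta_{2}\geq 0$ on all of $\Zcal$, i.e.\ $W'\beta\geq 0$ almost surely; this contradicts the hypothesis that $W'\beta$ puts strictly positive mass on $(-\varepsilon,0)$ for every $\varepsilon>0$. (Had the sign on $\Zcal^{\circ}$ instead been negative, the positive-side mass hypothesis would deliver the contradiction, so both halves of that assumption are used via the symmetric case.)

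With $z^{\ast}\in H_{\beta}\cap\Zcal^{\circ}$ in hand, I finish by trichotomy on $c$. If $c\leq 0$ then $z^{\ast}\in\overline{A}_{2}$, and combined with $(z^{\ast})'\beta_{1}+\beta_{2}=0$ this yields $z^{\ast}\in\overline{A}_{1}\cap\overline{A}_{2}\cap\Zcal^{\circ}$. If instead $c\geq 0$ then $z^{\ast}\in\overline{B}_{2}$, and by the same token $z^{\ast}\in\overline{B}_{1}\cap\overline{B}_{2}\cap\Zcal^{\circ}$. The case $c=0$ actually lands $z^{\ast}$ in both intersections simultaneously, so in every case at least one of the two sets is nonempty, which is the claim.

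The only genuine obstacle is Step 1, where the combined use of convexity of $\Zcal$, non-emptiness of $\Zcal^{\circ}$, and the two-sided mass hypothesis on $W'\beta$ is essential; the trichotomy in Step 3 is essentially bookkeeping built into the weak-inequality closures. I do not see where the hypotheses $b\neq\mu\beta$ and $P(Z\in A_{2})>0$ get used in this particular lemma; I suspect they are recorded here because Lemma \ref{lem: topology fact 4} is a stepping stone for the proof of Theorem \ref{thm: mixed reg}, where those extra hypotheses become essential (for instance, to upgrade ``nonempty closed triple-intersection'' to a set of strictly positive probability, which is what is ultimately needed to conclude $R(b)>0$).
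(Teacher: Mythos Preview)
Your proof is correct and takes a genuinely different, simpler route from the paper's. The paper argues by contradiction: assuming both triple intersections are empty, it picks $z_{1}\in A_{1}\cap\Zcal^{\circ}$ and $z_{2}\in A_{2}\cap\Zcal^{\circ}$ (this is where $P(Z\in A_{1})>0$ and $P(Z\in A_{2})>0$ enter), deduces $z_{1}'b_{1}+b_{2}>0$ and $z_{2}'\beta_{1}+\beta_{2}<0$, and then studies the quadratic $f(\lambda)=[z_{*}(\lambda)'\beta_{1}+\beta_{2}][z_{*}(\lambda)'b_{1}+b_{2}]$ along the segment $z_{*}(\lambda)=\lambda z_{1}+(1-\lambda)z_{2}$; the two roots of $f$ in $(0,1)$ are forced to coincide, producing a point on both hyperplanes and hence in both closed triple intersections. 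Your argument bypasses all of this by observing that $\overline{A}_{2}\cup\overline{B}_{2}=\RR^{K-1}$, so it suffices to land \emph{any} interior point of $\Zcal$ on $H_{\beta}$; the connectedness of $\Zcal^{\circ}$ and the identity $\Zcal=\overline{\Zcal^{\circ}}$ then reduce this to the two-sided mass hypothesis on $W'\beta$.

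The trade-off is in which hypotheses each argument consumes. The paper's interpolation argument never actually uses the two-sided mass hypothesis on $W'\beta$ (it only needs $P(Z\in A_{1})>0$ and $P(Z\in A_{2})>0$), whereas your proof uses the mass hypothesis in full but never touches $P(Z\in A_{2})>0$. Your suspicion about the unused hypotheses is therefore well founded: neither proof uses $b\neq\mu\beta$, and the two positivity assumptions on $A_{1},A_{2}$ versus the two-sided mass condition are, for the purposes of this lemma, alternative inputs rather than jointly necessary ones. They are indeed carried along because Theorem~\ref{thm: mixed reg} needs all of them downstream (e.g., in Lemma~\ref{lem: topology fact 3} and to feed Lemma~\ref{lem: topology fact 2}).
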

\begin{proof}
We proceed by contradiction. Suppose that $\overline{A}_{1}\bigcap\overline{A}_{2}\bigcap\Zcal^{\circ}=\emptyset$
and $\overline{B}_{1}\bigcap\overline{B}_{2}\bigcap\Zcal^{\circ}=\emptyset$.
Clearly, $\overline{A}_{1}=\{z\in\RR^{K-1}:\ z'\beta_{1}+\beta_{2}\geq0\}$
and $\overline{A}_{2}=\{z\in\RR^{K-1}:\ z'b_{1}+b_{2}\leq0\}$ as
well as $\overline{B}_{1}=\{z\in\RR^{K-1}:\ z'\beta_{1}+\beta_{2}\leq0\}$
and $\overline{B}_{2}=\{z\in\RR^{K-1}:\ z'b_{1}+b_{2}\geq0\}$. 

Since $P(Z\in A_{1})$ and $P(Z\in A_{2})$ are strictly positive,
$A_{1}$ and $A_{2}$ are both non-empty. Then there exist $\delta_{1},\delta_{2}>0$
and $z_{1},z_{2}\in\Zcal$ such that $z_{1}'\beta_{1}+\beta_{2}=\delta_{1}$
and $z_{2}'b_{1}+b_{2}=-\delta_{2}$. Since $\Zcal$ is convex and
$\Zcal^{\circ}$ is non-empty, we have that the interior of $\Zcal$
is dense in $\overline{\Zcal}$, i.e., $\overline{\Zcal^{\circ}}=\overline{\Zcal}$;
see Lemma 5.28 of \citet{aliprantis2006infinite}. By definition,
$\Zcal$ is closed ($\overline{\Zcal}=\Zcal$), which means that $\overline{\Zcal^{\circ}}=\Zcal$.
Thus, we can approximate $z_{1}$ and $z_{2}$ by points in $\Zcal^{\circ}$.
Hence, we can change $\delta_{1},\delta_{2}>0$ slightly such that
$z_{1}'\beta_{1}+\beta_{2}=\delta_{1}$ and $z_{2}'b_{1}+b_{2}=-\delta_{2}$
with $z_{1},z_{2}\in\Zcal^{\circ}$.

By $\overline{A}_{1}\bigcap\overline{A}_{2}\bigcap\Zcal^{\circ}=\emptyset$,
we have $z_{1}'b_{1}+b_{2}>0$; since $z_{1}\in\overline{A}_{1}\bigcap\Zcal^{\circ}$,
$z_{1}'b_{1}+b_{2}\leq0$ would imply $z_{1}\in\overline{A}_{1}\bigcap\overline{A}_{2}\bigcap\Zcal^{\circ}$
and contradict $\overline{A}_{1}\bigcap\overline{A}_{2}\bigcap\Zcal^{\circ}=\emptyset$.
Similarly, $\overline{A}_{1}\bigcap\overline{A}_{2}\bigcap\Zcal^{\circ}=\emptyset$
implies $z_{2}'\beta_{1}+\beta_{2}<0$; since $z_{2}\in\overline{A}_{2}\bigcap\Zcal^{\circ}$,
$z_{2}'\beta_{1}+\beta_{2}\geq0$ would imply $z_{2}\in\overline{A}_{1}\bigcap\overline{A}_{2}\bigcap\Zcal^{\circ}$.

Now define $z_{*}(\lambda)=\lambda z_{1}+(1-\lambda)z_{2}$ for $\lambda\in(0,1)$.
Since $\Zcal$ is convex, we have that $\Zcal^{\circ}$ is also convex
by Lemma 5.27 of \citet{aliprantis2006infinite}. Thus, $z_{*}(\lambda)\in\Zcal^{\circ}$
for any $\lambda\in(0,1)$. By $\overline{A}_{1}\bigcap\overline{A}_{2}\bigcap\Zcal^{\circ}=\emptyset$
and $\overline{B}_{1}\bigcap\overline{B}_{2}\bigcap\Zcal^{\circ}=\emptyset$,
we have that $z_{*}(\lambda)\notin A_{1}\bigcap A_{2}$ and $z_{*}(\lambda)\notin B_{1}\bigcap B_{2}$
for any $\lambda\in(0,1)$. This means that $z_{*}(\lambda)'\beta_{1}+\beta_{2}$
and $z_{*}(\lambda)'b_{1}+b_{2}$ have the same sign. In other words,
we have that $f(\lambda)\geq0$ for any $\lambda\in(0,1)$, where
\[
f(\lambda)=\left[z_{*}(\lambda)'\beta_{1}+\beta_{2}\right]\cdot\left[z_{*}(\lambda)'b_{1}+b_{2}\right].
\]

By straight-forward algebra using $z_{1}'\beta_{1}+\beta_{2}=\delta_{1}$
and $z_{2}'b_{1}+b_{2}=-\delta_{2}$, we have 
\begin{equation}
z_{*}(\lambda)'\beta_{1}+\beta_{2}=\lambda\delta_{1}+(1-\lambda)\left(z_{2}'\beta_{1}+\beta_{2}\right)=\left[\delta_{1}-\left(z_{2}'\beta_{1}+\beta_{2}\right)\right]\lambda+\left(z_{2}'\beta_{1}+\beta_{2}\right)\label{eq: quad eq 1}
\end{equation}
and 
\begin{equation}
z_{*}(\lambda)'b_{1}+b_{2}=\lambda\left(z_{1}'b_{1}+b_{2}\right)-(1-\lambda)\delta_{2}=\left[\left(z_{1}'b_{1}+b_{2}\right)+\delta_{2}\right]\lambda-\delta_{2}.\label{eq: quad eq 2}
\end{equation}

Hence, $f(\lambda)$ is a quadratic function of $\lambda$. By $\delta_{1},\delta_{2}>0$,
$z_{1}'b_{1}+b_{2}>0$ and $z_{2}'\beta_{1}+\beta_{2}<0$, the coefficient
of the $\lambda^{2}$ term in $f(\lambda)$ satisfies
\[
\left[\delta_{1}-\left(z_{2}'\beta_{1}+\beta_{2}\right)\right]\cdot\left[\left(z_{1}'b_{1}+b_{2}\right)+\delta_{2}\right]>\delta_{1}\delta_{2}>0.
\]

Therefore, $f(\cdot)$ is strictly convex. By (\ref{eq: quad eq 1})
and (\ref{eq: quad eq 2}), we notice that $f(\lambda_{1})=f(\lambda_{2})=0$,
where 
\[
\lambda_{1}=-\frac{z_{2}'\beta_{1}+\beta_{2}}{\delta_{1}-\left(z_{2}'\beta_{1}+\beta_{2}\right)}\qquad{\rm and}\qquad\lambda_{2}=\frac{\delta_{2}}{\left(z_{1}'b_{1}+b_{2}\right)+\delta_{2}}.
\]

Since $\delta_{1},\delta_{2}>0$, $z_{1}'b_{1}+b_{2}>0$ and $z_{2}'\beta_{1}+\beta_{2}<0$,
we have $\lambda_{1}\in(0,1)$ and $\lambda_{2}\in(0,1)$. We notice
that we must have $\lambda_{1}=\lambda_{2}$. Otherwise, there would
exist $\lambda$ that is strictly between $\lambda_{1}$ and $\lambda_{2}$,
i.e., $\min\{\lambda_{1},\lambda_{2}\}<\lambda<\max\{\lambda_{1},\lambda_{2}\}$.
Since $f(\cdot)$ is strictly convex and $f(\lambda_{1})=f(\lambda_{2})=0$,
we have that $f(\lambda)<0$ for any $\lambda$ that is strictly between
$\lambda_{1}$ and $\lambda_{2}$, contradicting $f(\lambda)\geq0$
for all $\lambda\in(0,1)$. Hence, we must have that $\lambda_{1}=\lambda_{2}$. 

Let $\lambda_{0}$ denote this common value of $\lambda_{1}=\lambda_{2}$.
Then $z_{*}(\lambda_{0})'\beta_{1}+\beta_{2}=0$ and $z_{*}(\lambda_{0})'b_{1}+b_{2}=0$
from (\ref{eq: quad eq 1}) and (\ref{eq: quad eq 2}). Therefore,
we have found $z_{0}:=z_{*}(\lambda_{0})\in\Zcal^{\circ}$ such that
$z_{0}'\beta_{1}+\beta_{2}=0$ and $z_{0}'b_{1}+b_{2}=0$. In other
words, $z_{0}\in\overline{A}_{1}\bigcap\overline{A}_{2}\bigcap\Zcal^{\circ}$
and $z_{0}\in\overline{B}_{1}\bigcap\overline{B}_{2}\bigcap\Zcal^{\circ}$.
This contradicts the assumption that $\overline{A}_{1}\bigcap\overline{A}_{2}\bigcap\Zcal^{\circ}=\emptyset$
and $\overline{B}_{1}\bigcap\overline{B}_{2}\bigcap\Zcal^{\circ}=\emptyset$. 
\end{proof}
\begin{proof}[\textbf{Proof of Theorem \ref{thm: mixed reg}}]
By Lemma \ref{lem: manski lem 1}, Assumption \ref{assu: sign saturation}
(i.e., $P\left(E(Y_{1}-Y_{0}\mid X)>0\right)>0$ and $P\left(E(Y_{1}-Y_{0}\mid X)<0\right)>0$)
implies that $P(Z'\beta_{1}+\beta_{2}>0)>0$ and $P(Z'\beta_{1}+\beta_{2}<0)>0$. 

The rest of the proof proceeds in two steps, in which we prove the
``if'' and the ``only if'' parts, respectively.

\textbf{Step 1:} show that if $R(b)=0$, then $b=\mu\beta$ for some
$\mu>0$. 

We proceed by contradiction. Suppose that $R(b)=0$ and $b\neq\mu\beta$
for any $\mu>0$. Since $R(b)=0$, we have that ${\rm sgn}(E(Y_{1}-Y_{0}\mid X))={\rm sgn}(W'b)$
with probability one. By Lemma \ref{lem: manski lem 1}, ${\rm sgn}(E(Y_{1}-Y_{0}\mid X))={\rm sgn}(W'\beta)$
with probability one. Hence, 
\[
P(W'b<0)=P(E(Y_{1}-Y_{0}\mid X)<0)=P(W'\beta<0)
\]
and 
\[
P(W'b>0)=P(E(Y_{1}-Y_{0}\mid X)>0)=P(W'\beta>0).
\]

Recall that we have already proved $P(Z'\beta_{1}+\beta_{2}>0)>0$
and $P(Z'\beta_{1}+\beta_{2}<0)>0$, i.e., $P(Z\in A_{1})>0$ and
$P(Z\in B_{1})>0$. Hence, the above two displays imply that both
$P(W'b<0)=P(W'\beta<0)=P(Z\in B_{1})$ and $P(W'b>0)=P(W'\beta>0)=P(Z\in A_{1})$
are strictly positive, i.e., $P(Z\in A_{2})>0$ and $P(Z\in B_{2})>0$. 

Since both $P(Z\in A_{1})$ and $P(Z\in A_{2})$ are strictly positive,
Lemma \ref{lem: topology fact 4} implies that at least one of $\overline{A}_{1}\bigcap\overline{A}_{2}\bigcap\Zcal^{\circ}$
and $\overline{B}_{1}\bigcap\overline{B}_{2}\bigcap\Zcal^{\circ}$
is non-empty. We discuss theses two cases separately.

Suppose that $\overline{A}_{1}\bigcap\overline{A}_{2}\bigcap\Zcal^{\circ}$
is non-empty. By the openness of $A_{1}$ and $A_{2}$, $A_{1}\bigcap A_{2}$
is open. We have seen that $P(Z\in A_{1})$, $P(Z\in A_{2})$, $P(Z\in B_{1})$
and $P(Z\in B_{2})$ are all strictly positive. By Lemma \ref{lem: topology fact 3},
we also have that $\overline{A_{1}\bigcap A_{2}}=\overline{A}_{1}\bigcap\overline{A}_{2}$.
Thus, $\overline{A_{1}\bigcap A_{2}}\bigcap\Zcal^{\circ}=\overline{A}_{1}\bigcap\overline{A}_{2}\bigcap\Zcal^{\circ}\neq\emptyset$.
Hence, we apply Lemma \ref{lem: topology fact 2} with $B=A_{1}\bigcap A_{2}$
and obtain $P(Z\in A_{1}\bigcap A_{2})>0$. By an analogous argument,
we have $P(Z\in B_{1}\bigcap B_{2})>0$ if $\overline{B}_{1}\bigcap\overline{B}_{2}\bigcap\Zcal^{\circ}$
is non-empty. 

Therefore, we have proved that at least one of $P(Z\in A_{1}\bigcap A_{2})$
and $P(Z\in B_{1}\bigcap B_{2})$ is strictly positive. Since $R(b)\geq P(A_{1}\bigcap A_{2})+P(B_{1}\bigcap B_{2})$,
this means that $R(b)>0$. This contradicts $R(b)=0$. Therefore,
we must have that $b=\mu\beta$ for some $\mu>0$. 

\textbf{Step 2:} show that if $b=\mu\beta$ for some $\mu>0$, then
$R(b)=0$.

Since $b=\mu\beta$ with $\mu>0$, ${\rm sgn}(W'b)={\rm sgn}(W'\beta)$
almost surely. By Lemma \ref{lem: manski lem 1}, ${\rm sgn}(E(Y_{1}-Y_{0}\mid X))={\rm sgn}(W'\beta)$
almost surely. Thus, $R(b)=0$.
\end{proof}

\subsection{Proof of Corollary \ref{cor: only cont X}}
\begin{proof}[\textbf{Proof of Corollary \ref{cor: only cont X}}]
Let $d=\dim(X_{t})$. We view this case as a special case of Theorem
\ref{thm: mixed reg} once we treat $\tilde{X}_{t}=(X_{t}',\oneb\{t=0\})'$
as the covariate, where the vector of coefficients is $\tilde{\beta}=(\beta',0)'$.
Since the support of $X_{1}-X_{0}$ contains an open neighborhood
of zero, we can restrict the observations to those such that the support
of $X_{1}-X_{0}$ is an open neighborhood of zero. Moreover, the sign
saturation condition holds because the sign of $E(Y_{1}-Y_{0}\mid X)$
is the same as the sign of $(X_{1}-X_{0})'\beta$. By Theorem \ref{thm: mixed reg},
the vector $(\beta',0)'$ is identified up to scale. Therefore, $\beta$
is identified up to scale. 
\end{proof}

\subsection{Proof of Theorem \ref{thm: discrete cov ID}}
\begin{proof}[\textbf{Proof of Theorem \ref{thm: discrete cov ID}}]
We partition $\beta=(\beta_{(1)}',\beta_{(2)}')'$ with $\beta_{(1)}\in\RR^{K_{1}}$
and $\beta_{(2)}\in\RR^{K_{2}}$. Then $W'\beta=D'\beta_{(1)}+Z'\beta_{(2)}$
and conditional on $D=d_{j}$, we have $W'\beta=d_{j}'\beta_{(1)}+Z'\beta_{(2)}$.
Thus, we are back to the situation in which $W=(1,Z')'$ and the parameter
is $(d_{j}'\beta_{(1)},\beta_{(2)}')'$. By the argument of Theorem
\ref{thm: mixed reg}, we can identify $(d_{j}'\beta_{(1)},\beta_{(2)}')'$
up to scale. Clearly, $\beta_{(2)}\neq0$; otherwise the sign saturation
condition on $D=d_{j}$ would fail. By normalizing $\|\beta_{(2)}\|_{2}=1$,
we identify $d_{j}'\beta_{(1)}$ for each $j\in\{1,...,K_{1}\}$.
Since $(d_{1}'\beta_{(1)},...,d_{K_{1}}'\beta_{(1)})'=(d_{1},...,d_{K_{1}})'\beta_{(1)}$
and $(d_{1},...,d_{K_{1}})'$ has rank $K_{1}$, we identify $\beta_{(1)}$. 
\end{proof}

\subsection{Proof of Theorem \ref{thm: neccessity part 1}}

In the rest of the appendix, we will use the $p(\cdot,\cdot)$ and
$\Acal(\cdot)$ notations introduced in Definition \ref{def: chamb alter}.
Recall that for any $t\in\RR$, $\Acal(t)={\rm conv}\{p(t,\alpha):\alpha\in\RR\}$
and 
\[
p(t,\alpha)=\begin{pmatrix}F(\alpha)\\
F(t+\alpha)\\
F(\alpha)\cdot F(t+\alpha)
\end{pmatrix}.
\]

We will also repeatedly use the notation $G(a)=\ln\frac{F(a)}{1-F(a)}$
and $\dot{G}(t)=dG(t)/dt$. Throughout the rest of the appendix, we
assume that $F(\cdot)$ is strictly increasing on $\RR$, which means
that $G(\cdot)$ is also strictly increasing on $\RR$. 
\begin{lem}
\label{lem: farkas} Let $s,t\in\RR$. Suppose that $\sup_{\alpha\in\RR}v'p(s,\alpha)>\inf_{\alpha\in\RR}v'p(t,\alpha)$
for any $v\neq(0,0,0)'$. Then $\Acal(s)\bigcap\Acal(t)\neq\emptyset$. 
\end{lem}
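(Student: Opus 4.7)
The plan is to argue by contradiction via the separating hyperplane theorem for disjoint convex sets in $\RR^{3}$. Suppose that $\Acal(s)\bigcap\Acal(t)=\emptyset$. Since both $\Acal(s)$ and $\Acal(t)$ are convex subsets of $\RR^{3}$, the standard separation theorem for disjoint convex sets in finite-dimensional Euclidean space (which does not require closedness; see for instance Theorem 11.3 of \citet{rockafellar1970convex}) produces a nonzero $v\in\RR^{3}$ and a scalar $c\in\RR$ such that $v'x\leq c$ for every $x\in\Acal(s)$ and $v'y\geq c$ for every $y\in\Acal(t)$.

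The next step is to translate this hyperplane separation back into a statement about the generators $p(\cdot,\alpha)$. Using the elementary fact that a linear functional on a convex hull attains the same supremum and infimum as on the generating set, I get $\sup_{\alpha\in\RR}v'p(s,\alpha)=\sup_{x\in\Acal(s)}v'x\leq c$ and $\inf_{\alpha\in\RR}v'p(t,\alpha)=\inf_{y\in\Acal(t)}v'y\geq c$. Chaining these two inequalities yields $\sup_{\alpha\in\RR}v'p(s,\alpha)\leq c\leq\inf_{\alpha\in\RR}v'p(t,\alpha)$, which directly contradicts the lemma's hypothesis that the strict inequality $\sup_{\alpha\in\RR}v'p(s,\alpha)>\inf_{\alpha\in\RR}v'p(t,\alpha)$ holds for every nonzero $v$. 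Hence $\Acal(s)\bigcap\Acal(t)\neq\emptyset$.

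No step poses a real obstacle: once one sees that the hypothesis has been written exactly in the form produced by a weak separating hyperplane, the argument collapses to a single invocation of convex separation. The only point requiring mild care is the choice of the separation theorem: since $\Acal(s)$ and $\Acal(t)$ are convex hulls of images of the unbounded map $\alpha\mapsto p(\cdot,\alpha)$ and need not be closed or compact, I would invoke the version valid for arbitrary disjoint convex sets in $\RR^{n}$, which provides only weak separation, but that is all that is required here. The label ``Farkas'' merely signals that the tool is of the same convex-separation family as Farkas's lemma; no conic structure is used.
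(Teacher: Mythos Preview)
Your proof is correct and follows essentially the same route as the paper: contradiction via a separating hyperplane for the two disjoint convex sets $\Acal(s)$ and $\Acal(t)$, followed by translating the separation back to the generators $p(\cdot,\alpha)$. The only difference is cosmetic: the paper rewrites the separation in terms of integrals against probability measures and then runs an explicit $\varepsilon/4$ argument with near-optimizing point masses, whereas you invoke directly that a linear functional has the same supremum and infimum over a convex hull as over its generating set, which makes the contradiction immediate.
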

\begin{proof}
We proceed by contradiction. Suppose that $\Acal(s)\bigcap\Acal(t)=\emptyset$.
Notice that both $\Acal(s)$ and $\Acal(t)$ are convex and non-empty.
By the separating hyperplane theorem (e.g., Theorem 7.30 of \citet{aliprantis2006infinite}),
there exists $v_{0}\neq(0,0,0)'$ and $r\in\RR$ such that 
\[
v_{0}'a\leq r\qquad\forall a\in\Acal(s)
\]
and 
\[
v_{0}'b\geq r\qquad\forall b\in\Acal(t).
\]

In other words, 
\[
\int v_{0}'p(s,\alpha)d\pi_{1}(\alpha)\leq r\qquad\forall\pi_{1}\in\Pi
\]
and 
\[
\int v_{0}'p(t,\alpha)d\pi_{2}(\alpha)\geq r\qquad\forall\pi_{2}\in\Pi.
\]

By the assumption and $v_{0}\neq0$, we have that $\sup_{\alpha}v_{0}'p(s,\alpha)>\inf_{\alpha}v_{0}'p(t,\alpha)$.
Thus, there exists $\varepsilon>0$ such that 
\begin{equation}
\sup_{\alpha}v_{0}'p(s,\alpha)>\varepsilon+\inf_{\alpha}v_{0}'p(t,\alpha).\label{eq: lem farkas eq 3}
\end{equation}

We now choose $\alpha_{*}$ and $\xi_{*}$ such that $v_{0}'p(s,\alpha_{*})\geq\sup_{\alpha}v_{0}'p(s,\alpha)-\varepsilon/4$
and $v_{0}'p(t,\xi_{*})\leq\inf_{\xi}v_{0}'p(t,\xi)+\varepsilon/4$.
Choose $\pi_{1,*}$ to be the probability measure that puts all the
mass on $\alpha_{*}$. Similarly, choose $\pi_{2,*}$ to be the probability
measure that puts all the mass on $\xi_{*}$. Then $v_{0}'p(s,\alpha_{*})=\int v_{0}'p(s,\alpha)d\pi_{1,*}(\alpha)\leq r$
and $v_{0}'p(t,\xi_{*})=\int v_{0}'p(t,\alpha)d\pi_{2,*}(\alpha)\geq r$.
Hence, 
\begin{equation}
v_{0}'p(t,\xi_{*})\geq v_{0}'p(s,\alpha_{*}).\label{eq: lem farkas eq 4}
\end{equation}

By (\ref{eq: lem farkas eq 3}) and the definitions of $\alpha_{*}$
and $\xi_{*}$, we have 
\[
v_{0}'p(s,\alpha_{*})+\varepsilon/4\geq\sup_{\alpha}v_{0}'p(s,\alpha)>\varepsilon+\inf_{\alpha}v_{0}'p(t,\alpha)\geq\varepsilon+v_{0}'p(t,\xi_{*})-\varepsilon/4.
\]

This means that $v_{0}'p(s,\alpha_{*})\geq\varepsilon/2+v_{0}'p(t,\xi_{*})$.
Since $\varepsilon>0$, this contradicts (\ref{eq: lem farkas eq 4}).
\end{proof}
\begin{lem}
\label{lem: key ID}Suppose that $s>t$. Then for any $v\in\RR^{3}$,
$\sup_{\alpha\in\RR}v'p(s,\alpha)\geq\inf_{\alpha\in\RR}v'p(t,\alpha)$. 
\end{lem}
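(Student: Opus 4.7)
The plan is to exploit the fact that the limits of $\alpha \mapsto v'p(t,\alpha)$ as $\alpha \to \pm \infty$ do not depend on $t$. Writing out
\[
v'p(t,\alpha) = v_1 F(\alpha) + v_2 F(t+\alpha) + v_3 F(\alpha) F(t+\alpha),
\]
and using that $F$ is a c.d.f.\ on $\RR$ so that $F(-\infty) = 0$ and $F(+\infty) = 1$, both $F(\alpha)$ and $F(t+\alpha)$ (and hence their product) tend simultaneously to $0$ as $\alpha \to -\infty$ and to $1$ as $\alpha \to +\infty$. Consequently,
\[
\lim_{\alpha \to -\infty} v'p(t,\alpha) = 0, \qquad \lim_{\alpha \to +\infty} v'p(t,\alpha) = v_1 + v_2 + v_3,
\]
and both limits are the same for every $t \in \RR$.

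First I would apply these two limits to the function at $t=s$: since $\alpha \mapsto v'p(s,\alpha)$ takes values arbitrarily close to both $0$ and $v_1 + v_2 + v_3$, its supremum must dominate each of them, so
\[
\sup_{\alpha \in \RR} v'p(s,\alpha) \geq \max\{0,\ v_1 + v_2 + v_3\}.
\]
Second, I would apply the same limits to the function at $t$ to obtain the mirror-image inequality
\[
\inf_{\alpha \in \RR} v'p(t,\alpha) \leq \min\{0,\ v_1 + v_2 + v_3\}.
\]
The proof then closes by chaining these two bounds through the trivial observation $\max\{0, v_1 + v_2 + v_3\} \geq \min\{0, v_1 + v_2 + v_3\}$.

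I expect essentially no obstacle in executing this plan; the argument is a soft one that uses only the boundary behavior of $F$ at $\pm\infty$ and no further regularity. It is worth noting that the hypothesis $s > t$ plays no role whatsoever in the argument above — the limits at $\pm \infty$ coincide for every $t \in \RR$, and the same reasoning shows $\sup_{\alpha} v'p(s,\alpha) \geq \inf_{\alpha} v'p(t,\alpha)$ for arbitrary $s, t \in \RR$. I suspect the assumption $s > t$ is included in anticipation of downstream use (in combination with Lemma \ref{lem: farkas}, which demands a \emph{strict} inequality for every nonzero $v$ in order to conclude $\Acal(s) \cap \Acal(t) \neq \emptyset$); upgrading the weak inequality proved here to the required strict one is presumably where the condition $s > t$ will actually enter.
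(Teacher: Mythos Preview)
Your argument is correct and is in fact cleaner than the paper's. The paper argues by contradiction: assuming $\sup_{\alpha}v'p(s,\alpha)+\varepsilon<\inf_{\xi}v'p(t,\xi)$ for some $\varepsilon>0$, it sets $\delta=s-t>0$, specializes to $\xi=\alpha$ to obtain
\[
(v_{2}+v_{3}F(\alpha))\,[F(t+\delta+\alpha)-F(t+\alpha)]\leq-\varepsilon,
\]
divides by the (strictly positive, since $\delta>0$) increment $F(t+\delta+\alpha)-F(t+\alpha)$, and sends $\alpha\to-\infty$ so that the right-hand side tends to $-\infty$, forcing $v_{2}=-\infty$. That route genuinely uses $s>t$; yours does not, because the two boundary values $0$ and $v_{1}+v_{2}+v_{3}$ of $\alpha\mapsto v'p(t,\alpha)$ are indeed independent of $t$, so the chain $\sup\geq\max\geq\min\geq\inf$ goes through for arbitrary $s,t$. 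Your observation that the hypothesis $s>t$ is superfluous for this lemma is therefore accurate; in the paper it only becomes essential in the subsequent Lemma~\ref{lem: ID part 1}, where equality $\sup_{\alpha}v'p(s,\alpha)=\inf_{\alpha}v'p(t,\alpha)$ is analyzed and the sign of $s-t$ is used repeatedly. The paper's proof has the minor virtue of previewing the ``set $\xi=\alpha$ and take a limit'' maneuver used there, but your argument is shorter, avoids a spurious hypothesis, and requires nothing beyond $F(-\infty)=0$, $F(+\infty)=1$.
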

\begin{proof}
Suppose not. Then there exist $v\in\RR^{3}$ and $\varepsilon>0$
such that 
\[
\varepsilon+\sup_{\alpha\in\RR}v'p(s,\alpha)<\inf_{\xi\in\RR}v'p(t,\xi).
\]

Let $\delta=s-t$. By assumption, $\delta>0$. The above display implies
\[
v'[p(t+\delta,\alpha)-p(t,\xi)]\leq-\varepsilon\qquad\forall(\alpha,\xi)\in\RR^{2}.
\]

Let $v=(v_{1},v_{2},v_{3})'$. Then for any $\alpha,\xi\in\RR$, 
\begin{equation}
v_{1}[F(\alpha)-F(\xi)]+v_{2}[F(t+\delta+\alpha)-F(t+\xi)]+v_{3}[F(\alpha)F(t+\delta+\alpha)-F(\xi)F(t+\xi)]\leq-\varepsilon.\label{eq: lem key ID eq 4}
\end{equation}

Taking $\xi=\alpha$, we obtain that for any $\alpha\in\RR$, 
\[
v_{2}[F(t+\delta+\alpha)-F(t+\alpha)]+v_{3}F(\alpha)[F(t+\delta+\alpha)-F(t+\alpha)]\leq-\varepsilon.
\]

This means that for any $\alpha\in\RR$, 
\[
v_{2}\leq-v_{3}F(\alpha)-\frac{\varepsilon}{F(t+\delta+\alpha)-F(t+\alpha)}.
\]

Notice that 
\begin{multline*}
\inf_{\alpha\in\RR}\left[-v_{3}F(\alpha)-\frac{\varepsilon}{F(t+\delta+\alpha)-F(t+\alpha)}\right]\leq\inf_{\alpha\in\RR}\left[|v_{3}|-\frac{\varepsilon}{F(t+\delta+\alpha)-F(t+\alpha)}\right]\\
\leq\lim_{\alpha\rightarrow-\infty}\left[|v_{3}|-\frac{\varepsilon}{F(t+\delta+\alpha)-F(t+\alpha)}\right]=-\infty.
\end{multline*}

Thus, $v_{2}\leq-\infty$. This contradicts $v\in\RR^{3}$.
\end{proof}
\begin{lem}
\label{lem: ID part 1}Assume that $s>t>0$. If $\sup_{\alpha\in\RR}v'p(s,\alpha)=\inf_{\alpha\in\RR}v'p(t,\alpha)$
for some $v\neq(0,0,0)'$, then 
\[
\inf_{\alpha\in\RR}[G(s+\alpha)-G(\alpha)]\geq\sup_{\xi\in\RR}[G(t+\xi)-G(\xi)].
\]
\end{lem}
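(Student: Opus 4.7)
The plan is to exploit the boundary behavior of $F$ and a few well-chosen substitutions to convert the equality hypothesis on $v'p$ into a uniform one-sided bound on the odds ratio $T(u,w) := w(1-u)/[u(1-w)]$, and then to upgrade this weak inequality to the strict separation in the conclusion. First, sending $\alpha, \xi \to \pm\infty$ in the hypothesis $v'p(s,\alpha) \leq M \leq v'p(t,\xi)$ (with $M$ the common value) and using $F(\pm\infty) \in \{0,1\}$ forces $M=0$ and $v_1+v_2+v_3=0$. Second, evaluating the combined inequality $v'[p(s,\alpha)-p(t,\xi)] \leq 0$ at the two convenient choices $\xi=\alpha$ and $\xi=\alpha+(s-t)$, and using strict monotonicity of $F$, produces $v_2+v_3F(\alpha) \leq 0$ and $v_1+v_3F(s+\alpha) \geq 0$ uniformly in $\alpha$; taking boundary limits then yields $v_1, v_1+v_3 \geq 0$ and $v_2, v_2+v_3 \leq 0$. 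A direct substitution check rules out each degenerate configuration ($v_1=0$, $v_3=0$, and $v_1+v_3=0$ each render one of $v'p(s,\cdot)$, $v'p(t,\cdot)$ strictly sign-definite, which is incompatible with $\sup = 0 = \inf$), leaving $v_1>0$, $v_3<0$, and $v_1+v_3>0$.

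I then introduce the threshold $K := v_1/(v_1+v_3) > 1$ together with the curve $\phi(u) := v_1 u /[v_1+v_3(1-u)]$. A short algebraic check identifies $\phi$ with the zero level set of $h(u,w) := v_1 u + v_2 w + v_3 uw$ and shows $T(u,\phi(u)) = K$ identically. Since $h$ is strictly decreasing in $w$ (as $v_2+v_3u<0$) while $T$ is strictly increasing in $w$, the inequalities $v'p(s,\alpha) \leq 0$ and $v'p(t,\xi) \geq 0$ translate to $\psi_s(u) \geq \phi(u)$ and $\psi_t(u) \leq \phi(u)$, where $\psi_r(u) := F(r+F^{-1}(u))$. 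Applying $T$ and taking logarithms gives the weak bounds $G(s+\alpha)-G(\alpha) \geq \ln K$ and $G(t+\xi)-G(\xi) \leq \ln K$ for all $\alpha, \xi$, i.e., $\inf g_s \geq \ln K \geq \sup g_t$ with $g_r(\alpha) := G(r+\alpha)-G(\alpha)$.

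The delicate part, which I expect to be the main obstacle, is promoting these two weak bounds to a strict separation. The plan is to argue by contradiction: assume $\inf g_s = \sup g_t$, so that both values equal $\ln K$ and both saturations occur. Extract sequences $\alpha_n, \xi_n$ along which $T(F(\alpha_n), F(s+\alpha_n)) \to K$ and $T(F(\xi_n), F(t+\xi_n)) \to K$, and split by whether the underlying arguments $F(\alpha_n)$ and $F(\xi_n)$ accumulate at an interior point of $(0,1)$ or at an endpoint. In the interior subcase, continuity of $F$ (using $F \in C^1$ from Theorem \ref{thm: neccessity part 1}) produces interior touching points at which $\psi_s$ and $\psi_t$ are tangent to $\phi$; the first-order conditions from the interior extrema of $v'p(s,\cdot)$ and $v'p(t,\cdot)$, combined with the strict pointwise ordering $\psi_s > \psi_t$ forced by $s>t$, yield an algebraic relation that is inconsistent with $K > 1$. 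In the boundary subcase, saturation transfers to the tail ratios $\lim_{\alpha \to -\infty} F(s+\alpha)/F(\alpha)$ and $\lim_{\alpha \to -\infty} F(t+\alpha)/F(\alpha)$ (or their upper-tail analogues), which would have to equal $K$ simultaneously; this is incompatible with $F(s+\alpha)>F(t+\alpha)$ together with the sign conditions on $v$. Handling these two modes uniformly, and ensuring each leads to a contradiction without extra assumptions on the tails of $F$, is where the proof requires the most care.
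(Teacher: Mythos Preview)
Your derivation up to the weak inequality $\inf g_s \geq \ln K \geq \sup g_t$ is essentially the paper's argument in different packaging. The paper normalizes $v_3\in\{-1,0,1\}$, eliminates $v_3=0$ and $v_3=1$ by ad hoc substitutions, and then with $v=(v_1,1-v_1,-1)'$ obtains
\[
\sup_\alpha\frac{F(s+\alpha)[1-F(\alpha)]}{F(s+\alpha)-F(\alpha)}\le v_1\le\inf_\xi\frac{F(t+\xi)[1-F(\xi)]}{F(t+\xi)-F(\xi)},
\]
which is algebraically the same bound as your odds-ratio inequality (write $w(1-u)/(w-u)=1/(1-1/T(u,w))$). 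Your framing via the curve $\phi$ and the monotonicity of $h$ in $w$ is tidier, but the substance coincides. One small gap: ruling out only $v_1=0$, $v_3=0$ and $v_1+v_3=0$ does not by itself force $v_3<0$; the case $v_3>0$ must be excluded separately. It is easy --- evaluate $v'p(t,\xi)\geq 0$ at $\xi=0$ with $v_2=-v_1-v_3$ to get $v_1[F(0)-F(t)]+v_3 F(t)[F(0)-1]\geq 0$, impossible when $v_1>0$ and $v_3>0$ --- but the paper does treat this case explicitly (its Step~2), and your outline skips it.

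The substantive divergence is your ``delicate part.'' The paper's own proof stops at the weak inequality: its final line reads $\inf_\alpha[G(t+\delta+\alpha)-G(\alpha)]\ge\sup_\xi[G(t+\xi)-G(\xi)]$, with $\geq$ rather than the strict $>$ appearing in the lemma's statement. Downstream this is harmless, since Lemma~\ref{lem: ID part 2} only assumes the non-strict inequality, and that is all Theorem~\ref{thm: neccessity part 1} uses. So you are attempting something strictly stronger than what the paper actually establishes or needs. Your plan for the upgrade is not convincing as written: in the interior subcase the tangency points of $\psi_s$ and $\psi_t$ against $\phi$ need not occur at the same $u$, so the pointwise ordering $\psi_s>\psi_t$ does not directly contradict the two separate first-order conditions; and in the boundary subcase you invoke tail limits such as $\lim_{\alpha\to-\infty}F(s+\alpha)/F(\alpha)$, which need not exist under the lemma's hypotheses (the $C^1$ assumption you borrow from Theorem~\ref{thm: neccessity part 1} is not part of the lemma, and in any case does not control these ratios). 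If your aim is to match the paper, stop at $\geq$; if you want the strict version, the argument needs substantially more than the current sketch.
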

\begin{proof}
Let $v=(v_{1},v_{2},v_{3})'$. Without loss of generality, we can
normalize and assume $v_{3}\in\{0,1,-1\}$. Notice that $\lim_{\alpha\rightarrow\infty}v'p(t,\alpha)=0$
and $\lim_{\alpha\rightarrow\infty}v'p(s,\alpha)=0$. Thus, $\inf_{\alpha\in\RR}v'p(t,\alpha)\leq0\leq\sup_{\alpha\in\RR}v'p(s,\alpha)$.
By $\sup_{\alpha\in\RR}v'p(s,\alpha)=\inf_{\alpha\in\RR}v'p(t,\alpha)$,
it follows that 
\[
\sup_{\alpha\in\RR}v'p(s,\alpha)=\inf_{\alpha\in\RR}v'p(t,\alpha)=0.
\]

Let $\delta=s-t$. By assumption, $\delta>0$ and $t>0$. Then for
any $\alpha,\xi\in\RR$, $v'p(s,\alpha)\leq0\leq v'p(t,\xi)$. In
other words, for any $\alpha,\xi\in\RR$,
\begin{multline}
v_{1}F(\alpha)+v_{2}F(t+\delta+\alpha)+v_{3}F(\alpha)F(t+\delta+\alpha)\leq0\\
\leq v_{1}F(\xi)+v_{2}F(t+\xi)+v_{3}F(\xi)F(t+\xi).\label{eq: lem ID part 1 eq 3}
\end{multline}

We now proceed in three steps. 

\textbf{Step 1:} rule out $v_{3}=0$.

Suppose that $v_{3}=0$. By (\ref{eq: lem ID part 1 eq 3}), we have
that for any $\alpha,\xi\in\RR$,

\begin{equation}
v_{1}[F(\alpha)-F(\xi)]+v_{2}[F(t+\delta+\alpha)-F(t+\xi)]\leq0.\label{eq: lem ID part 1 eq 4}
\end{equation}

Take $\alpha=\xi-\delta$ and obtain $v_{1}[F(\xi-\delta)-F(\xi)]\leq0$
for any $\xi\in\RR$. By $\delta>0$, this means that $v_{1}\geq0$. 

It is easy to verify $v_{1}+v_{2}=0$: taking $\alpha=-\infty$ in
(\ref{eq: lem ID part 1 eq 4}), we have that $-v_{1}F(\xi)-v_{2}F(t+\xi)\leq0$
for any $\xi\in\RR$, which means that $v_{1}+v_{2}\geq0$ (take $\xi\rightarrow\infty$);
taking $\alpha=\infty$ in (\ref{eq: lem ID part 1 eq 4}), we have
that $v_{1}(1-F(\xi))+v_{2}(1-F(t+\xi))\leq0$ for any $\xi\in\RR$,
which means that $v_{1}+v_{2}\leq0$ (take $\xi\rightarrow-\infty$). 

Hence, $v=(v_{1},-v_{1},0)$. By $v\neq(0,0,0)'$ and $v_{1}\geq0$,
we have $v_{1}>0$. By (\ref{eq: lem ID part 1 eq 4}), we have that
for any $\alpha,\xi\in\RR$,
\[
v_{1}[F(\alpha)-F(\xi)]-v_{1}[F(t+\delta+\alpha)-F(t+\xi)]\leq0.
\]

Since $v_{1}>0$, it follows that for any $\alpha,\xi\in\RR$,
\[
F(\alpha)-F(\xi)\leq F(t+\delta+\alpha)-F(t+\xi).
\]

Taking $\alpha=\infty$, we have $F(\xi)\geq F(t+\xi)$, which is
impossible because $t>0$. Thus, $v_{3}\neq0$. 

\textbf{Step 2:} rule out $v_{3}=1$.

Suppose that $v_{3}=1$. Then (\ref{eq: lem ID part 1 eq 3}) implies
that for any $\alpha,\xi\in\RR$, 
\begin{equation}
v_{1}[F(\alpha)-F(\xi)]+v_{2}[F(t+\delta+\alpha)-F(t+\xi)]+F(\alpha)F(t+\delta+\alpha)-F(\xi)F(t+\xi)\leq0.\label{eq: lem ID part 1 eq 5}
\end{equation}

Taking $\alpha\rightarrow\infty$ and $\xi\rightarrow-\infty$, we
have $v_{1}+v_{2}+1\leq0$. Taking $\alpha\rightarrow-\infty$ and
$\xi\rightarrow\infty$, we have $-v_{1}-v_{2}-1\leq0$, which means
$v_{1}+v_{2}+1\geq0$. Hence, $v_{1}+v_{2}+1=0$.

Take $\alpha=\xi-\delta$ in (\ref{eq: lem ID part 1 eq 5}) and obtain
that for any $\xi\in\RR$,
\[
v_{1}[F(\xi-\delta)-F(\xi)]+[F(\xi-\delta)-F(\xi)]F(t+\xi)\leq0.
\]

Since $\delta>0$, we have $v_{1}+F(t+\xi)\geq0$ for any $\xi\in\RR$.
Taking $\xi\rightarrow-\infty$, we have $v_{1}\geq0$. 

By $v_{3}=1$ and $v_{1}+v_{2}+1=0$, the right-hand side of (\ref{eq: lem ID part 1 eq 3})
implies that for any $\xi\in\RR$, 
\[
v_{1}F(\xi)+(-1-v_{1})F(t+\xi)+F(\xi)F(t+\xi)\geq0.
\]

Taking $\xi=0$, we have $[F(0)-1]F(t)\geq v_{1}[F(t)-F(0)]$. Since
$t>0$, we have 
\[
v_{1}\leq\frac{[F(0)-1]F(t)}{F(t)-F(0)}<0.
\]

This contradicts $v_{1}\geq0$. Thus, $v_{3}\neq1$. 

\textbf{Step 3:} show the final result.

From the previous two steps, we ruled out $v_{3}\in\{0,1\}$. Hence,
$v_{3}=-1$. Therefore, (\ref{eq: lem ID part 1 eq 3}) implies that
 for any $\alpha,\xi\in\RR$, 
\begin{equation}
[v_{1}-F(t+\delta+\alpha)][F(\alpha)-F(\xi)]+[v_{2}-F(\xi)][F(t+\delta+\alpha)-F(t+\xi)]\leq0\label{eq: lem ID part 1 eq 6}
\end{equation}

Taking $\alpha=\xi-\delta$ gives $v_{1}-F(t+\xi)\geq0$ (due to $\delta>0$)
for any $\xi\in\RR$, which means that $v_{1}\geq1$. 

By the right-hand side of (\ref{eq: lem ID part 1 eq 3}) and $v_{3}=-1$,
$F(\xi)v_{1}+v_{2}F(t+\xi)-F(\xi)F(t+\xi)\geq0$ for any $\xi\in\RR$.
Taking $\xi=\infty$ gives $v_{1}+v_{2}\geq1$. On the other hand,
the left-hand side of (\ref{eq: lem ID part 1 eq 3}) and $v_{3}=-1$
yield $v_{1}F(\alpha)+v_{2}F(t+\delta+\alpha)-F(\alpha)F(t+\delta+\alpha)\leq0$
for any $\alpha\in\RR$, which (by taking $\alpha\rightarrow\infty$)
yields $v_{1}+v_{2}\leq1$. Hence, $v_{1}+v_{2}=1$, which means that
$v_{2}=1-v_{1}$ and $v=(v_{1},1-v_{1},-1)'$. 

Using the left-hand side of (\ref{eq: lem ID part 1 eq 3}) and $v=(v_{1},1-v_{1},-1)'$,
we have that for any $\alpha\in\RR$, 
\[
v_{1}F(\alpha)+(1-v_{1})F(t+\delta+\alpha)-F(\alpha)F(t+\delta+\alpha)\leq0.
\]

Since $F(t+\delta+\alpha)>F(\alpha)$ (due to $t+\delta>0$), we have
that for any $\alpha\in\RR$,
\[
v_{1}\geq\frac{F(t+\delta+\alpha)[1-F(\alpha)]}{F(t+\delta+\alpha)-F(\alpha)}.
\]

Hence, 
\begin{equation}
v_{1}\geq\sup_{\alpha\in\RR}\frac{F(t+\delta+\alpha)[1-F(\alpha)]}{F(t+\delta+\alpha)-F(\alpha)}.\label{eq: lem ID part 1 eq 7}
\end{equation}

Similarly, the right-hand side of (\ref{eq: lem ID part 1 eq 3})
and $v=(v_{1},1-v_{1},-1)'$, we have that for any $\xi\in\RR$, $v_{1}F(\xi)+(1-v_{1})F(t+\xi)-F(\xi)F(t+\xi)\geq0$.
Hence, 
\begin{equation}
v_{1}\leq\inf_{\xi\in\RR}\frac{F(t+\xi)[1-F(\xi)]}{F(t+\xi)-F(\xi)}.\label{eq: lem ID part 1 eq 8}
\end{equation}

Combining (\ref{eq: lem ID part 1 eq 7}) and (\ref{eq: lem ID part 1 eq 8}),
we have
\begin{equation}
\sup_{\alpha\in\RR}\frac{F(t+\delta+\alpha)[1-F(\alpha)]}{F(t+\delta+\alpha)-F(\alpha)}\leq\inf_{\xi\in\RR}\frac{F(t+\xi)[1-F(\xi)]}{F(t+\xi)-F(\xi)}.\label{eq: lem ID part 1 eq 9}
\end{equation}

Recall $G(a)=\ln\frac{F(a)}{1-F(a)}$. Then with straight-forward
algebra, (\ref{eq: lem ID part 1 eq 9}) becomes $\inf_{\alpha\in\RR}\left[G(t+\delta+\alpha)-G(\alpha)\right]\geq\sup_{\xi\in\RR}\left[G(t+\xi)-G(\xi)\right]$.
\end{proof}
\begin{lem}
\label{lem: ID part 2} Let $T\subset(0,\infty)$ be a compact set.
Assume that $G(\cdot)$ is continuously differentiable. Suppose that
for any $\delta>0$, there exists $t\in T$ such that 
\[
\inf_{\alpha\in\RR}[G(t+\delta+\alpha)-G(\alpha)]\geq\sup_{\xi\in\RR}[G(t+\xi)-G(\xi)].
\]

Then $\dot{G}(\cdot)$ is a periodic function with a period in $T$.
In other words, there exists $t\in T$ such that $\dot{G}(t+\xi)=\dot{G}(\xi)$
for any $\xi\in\RR$. 
\end{lem}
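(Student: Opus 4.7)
The plan is to extract a limit $t^* \in T$ from the sequence of $t$'s provided by the hypothesis as $\delta \downarrow 0$, and show that the map $\xi \mapsto G(t^* + \xi) - G(\xi)$ is necessarily constant. Differentiating then yields $\dot{G}(t^* + \xi) = \dot{G}(\xi)$ for all $\xi$, which is the periodicity claim (and $t^* \in T \subset (0,\infty)$ compact ensures the period is strictly positive).

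To make this precise, I would introduce the auxiliary functions $M(t) = \sup_{\xi \in \RR}[G(t+\xi) - G(\xi)]$ and $m(t) = \inf_{\alpha \in \RR}[G(t+\alpha) - G(\alpha)]$, so that the hypothesis becomes: for every $\delta > 0$ there exists $t_\delta \in T$ with $m(t_\delta + \delta) \geq M(t_\delta)$, while trivially $m \leq M$ pointwise. Since $G$ is continuous, for each fixed $\xi$ the map $t \mapsto G(t+\xi) - G(\xi)$ is continuous, so $M$ is lower semi-continuous as a pointwise supremum of continuous functions, and $m$ is upper semi-continuous as a pointwise infimum. Specializing to $\delta_n = 1/n$ yields $t_n \in T$ with $m(t_n + 1/n) \geq M(t_n)$, and by compactness of $T$ I can pass to a subsequence with $t_n \to t^* \in T$; then $t_n + 1/n \to t^*$ as well.

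Taking liminf and limsup in the inequality and invoking the two semi-continuity properties yields
\[
m(t^*) \geq \limsup_n m(t_n + 1/n) \geq \liminf_n m(t_n + 1/n) \geq \liminf_n M(t_n) \geq M(t^*),
\]
so combined with $m(t^*) \leq M(t^*)$ one gets $m(t^*) = M(t^*)$. Since for any specific $\xi \in \RR$ the quantity $G(t^* + \xi) - G(\xi)$ is a real number sandwiched between $m(t^*)$ and $M(t^*)$, the common value is finite and the function $\xi \mapsto G(t^* + \xi) - G(\xi)$ is constant on $\RR$. Differentiating this constant identity using the $C^1$ hypothesis on $G$ produces $\dot{G}(t^* + \xi) = \dot{G}(\xi)$ for every $\xi$, and $t^* > 0$ since $T$ is compact in $(0,\infty)$.

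The main obstacle I expect is keeping track of the correct direction of semi-continuity: $M$ must be LSC and $m$ must be USC, and it is essential that $m$ is evaluated at the shifted argument $t_n + 1/n$ while $M$ is evaluated at the unshifted $t_n$, so that both limits can be compared to the common value at $t^*$. A secondary check is that the common value $m(t^*) = M(t^*)$ cannot be $\pm\infty$, which is automatic because $G(t^* + \xi) - G(\xi)$ is real-valued for any fixed $\xi$, so the sup and inf are trapped between those real numbers once they coincide.
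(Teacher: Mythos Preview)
Your argument is correct and is cleaner than the paper's. Both proofs select $\delta_n=1/n$, extract a convergent subsequence $t_n\to t_*\in T$ by compactness, and conclude that $\xi\mapsto G(t_*+\xi)-G(\xi)$ is constant, hence $\dot G$ has period $t_*$. The difference is in how the constancy is obtained. You invoke the abstract fact that a pointwise supremum of continuous functions is lower semi-continuous and a pointwise infimum is upper semi-continuous, applied to $M(t)=\sup_\xi[G(t+\xi)-G(\xi)]$ and $m(t)=\inf_\alpha[G(t+\alpha)-G(\alpha)]$; one pass to the limit in $m(t_n+1/n)\geq M(t_n)$ then forces $m(t_*)=M(t_*)$, which is a finite real number since it is sandwiched by $G(t_*+\xi_0)-G(\xi_0)$ for any fixed $\xi_0$. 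The paper instead avoids semi-continuity entirely: it decomposes $G(t+\delta+\alpha)-G(\alpha)=r_t(\delta+\alpha)+r_\delta(\alpha)$, restricts attention to a fixed compact window $|\xi|\leq K$, and uses uniform continuity of $G$ on compacts to show $q_n(K):=c(t_n)-\inf_{|\xi|\leq K}r_{t_n}(\xi)\to 0$; it must then track carefully that the subsequence $t_n$ does not depend on $K$, so that the constancy on each $[-K,K]$ upgrades to constancy on $\RR$. Your route sidesteps that bookkeeping at the price of appealing to the (standard) semi-continuity of envelopes; the paper's route is more elementary but longer.
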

\begin{proof}
For any $b\in\RR$, define $r_{b}(\cdot)$ by $r_{b}(a)=G(b+a)-G(a)$.
We also define $c(b)=\sup_{\xi\in\RR}r_{b}(\xi)$. Let $K>0$ be arbitrary.
Then by assumption, for any $\delta>0$, there exists $t\in T$ (depending
only on $\delta$) such that $\inf_{\alpha\in\RR}[r_{t}(\delta+\alpha)+r_{\delta}(\alpha)]\geq\sup_{\xi\in\RR}r_{t}(\xi)=c(t)$,
which means that $\inf_{\alpha+\delta\in[-K,K]}[r_{t}(\delta+\alpha)+r_{\delta}(\alpha)]\geq c(t)$
and thus
\[
c(t)\leq\inf_{|\delta+\alpha|\leq K}r_{t}(\delta+\alpha)+\sup_{|\delta+\alpha|\leq K}r_{\delta}(\alpha)=\inf_{|\xi|\leq K}r_{t}(\xi)+\sup_{|\delta+\alpha|\leq K}r_{\delta}(\alpha).
\]

Notice that $\sup_{|\delta+\alpha|\leq K}r_{\delta}(\alpha)=\sup_{|\delta+\alpha|\leq K}[G(\delta+\alpha)-G(\alpha)]=\sup_{|\xi|\leq K}[G(\xi)-G(\xi-\delta)]$.
We have that for any $\delta,K>0$, there exists $t\in T$ depending
only on $\delta$ such that 
\[
\sup_{|\xi|\leq K}[G(\xi)-G(\xi-\delta)]\geq c(t)-\inf_{|\xi|\leq K}r_{t}(\xi)=\sup_{\xi\in\RR}r_{t}(\xi)-\inf_{|\xi|\leq K}r_{t}(\xi)\geq0.
\]

Now we choose $\delta=1/n$. Then for any $n>1$, there exists $t_{n}$
(not depending on $K$) such that 
\[
\sup_{|\xi|\leq K}[G(\xi)-G(\xi-1/n)]\geq c(t_{n})-\inf_{|\xi|\leq K}r_{t_{n}}(\xi)\geq0.
\]

Let $q_{n}(K):=c(t_{n})-\inf_{|\xi|\leq K}r_{t_{n}}(\xi)$. Since
$G(\cdot)$ is continuous, it is uniformly continuous on compact sets
(Heine-Cantor theorem). Hence, $\limsup_{n\rightarrow\infty}\sup_{|\xi|\leq K}[G(\xi)-G(\xi-1/n)]=0$.
The above display implies $\limsup_{n\rightarrow\infty}q_{n}(K)=0$. 

Since $t_{n}$ is in a compact set $T$, there exists a subsequence
$t_{n_{j}}$ and $t_{*}\in T$ such that $t_{n_{j}}\rightarrow t_{*}$
and $t_{*}$ does not depend on $K$. We can further extract a subsequence
that is either increasing or decreasing. With an abuse of notation,
we write $t_{n}$ rather than $t_{n_{j}}$ in the rest of the proof.
We show the following claim in two cases ($t_{n}\uparrow t_{*}$ and
$t_{n}\downarrow t_{*}$): 
\begin{equation}
\sup_{|\xi|\leq K}r_{t_{*}}(\xi)=\inf_{|\xi|\leq K}r_{t_{*}}(\xi).\label{eq: lem ID part 2 eq 2}
\end{equation}

\textbf{Step 1: }show (\ref{eq: lem ID part 2 eq 2}) assuming $t_{n}\uparrow t_{*}$. 

Since $t_{n}\leq t_{*}$, we have $r_{t_{n}}(\xi)\leq r_{t_{*}}(\xi)$
for any $\xi$ and thus
\[
c(t_{n})=q_{n}+\inf_{|\xi|\leq K}r_{t_{n}}(\xi)\leq q_{n}+\inf_{|\xi|\leq K}r_{t_{*}}(\xi).
\]

Hence, by $\limsup_{n\rightarrow\infty}q_{n}(K)=0$, we have
\begin{equation}
\limsup_{n\rightarrow\infty}c(t_{n})\leq\inf_{|\xi|\leq K}r_{t_{*}}(\xi).\label{eq: lem ID part 2 eq 3}
\end{equation}

Let $R=\max_{t\in T}|t|$. Since $T$ is compact, $R$ is bounded.
Notice that $G(\cdot)$ is continuous. Again, by the Heine-Cantor
theorem, it is uniformly continuous on the compact set $[-K-R,K+R]$.
Notice that $t_{n}+\xi\in[-K-R,K+R]$ for any $\xi\in[-K,K]$ (due
to $|t_{n}|\leq R$). Thus, $\sup_{|\xi|\leq K}|r_{t_{n}}(\xi)-r_{t_{*}}(\xi)|=\sup_{|\xi|\leq K}|G(t_{n}+\xi)-G(t_{*}+\xi)|\rightarrow0.$
It follows that 
\[
\limsup_{n\rightarrow\infty}\sup_{|\xi|\leq K}r_{t_{n}}(\xi)=\sup_{|\xi|\leq K}r_{t_{*}}(\xi).
\]

Now by (\ref{eq: lem ID part 2 eq 3}) and the above display, we have
that 
\[
\inf_{|\xi|\leq K}r_{t_{*}}(\xi)\geq\limsup_{n\rightarrow\infty}c(t_{n})=\limsup_{n\rightarrow\infty}\sup_{\xi\in\RR}r_{t_{n}}(\xi)\geq\limsup_{n\rightarrow\infty}\sup_{|\xi|\leq K}r_{t_{n}}(\xi)=\sup_{|\xi|\leq K}r_{t_{*}}(\xi).
\]

On the other hand, we have $\inf_{|\xi|\leq K}r_{t_{*}}(\xi)\leq\sup_{|\xi|\leq K}r_{t_{*}}(\xi)$.
This proves (\ref{eq: lem ID part 2 eq 2}).

\textbf{Step 2: }show (\ref{eq: lem ID part 2 eq 2}) assuming $t_{n}\downarrow t_{*}$. 

Since the argument is similar to Step 1, we only provide an outline
here. By $t_{n}\geq t_{*}$, we have 
\[
q_{n}+\inf_{|\xi|\leq K}r_{t_{n}}(\xi)=c(t_{n})\geq c(t_{*})=\sup_{\xi\in\RR}r_{t_{*}}(\xi)\geq\sup_{|\xi|\leq K}r_{t_{*}}(\xi).
\]

Notice that $r_{t_{*}}(\xi)=G(t_{*}+\xi)-G(\xi)$ and $r_{t_{n}}(\xi)=G(t_{n}+\xi)-G(\xi)$.
Again, by the Heine-Cantor theorem and the continuity of $G(\cdot)$,
we have $\sup_{|\xi|\leq K}|r_{t_{n}}(\xi)-r_{t_{*}}(\xi)|=\sup_{|\xi|\leq K}|G(t_{n}+\xi)-G(t_{*}+\xi)|\rightarrow0$
due to $t_{n}\rightarrow t_{*}$. Thus, 
\[
\liminf_{n\rightarrow\infty}\inf_{|\xi|\leq K}r_{t_{n}}(\xi)=\inf_{|\xi|\leq K}r_{t_{*}}(\xi).
\]

The above two displays and $q_{n}\rightarrow0$ imply $\inf_{|\xi|\leq K}r_{t_{*}}(\xi)\geq\sup_{|\xi|\leq K}r_{t_{*}}(\xi)$.
This gives (\ref{eq: lem ID part 2 eq 2}).

\textbf{Step 3: }show the final result.

Notice that we have verified (\ref{eq: lem ID part 2 eq 2}) for an
arbitrary $K>0$. Thus, $r_{t_{*}}(\cdot)$ is a constant function
on $[-K,K]$ for any $K>0$. Since $t_{*}$ does not depend on $K$,
$r_{t_{*}}(\cdot)$ is a constant function on $\RR$. This means that
the derivative of $r_{t_{*}}(\cdot)$ is zero on $\RR$. Recalling
that $r_{t_{*}}(\xi)=G(t_{*}+\xi)-G(\xi)$, we have $\dot{G}(t_{*}+\xi)=\dot{G}(\xi)$
for any $\xi\in\RR$.
\end{proof}
\begin{proof}[\textbf{Proof of Theorem \ref{thm: neccessity part 1}}]
Fix an arbitrary $\beta=(\beta_{1}',\beta_{2})'\in\Bcal_{+}$. Then
$w'\beta=z'\beta_{1}+\beta_{2}>0$ for any $z\in\Zcal$. Let $\delta>0$
be a number to be determined. Define $b=(\beta_{1}',\beta_{2}+\delta)'$.
Clearly, $w'b=z'\beta_{1}+\beta_{2}+\delta>0$ for any $z\in\Zcal$.
Thus, $b\in\Bcal_{+}$. Define $T$ to be the closure of $\{z'\beta_{1}+\beta_{2}:\ z\in\Zcal\}$.
Since $\Zcal$ is bounded, $T$ is also bound. By the closedness of
$T$, $T$ is also compact. We also notice that $0\notin T$ and thus
$T\subset(0,\infty)$. (To see this, notice that $\inf_{z\in\Zcal}(z',1)'\beta$
is achieved by some $z_{*}\in\Zcal$ (because $\Zcal$ is closed and
bounded). On the other hand, since for any $z\in\Zcal$, $(z',1)'\beta>0$,
we have that $(z_{*}',1)'\beta>0$.)

We now show that there exists $\delta>0$ such that for any $z\in\Zcal$,
$\Acal(w'\beta)\bigcap\Acal(w'b)\neq\emptyset$, where $w=(z',1)'$.
We proceed by contradiction. Suppose that for any $\delta>0$, there
exists $z_{\delta}\in\Zcal$ such that $\Acal(w_{\delta}'b)\bigcap\Acal(w_{\delta}'\beta)=\emptyset$,
where $w_{\delta}=(z_{\delta}',1)'$. By Lemma \ref{lem: farkas},
there exists $v_{\delta}\neq0$ such that $\sup_{\alpha\in\RR}v_{\delta}'p(w_{\delta}'b,\alpha)\leq\inf_{\alpha\in\RR}v_{\delta}'p(w_{\delta}'\beta,\alpha)$.
By Lemma \ref{lem: key ID} and $w_{\delta}'b-w_{\delta}'\beta=\delta>0$,
we have that $\sup_{\alpha\in\RR}v_{\delta}'p(w_{\delta}'b,\alpha)\geq\inf_{\alpha\in\RR}v_{\delta}'p(w_{\delta}'\beta,\alpha)$.
Hence, 
\[
\sup_{\alpha\in\RR}v_{\delta}'p(w_{\delta}'\beta+\delta,\alpha)=\sup_{\alpha\in\RR}v_{\delta}'p(w_{\delta}'b,\alpha)=\inf_{\alpha\in\RR}v_{\delta}'p(w_{\delta}'\beta,\alpha).
\]

By Lemma \ref{lem: ID part 1} (with $t=w_{\delta}'\beta$),
\[
\inf_{\alpha\in\RR}[G(w_{\delta}'\beta+\delta+\alpha)-G(\alpha)]\geq\sup_{\xi\in\RR}[G(w_{\delta}'\beta+\xi)-G(\xi)].
\]

Notice that $w_{\delta}'\beta\in T$. Therefore, we have shown that
for any $\delta>0$, there exists $t_{\delta}\in T$ such that 
\[
\inf_{\alpha\in\RR}[G(t_{\delta}+\delta+\alpha)-G(\alpha)]\geq\sup_{\xi\in\RR}[G(t_{\delta}+\xi)-G(\xi)].
\]

By Lemma \ref{lem: ID part 2}, $\dot{G}$ is a periodic function.
However, this contradicts the assumption that $\dot{G}$ is not a
periodic function. 
\end{proof}

\subsection{Proof of Theorem \ref{thm: ID period fun}}
\begin{lem}
\label{lem: ID part 3}Suppose that $h(\cdot)$ is a continuous and
non-constant function on $\RR$. If $h(\cdot)$ is a periodic function,
then $h(\cdot)$ has a minimal positive period, i.e., the set $\{a>0:\ h(a+x)=h(x)\ \forall x\in\RR\}$
has a smallest element. 
\end{lem}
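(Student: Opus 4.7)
The plan is to show that the set of periods of $h$ together with $0$ forms a closed additive subgroup of $\RR$, then invoke the standard classification of closed subgroups of $\RR$ and use non-constancy to pin down which case occurs.

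First I would define $P := \{a \in \RR : h(a+x) = h(x) \text{ for all } x \in \RR\}$ and verify that $P$ is a subgroup of $(\RR,+)$. Clearly $0 \in P$; if $a \in P$, substituting $x \mapsto x - a$ in the defining relation gives $-a \in P$; and if $a,b \in P$, then $h((a+b)+x) = h(b+x) = h(x)$, so $a+b \in P$. Next I would check that $P$ is closed in $\RR$: if $a_n \in P$ with $a_n \to a$, then the continuity of $h$ yields
\[
h(a+x) \;=\; \lim_{n\to\infty} h(a_n + x) \;=\; \lim_{n\to\infty} h(x) \;=\; h(x)
\]
for every $x \in \RR$, so $a \in P$. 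Since $h$ is periodic by hypothesis, $P$ contains a nonzero element and is therefore a nontrivial closed subgroup of $\RR$.

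Next I would apply (or re-derive in a few lines) the classification of closed subgroups of $\RR$: any nontrivial closed subgroup is either $\RR$ itself or of the form $\eta \ZZ$ for a unique $\eta > 0$. Concretely, set $\eta := \inf\{a \in P : a > 0\} \ge 0$. If $\eta > 0$, closedness of $P$ forces $\eta \in P$, and Euclidean division (writing any $g \in P$ with $g > 0$ as $g = n\eta + r$ with $0 \le r < \eta$) gives $r \in P$ and hence $r = 0$ by the minimality defining $\eta$; so $P = \eta \ZZ$. If instead $\eta = 0$, then selecting periods $a_n \downarrow 0$ and approximating any $x \in \RR$ by an integer multiple of $a_n$ shows $P$ is dense in $\RR$, and closedness gives $P = \RR$.

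Finally, I would rule out the case $P = \RR$ using the assumption that $h$ is non-constant: for any $x \in \RR$, since $-x \in P$ we have
\[
h(x) \;=\; h\bigl((-x) + x\bigr) \;=\; h(0),
\]
so $h$ would be identically $h(0)$, contradicting the hypothesis. Hence $P = \eta \ZZ$ with $\eta > 0$, and this $\eta$ is precisely the smallest element of $\{a > 0 : h(a+x) = h(x)\ \forall x \in \RR\}$. The only non-bookkeeping step is the subgroup classification, but it is a standard short argument; everything else is a definition chase plus one use of continuity. I do not anticipate any serious obstacle.
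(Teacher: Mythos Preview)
Your proof is correct. The paper's argument is organized slightly differently: it works directly with the set $K=\{a>0:\ h(a+x)=h(x)\ \forall x\}$ of positive periods, first showing $\inf K>0$ by the integer-part approximation trick (if $a_n\downarrow 0$ are periods then any two values $h(x),h(y)$ must agree, forcing $h$ constant), and then showing the infimum is attained via continuity. You instead package the same two ingredients---continuity gives closedness, and arbitrarily small periods force constancy---into the standard classification of closed additive subgroups of $\RR$, working with the full period set $P$ (including $0$ and negatives) rather than just the positive periods. The underlying analytic content is identical; your route is a bit more structural and has the mild advantage that the dichotomy $P=\RR$ versus $P=\eta\ZZ$ makes the role of non-constancy transparent, while the paper's version avoids invoking any group-theoretic language and is marginally more self-contained.
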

\begin{proof}
Let $K=\{a>0:\ h(a+x)=h(x)\ \forall x\in\RR\}$. 

We first show that $\inf K>0$. Suppose that this is not true. Then
$\inf K=0$. Thus, there exists $a_{n}\in K$ such that $a_{n}\downarrow0$.
Let $x,y\in\RR$ such that $x<y$. For any $n$, define $J_{n}$ to
be the integer part of $(y-x)/a_{n}$, i.e., $J_{n}$ is the integer
satisfying $J_{n}a_{n}\leq y-x<(J_{n}+1)a_{n}$. Since $a_{n}$ is
a period for $h(\cdot)$ and $J_{n}$ is an integer, $J_{n}a_{n}$
is also a period of $h(\cdot)$. Thus, $h(y)=h(y-x+x)=h(y-x-J_{n}a_{n}+x)$.
Notice that $0\leq y-x-J_{n}a_{n}<a_{n}\rightarrow0$. Thus, $(y-x-J_{n}a_{n})+x\rightarrow x$.
By the continuity of $h(\cdot)$, we have that $h(y-x-J_{n}a_{n}+x)\rightarrow h(x)$.
Hence, we have proved that $h(y)=h(x)$. Since $x,y$ are two arbitrary
numbers with $x<y$, this means that $h(\cdot)$ is a constant function
on $\RR$. This would contradict the assumption that $h(\cdot)$ is
non-constant. Hence, $\inf K>0$. 

Let $\eta=\inf K$. Since $\eta>0$ and $\eta\leq a$ for any $a\in K$,
we only need to show that $\eta\in K$. By definition, there exists
$a_{n}\in K$ with $a_{n}\rightarrow\eta$. For any $x$, we apply
the continuity of $h(\cdot)$ and obtain that $h(x)=h(x+a_{n})\rightarrow h(x+\eta)$,
which means that $h(x)=h(x+\eta)$ and thus $\eta\in K$. Therefore,
$\eta$ is the smallest element of $K$.
\end{proof}
\begin{lem}
\label{lem: ID part 4}Define the function $G(a)=\ln\frac{F(a)}{1-F(a)}$.
Assume that $s>t>0$. If 
\[
\inf_{\alpha\in\RR}[G(s+\alpha)-G(\alpha)]>\sup_{\xi\in\RR}[G(t+\xi)-G(\xi)]>0,
\]
then $\Acal(s)\bigcap\Acal(t)=\emptyset$. 
\end{lem}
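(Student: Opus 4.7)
The plan is to construct a vector $v\in\RR^{3}$ that strictly separates the curves $\{p(s,\alpha):\alpha\in\RR\}$ and $\{p(t,\xi):\xi\in\RR\}$ pointwise, so that no convex combination (or mixture) on one side can equal any on the other. This reverses the logic of Lemma \ref{lem: ID part 1}: there, the existence of a separating $v$ forced the $G$-inequality; here I would start from the strict form of that inequality and build $v$ from it.

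\textbf{Key steps.} The algebraic bridge is the identity
\[
\frac{F(b+a)\bigl(1-F(a)\bigr)}{F(b+a)-F(a)}\;=\;\frac{1}{1-e^{-[G(b+a)-G(a)]}},
\]
valid for any $b>0$ and any $a\in\RR$ (write $F=e^{G}/(1+e^{G})$ and simplify). Because $x\mapsto 1/(1-e^{-x})$ is strictly decreasing on $(0,\infty)$ and the hypothesis forces both $\inf_{\alpha}[G(s+\alpha)-G(\alpha)]$ and $\sup_{\xi}[G(t+\xi)-G(\xi)]$ to be strictly positive, the identity translates the hypothesis into
\[
M_{s}\;:=\;\sup_{\alpha\in\RR}\frac{F(s+\alpha)(1-F(\alpha))}{F(s+\alpha)-F(\alpha)}\;<\;\inf_{\xi\in\RR}\frac{F(t+\xi)(1-F(\xi))}{F(t+\xi)-F(\xi)}\;=:\;m_{t}.
\]
I would then pick any $v_{1}\in(M_{s},m_{t})$ and set $v=(v_{1},\,1-v_{1},\,-1)'\neq 0$. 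A direct rearrangement yields
\[
v'p(s,\alpha)\;=\;\bigl(F(s+\alpha)-F(\alpha)\bigr)\!\left[\frac{F(s+\alpha)(1-F(\alpha))}{F(s+\alpha)-F(\alpha)}-v_{1}\right],
\]
whose prefactor is strictly positive (since $s>0$) and whose bracket is strictly negative (since $v_{1}>M_{s}$), so $v'p(s,\alpha)<0$ for every $\alpha\in\RR$. The symmetric manipulation gives $v'p(t,\xi)>0$ for every $\xi\in\RR$. To finish, I would argue by contradiction: if a common point of $\Acal(s)\cap\Acal(t)$ is written as $\sum_{i}\lambda_{i}p(s,\alpha_{i})=\sum_{j}\mu_{j}p(t,\xi_{j})$, pairing with $v$ produces a strictly negative quantity equal to a strictly positive one, which is impossible.

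\textbf{Main obstacle.} The delicate point is that $v'p(s,\alpha)\to 0$ as $|\alpha|\to\infty$, so the separation is pointwise strict but not uniformly strict. For the finite-combination reading of $\Acal(\cdot)$, this is harmless because each summand contributes strictly. For the mixture reading $\Acal(s)=\{\int p(s,\alpha)\,d\pi(\alpha):\pi\in\Pi\}$, one would need the additional observation that a strictly negative integrand gives a strictly negative integral against any probability measure on $\RR$, which follows from $\sigma$-additivity applied to the level sets $\bigl\{-v'p(s,\cdot)\ge 1/n\bigr\}\uparrow\RR$. The strict positivity $\sup_{\xi}[G(t+\xi)-G(\xi)]>0$ in the hypothesis (rather than mere nonnegativity) is what keeps $m_{t}$ finite and ensures a genuine open interval $(M_{s},m_{t})$ from which to draw $v_{1}$. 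Beyond this measure-theoretic bookkeeping, the argument is a direct reversal of the computation in Lemma \ref{lem: ID part 1}.
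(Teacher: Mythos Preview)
Your proposal is correct and is essentially the paper's own argument: both choose $v=(v_{1},1-v_{1},-1)'$ with $v_{1}$ in the open interval between $\sup_{\alpha}\frac{F(s+\alpha)(1-F(\alpha))}{F(s+\alpha)-F(\alpha)}$ and $\inf_{\xi}\frac{F(t+\xi)(1-F(\xi))}{F(t+\xi)-F(\xi)}$, use the identity linking this ratio to $1/(1-e^{-[G(\cdot+\,\cdot\,)-G(\cdot)]})$ to see the interval is nonempty, and then separate the two convex hulls (the paper invokes Carath\'eodory to reduce to finite sums, matching your ``finite-combination reading''). Your additional remark handling the mixture interpretation via level sets is a harmless extra.
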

\begin{proof}
Let $\delta=s-t$. By $s>t$, $\delta>0$. We proceed in two steps. 

\textbf{Step 1:} show that there exists $v\in\RR^{3}$ such that $v'p(s,\alpha)<0<v'p(t,\xi)$
for any $\alpha,\xi\in\RR$.

Consider $v=(v_{1},1-v_{1},-1)'$. We would like to choose $v_{1}$
such that for any $\alpha,\xi\in\RR$
\begin{equation}
v_{1}F(\alpha)+(1-v_{1})F(s+\alpha)-F(\alpha)F(s+\alpha)<0\label{eq: lem ID part 4 eq 3}
\end{equation}
and 
\begin{equation}
v_{1}F(\xi)+(1-v_{1})F(t+\xi)-F(\xi)F(t+\xi)>0.\label{eq: lem ID part 4 eq 4}
\end{equation}

Since $s>t>0$, it suffices to choose any $v_{1}$ such that 
\[
\sup_{\alpha\in\RR}\frac{F(s+\alpha)[1-F(\alpha)]}{F(s+\alpha)-F(\alpha)}<v_{1}<\inf_{\xi\in\RR}\frac{F(t+\xi)[1-F(\xi)]}{F(t+\xi)-F(\xi)}.
\]

By $F(\cdot)=[\exp(-G(\cdot))+1]^{-1}$, this means that we can choose
any $v_{1}$ such that 
\[
\frac{1}{1-\sup_{\alpha\in\RR}\exp[G(\alpha)-G(s+\alpha)]}<v_{1}<\frac{1}{1-\inf_{\xi\in\RR}\exp[G(\xi)-G(t+\xi)]}.
\]

Since $\inf_{\alpha\in\RR}[G(s+\alpha)-G(\alpha)]>\sup_{\xi\in\RR}[G(t+\xi)-G(\xi)]>0$,
we have $\sup_{\alpha\in\RR}[G(\alpha)-G(s+\alpha)]<\inf_{\xi\in\RR}[G(\xi)-G(t+\xi)]<0$
and thus a choice of $v_{1}$ in the above display is clearly possible.
Therefore, by $v=(v_{1},1-v_{1},-1)'$, we have found $v\in\RR^{3}$
such that (\ref{eq: lem ID part 4 eq 3}) and (\ref{eq: lem ID part 4 eq 4})
hold. 

\textbf{Step 2:} show the final result.

We proceed by contradiction. Suppose that $\Acal(s)\bigcap\Acal(t)\neq\emptyset$.
Then there exists $r\in\Acal(s)\bigcap\Acal(t)$. Since $r\in\Acal(s)\subset\RR^{3}$,
by Caratheodory's theorem (e.g., Theorem 5.32 in \citet{aliprantis2006infinite}),
there exist $\alpha_{1},\alpha_{2},\alpha_{3},\alpha_{4}\in\RR$ and
$\lambda_{1},...,\lambda_{4}\geq0$ such that $r=\sum_{j=1}^{4}\lambda_{j}p(s,\alpha_{j})$
and $\sum_{j=1}^{4}\lambda_{j}=1$. Similarly, $r=\sum_{j=1}^{4}\rho_{j}p(t,\xi_{j})$,
where $\xi_{1},...,\xi_{4}\in\RR$ and $\rho_{1},...,\rho_{4}\geq0$
satisfy $\sum_{j=1}^{4}\rho_{j}=1$. Therefore, for any $v\in\RR^{3}$,
\[
\max_{1\leq j\leq4}p(s,\alpha_{j})'v\geq\sum_{j=1}^{4}\lambda_{j}p(s,\alpha_{j})'v=r'v=\sum_{j=1}^{4}\rho_{j}p(t,\xi_{j})'v\geq\min_{1\leq j\leq4}p(t,\xi_{j})'v.
\]

However, in Step 1, we have shown that there exists $v\neq(0,0,0)'$
such that $v'p(s,\alpha)<0<v'p(t,\xi)$ for any $\alpha,\xi\in\RR$.
This is a contradiction. 
\end{proof}
\begin{lem}
\label{lem: ID part 5} Assume that $s>t>0$. Suppose that $\dot{G}$
is a periodic function with a positive period $\eta>0$. If either
$s/\eta$ or $t/\eta$ is an integer, then there exists $\varepsilon\in(0,(s-t)/4]$
such that 
\[
\inf_{\alpha\in\RR}[G(s-\varepsilon+\alpha)-G(\alpha)]>\sup_{\xi\in\RR}[G(t+\varepsilon+\xi)-G(\xi)]>0.
\]
\end{lem}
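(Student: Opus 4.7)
Define $\phi(h,\alpha) := G(\alpha+h) - G(\alpha) = \int_\alpha^{\alpha+h} \dot G(x)\,dx$. My plan rests on two consequences of $\dot G$ being continuous and $\eta$-periodic: first, $\phi(h,\cdot)$ is itself continuous and $\eta$-periodic in $\alpha$, so it attains its supremum and infimum on $[0,\eta]$; second, when $h = k\eta$ for some positive integer $k$, $\phi(h,\alpha) = kc$ is constant in $\alpha$, where $c := \int_0^\eta \dot G > 0$ (the positivity comes from $G$ being strictly increasing under Assumption \ref{assu: stationary error}). I will also set $M := \sup_x \dot G(x)$, which is finite by continuity and periodicity, and strictly positive (else $c = 0$ and $G$ would be constant).

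The key ingredient is a strict unperturbed gap, extracted from the pointwise inequality
\[
\phi(s,\alpha) - \phi(t,\alpha) = G(\alpha+s) - G(\alpha+t) > 0 \qquad \text{for all } \alpha \in \RR,
\]
which follows from $s > t$ and the strict monotonicity of $G$. I then split into two cases. In Case 1, where $s = k\eta$, $\phi(s,\cdot)$ is the constant $kc$; by periodicity the supremum of $\phi(t,\cdot)$ is attained at some $\xi_0$, yielding a strict gap $\delta_1 := kc - \sup_\xi \phi(t,\xi) = G(\xi_0+s) - G(\xi_0+t) > 0$. In Case 2, where $t = k\eta$, $\phi(t,\cdot)$ is the constant $kc$ and the infimum of $\phi(s,\cdot)$ is attained at some $\alpha_0$, giving $\delta_2 := \inf_\alpha \phi(s,\alpha) - kc = G(\alpha_0+s) - G(\alpha_0+t) > 0$.

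To promote the unperturbed gap to the $\varepsilon$-perturbed inequality, I bound the change caused by shifting $h$ by $\varepsilon$: since $\dot G \leq M$, any integral of $\dot G$ over an interval of length $\varepsilon$ is at most $M\varepsilon$. In Case 1, $\eta$-periodicity of $\dot G$ together with $s = k\eta$ lets me rewrite
\[
\inf_\alpha \phi(s-\varepsilon,\alpha) = kc - \sup_\alpha \int_{\alpha-\varepsilon}^{\alpha} \dot G(y)\,dy \geq kc - M\varepsilon,
\]
while directly $\sup_\xi \phi(t+\varepsilon,\xi) \leq \sup_\xi \phi(t,\xi) + M\varepsilon$. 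Subtracting gives $\inf_\alpha \phi(s-\varepsilon,\alpha) - \sup_\xi \phi(t+\varepsilon,\xi) \geq \delta_1 - 2M\varepsilon$, so the desired strict inequality holds for any $\varepsilon \in (0, \min\{\delta_1/(3M),\, (s-t)/4\}]$. Case 2 is symmetric, with $\delta_1$ replaced by $\delta_2$. The final positivity $\sup_\xi \phi(t+\varepsilon,\xi) > 0$ is automatic since $t+\varepsilon > 0$ and $G$ is strictly increasing.

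The main obstacle is producing the strict gap from a merely pointwise strict inequality: $\phi(s,\alpha) > \phi(t,\alpha)$ for each $\alpha$ does not directly yield $\inf_\alpha \phi(s,\alpha) > \sup_\xi \phi(t,\xi)$, since both quantities range over all of $\RR$. The periodicity of $\phi(h,\cdot)$ rescues this by reducing each side to a continuous function on a compact period, whose extremum is attained; only then does the strict pointwise inequality produce a strictly positive gap that the $O(M\varepsilon)$ perturbation can absorb.
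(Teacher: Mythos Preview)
Your proposal is correct and follows essentially the same architecture as the paper's proof: use the integer hypothesis to make one of $\phi(s,\cdot)$ or $\phi(t,\cdot)$ constant, exploit periodicity so the other extremum is attained on a compact period and hence yields a strictly positive gap $\delta$, then control the $\varepsilon$-perturbation. The only substantive difference is in the perturbation step: the paper appeals to uniform continuity of $G$ on suitable compact intervals to obtain constants $\kappa_j$ with $|G(x)-G(x-\varepsilon)|\le\kappa_j\varepsilon$, whereas you use the global Lipschitz bound $M=\sup_x\dot G(x)$ (finite by continuity plus periodicity) to bound each perturbation integral by $M\varepsilon$. Your route is slightly cleaner and makes the admissible range of $\varepsilon$ explicit; the paper's route avoids assuming continuity of $\dot G$ in the lemma itself (using only continuity of $G$), though in the application (Theorem~\ref{thm: ID period fun}) $\dot G$ is assumed continuous anyway. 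One cosmetic point: your closing paragraph suggests that compactness alone turns the pointwise inequality $\phi(s,\alpha)>\phi(t,\alpha)$ into $\inf_\alpha\phi(s,\alpha)>\sup_\xi\phi(t,\xi)$, but that step genuinely requires the constancy of one side---which you do use correctly in your Case~1/Case~2 arguments.
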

\begin{proof}
Since $\dot{G}(l\cdot\eta+\xi)-\dot{G}(\xi)=0$ for any $\xi\in\RR$
and for any $l\in\ZZ$ ($\ZZ$ denotes the set of all integers), it
follows that $G(l\cdot\eta+\xi)-G(\xi)$ does not depend on $\xi$,
which means that $G(l\cdot\eta+\xi)-G(\xi)=G(l\cdot\eta+0)-G(0)$.
We notice that
\[
G(l\cdot\eta)=G(0)+\int_{0}^{l\cdot\eta}\dot{G}(t)dt=G(0)+\sum_{j=1}^{l}\int_{\eta(j-1)}^{\eta j}\dot{G}(t)dt=G(0)+\sum_{j=1}^{l}\int_{0}^{\eta}\dot{G}(t)dt=lq_{0}+G(0),
\]
where $q_{0}=G(\eta)-G(0)$. Thus, we have that for any $\xi\in\RR$
and for any $l\in\ZZ$,
\begin{equation}
G(l\cdot\eta+\xi)=G(\xi)+lq_{0}.\label{eq: lem ID part 5 eq 3}
\end{equation}

Notice that we can represent all the real numbers as $(k+r)\eta$
with $k\in\ZZ$ and $r\in[0,1)$. Then it suffices to show that for
some $\varepsilon>0$, 
\begin{multline*}
\inf_{(k_{1},r_{1})\in\ZZ\times[0,1)}[G(s-\varepsilon+(k_{1}+r_{1})\eta)-G((k_{1}+r_{1})\eta)]\\
>\sup_{(k_{2},r_{2})\in\ZZ\times[0,1)}[G(t+\varepsilon+(k_{2}+r_{2})\eta)-G((k_{2}+r_{2})\eta)]>0.
\end{multline*}

By (\ref{eq: lem ID part 5 eq 3}), $G(s-\varepsilon+(k_{1}+r_{1})\eta)-G((k_{1}+r_{1})\eta)=G(s-\varepsilon+r_{1}\eta)-G(r_{1}\eta)$,
which does not depend on $k_{1}$; similarly, $G(t+\varepsilon+(k_{2}+r_{2})\eta)-G((k_{2}+r_{2})\eta)=G(t+\varepsilon+r_{2}\eta)-G(r_{2}\eta)$.
Thus, it suffices to show that for some $\varepsilon\in(0,(s-t)/4]$,
\begin{equation}
\inf_{r_{1}\in[0,1)}[G(s-\varepsilon+r_{1}\eta)-G(r_{1}\eta)]>\sup_{r_{2}\in[0,1)}[G(t+\varepsilon+r_{2}\eta)-G(r_{2}\eta)].\label{eq: lem ID part 5 eq 4}
\end{equation}

We now construct $\varepsilon$ that satisfies (\ref{eq: lem ID part 5 eq 4}).
We consider two cases: $s/\eta\in\ZZ$ or $t/\eta\in\ZZ$.

\textbf{Case 1:} suppose $s/\eta\in\ZZ$. Let $k_{1}=s/\eta$. Then
by (\ref{eq: lem ID part 5 eq 3}), we have $G(s+r_{1}\eta)-G(r_{1}\eta)=k_{1}q_{0}$
and $G(t+r_{2}\eta)=G(t-s+r_{2}\eta)+k_{1}q_{0}$. Hence, 
\begin{align*}
 & \inf_{r_{1}\in[0,1)}[G(s+r_{1}\eta)-G(r_{1}\eta)]-\sup_{r_{2}\in[0,1)}[G(t+r_{2}\eta)-G(r_{2}\eta)]\\
 & =k_{1}q_{0}-\sup_{r_{2}\in[0,1)}[G(t-s+r_{2}\eta)+k_{1}q_{0}-G(r_{2}\eta)]\\
 & =\inf_{r_{2}\in[0,1)}[G(r_{2}\eta)-G(r_{2}\eta-(s-t))]\geq\inf_{r_{2}\in[0,1]}[G(r_{2}\eta)-G(r_{2}\eta-(s-t))].
\end{align*}

Since $r_{2}\mapsto G(t+r_{2}\eta)-G(t-s+r_{2}\eta)$ is a continuous
function and $[0,1]$ is a compact set, the above infimum is achieved
at point in $[0,1]$. Since $G$ is strictly increasing and $s-t>0$,
we have that 
\[
\inf_{r_{2}\in[0,1]}[G(r_{2}\eta)-G(r_{2}\eta-(s-t))]>0.
\]

(Otherwise, there would exist $r_{2,*}\in[0,1]$ such that $r_{2,*}\eta=r_{2,*}\eta-(s-t)$.)
Let $\Delta_{1}=\inf_{r_{2}\in[0,1]}[G(r_{2}\eta)-G(r_{2}\eta-(s-t))]$.
We now choose $\varepsilon\in(0,(s-t)/4]$ such that 
\[
\sup_{r_{1}\in[0,1]}[G(r_{1}\eta)-G(r_{1}\eta-\varepsilon)]\leq\Delta_{1}/4
\]
and 
\[
\sup_{r_{2}\in[0,1]}[G(r_{2}\eta-(s-t)+\varepsilon)-G(r_{2}\eta-(s-t))]\leq\Delta_{1}/4
\]

To see that this is possible, notice that $G(\cdot)$ is continuous
and thus is uniformly continuous on the compact set $[-1,\eta]\bigcup[-(s-t),\eta+1-(s-t)]$.
Thus, there exists a constant $\kappa_{1}>0$ such that $\sup_{r_{1}\in[0,1]}[G(r_{1}\eta)-G(r_{1}\eta-\varepsilon)]\leq\kappa_{1}\varepsilon$
and $\sup_{r_{2}\in[0,1]}[G(r_{2}\eta-(s-t)+\varepsilon)-G(r_{2}\eta-(s-t))]\leq\kappa_{1}\varepsilon$.
Hence, we can simply choose $\varepsilon=\min\{(s-t)/4,\Delta_{1}/(4\kappa_{1})\}$.
Therefore, 
\begin{align*}
 & \inf_{r_{1}\in[0,1)}[G(s-\varepsilon+r_{1}\eta)-G(r_{1}\eta)]-\sup_{r_{2}\in[0,1)}[G(t+\varepsilon+r_{2}\eta)-G(r_{2}\eta)]\\
 & =\inf_{r_{1}\in[0,1)}[G(-\varepsilon+r_{1}\eta)+k_{1}q_{0}-G(r_{1}\eta)]-\sup_{r_{2}\in[0,1)}[G(t-s+\varepsilon+r_{2}\eta)+k_{1}q_{0}-G(r_{2}\eta)]\\
 & =\inf_{r_{1}\in[0,1)}[G(-\varepsilon+r_{1}\eta)-G(r_{1}\eta)]-\sup_{r_{2}\in[0,1)}[G(t-s+\varepsilon+r_{2}\eta)-G(r_{2}\eta)]\\
 & =-\sup_{r_{1}\in[0,1)}[G(r_{1}\eta)-G(r_{1}\eta-\varepsilon)]+\inf_{r_{2}\in[0,1)}[G(r_{2}\eta)-G(r_{2}\eta-(s-t)+\varepsilon)]\\
 & \geq-\sup_{r_{1}\in[0,1)}[G(r_{1}\eta)-G(r_{1}\eta-\varepsilon)]+\inf_{r_{2}\in[0,1)}[G(r_{2}\eta)-G(r_{2}\eta-(s-t))]\\
 & \qquad+\inf_{r_{2}\in[0,1)}[G(r_{2}\eta-(s-t))-G(r_{2}\eta-(s-t)+\varepsilon)]\\
 & =-\sup_{r_{1}\in[0,1)}[G(r_{1}\eta)-G(r_{1}\eta-\varepsilon)]+\inf_{r_{2}\in[0,1)}[G(r_{2}\eta)-G(r_{2}\eta-(s-t))]\\
 & \qquad-\sup_{r_{2}\in[0,1)}[G(r_{2}\eta-(s-t)+\varepsilon)-G(r_{2}\eta-(s-t))]\\
 & \overset{\text{(i)}}{\geq}-\Delta_{1}/4+\inf_{r_{2}\in[0,1)}[G(r_{2}\eta)-G(r_{2}\eta-(s-t))]-\Delta_{1}/4\overset{\text{(ii)}}{\geq}-\Delta_{1}/4+\Delta_{1}-\Delta_{1}/4>0,
\end{align*}
where (i) follows by the construction of $\varepsilon$ and (ii) follows
by the definition of $\Delta_{1}$.

\textbf{Case 2:} suppose $t/\eta\in\ZZ$. Let $k_{2}=t/\eta$. Then
by (\ref{eq: lem ID part 5 eq 3}), we have $G(t+r_{2}\eta)-G(r_{2}\eta)=k_{2}q_{0}$
and $G(s+r_{1}\eta)=G(s-t+r_{1}\eta)+k_{2}q_{0}$. Hence, 
\begin{align*}
 & \inf_{r_{1}\in[0,1)}[G(s+r_{1}\eta)-G(r_{1}\eta)]-\sup_{r_{2}\in[0,1)}[G(t+r_{2}\eta)-G(r_{2}\eta)]\\
 & =\inf_{r_{1}\in[0,1)}[G(s-t+r_{1}\eta)+k_{2}q_{0}-G(r_{1}\eta)]-k_{2}q_{0}\\
 & =\inf_{r_{1}\in[0,1)}[G(s-t+r_{1}\eta)-G(r_{1}\eta)]\geq\inf_{r_{1}\in[0,1]}[G(s-t+r_{1}\eta)-G(r_{1}\eta)].
\end{align*}

Since $r_{1}\mapsto G(s-t+r_{1}\eta)-G(r_{1}\eta)$ is a continuous
function and $[0,1]$ is a compact set, the above infimum is achieved
at point in $[0,1]$. Since $G$ is strictly increasing and $s-t>0$,
we have 
\[
\Delta_{2}:=\inf_{r_{1}\in[0,1]}[G(s-t+r_{1}\eta)-G(r_{1}\eta)]>0.
\]

We now choose $\varepsilon\in(0,(s-t)/4]$ such that 
\[
\sup_{r_{2}\in[0,1)}[G(\varepsilon+r_{2}\eta)-G(r_{2}\eta)]\leq\Delta_{2}/4
\]
and 
\[
\sup_{r_{1}\in[0,1]}[G(s-t+r_{1}\eta)-G(s-\varepsilon-t+r_{1}\eta)]\leq\Delta_{2}/4
\]

To see that this is possible, again notice that $G(\cdot)$ is uniformly
continuous on the compact set $[0,\eta+1]\bigcup[(s-t),\eta+1+(s-t)]$.
Thus, there exists a constant $\kappa_{2}>0$ such that $\sup_{r_{2}\in[0,1)}[G(\varepsilon+r_{2}\eta)-G(r_{2}\eta)]\leq\kappa_{2}\varepsilon$
and $\sup_{r_{1}\in[0,1]}[G(s-t+r_{1}\eta)-G(s-\varepsilon-t+r_{1}\eta)]\leq\kappa_{2}\varepsilon$.
Hence, we can simply choose $\varepsilon=\min\{(s-t)/4,\Delta_{2}/(4\kappa_{2})\}$.
Therefore, 
\begin{align*}
 & \inf_{r_{1}\in[0,1)}[G(s-\varepsilon+r_{1}\eta)-G(r_{1}\eta)]-\sup_{r_{2}\in[0,1)}[G(t+\varepsilon+r_{2}\eta)-G(r_{2}\eta)]\\
 & =\inf_{r_{1}\in[0,1)}[G(s-\varepsilon-t+r_{1}\eta)+k_{2}q_{0}-G(r_{1}\eta)]-\sup_{r_{2}\in[0,1)}[G(\varepsilon+r_{2}\eta)+k_{2}q_{0}-G(r_{2}\eta)]\\
 & =\inf_{r_{1}\in[0,1)}[G(s-\varepsilon-t+r_{1}\eta)-G(r_{1}\eta)]-\sup_{r_{2}\in[0,1)}[G(\varepsilon+r_{2}\eta)-G(r_{2}\eta)]\\
 & \geq\inf_{r_{1}\in[0,1)}[G(s-\varepsilon-t+r_{1}\eta)-G(s-t+r_{1}\eta)]+\inf_{r_{1}\in[0,1)}[G(s-t+r_{1}\eta)-G(r_{1}\eta)]\\
 & \quad-\sup_{r_{2}\in[0,1)}[G(\varepsilon+r_{2}\eta)-G(r_{2}\eta)]\\
 & =-\sup_{r_{1}\in[0,1)}[G(s-t+r_{1}\eta)-G(s-\varepsilon-t+r_{1}\eta)]+\inf_{r_{1}\in[0,1)}[G(s-t+r_{1}\eta)-G(r_{1}\eta)]\\
 & \quad-\sup_{r_{2}\in[0,1)}[G(\varepsilon+r_{2}\eta)-G(r_{2}\eta)]\\
 & \overset{\text{(i)}}{\geq}-\Delta_{2}/4+\inf_{r_{1}\in[0,1)}[G(s-t+r_{1}\eta)-G(r_{1}\eta)]-\Delta_{2}/4\overset{\text{(ii)}}{\geq}-\Delta_{2}/4+\Delta_{2}-\Delta_{2}/4>0,
\end{align*}
where (i) follows by the construction of $\varepsilon$ and (ii) follows
by the definition of $\Delta_{2}$.

Thus, we have verified (\ref{eq: lem ID part 5 eq 4}) and thus, $\inf_{\alpha\in\RR}[G(s-\varepsilon+\alpha)-G(\alpha)]>\sup_{\xi\in\RR}[G(t+\varepsilon+\xi)-G(\xi)]$
for some $\varepsilon\in(0,(s-t)/4]$. Clearly $\sup_{\xi\in\RR}[G(t+\varepsilon+\xi)-G(\xi)]\geq G(t+\varepsilon)-G(0)>0$
since $t,\varepsilon>0$ and $G$ is strictly increasing.
\end{proof}
\begin{lem}
\label{lem: ID part 6}Let $\eta>0$. If $\Zcal$ is compact and has
non-empty interior, then there exists an open set $D\subset\Bcal_{+}$
such that for any $d=(d_{1}',d_{2})'\in D$, there exists $z\in\Zcal^{\circ}$
such that $(d_{1}'z+d_{2})/\eta$ is an integer. 
\end{lem}
\begin{proof}
Let $R=\sup_{z\in\Zcal}\|z\|_{2}$. Since $\Zcal$ is compact, $R<\infty$.
Fix any $q=(q_{1}',q_{2})'\in\RR^{K}$ with $q_{2}=\|q_{1}\|_{2}R+1$
and $q_{1}\neq0$. Fix an $z_{0}\in\Zcal^{\circ}$. Then there exists
$\varepsilon>0$ such that $\BB(z_{0},\varepsilon)\subset\Zcal^{\circ}$,
where $\BB(z_{0},\varepsilon):=\{z:\ \|z-z_{0}\|_{2}\leq\varepsilon\}$.
Define $\beta=(\beta_{1}',\beta_{2})'$ with $\beta_{1}=\alpha q_{1}$,
$\beta_{2}=\alpha q_{2}$ and $\alpha=\eta/(z_{0}'q_{1}+q_{2})$.
Clearly, $z'q_{1}+q_{2}\geq-R\|q_{1}\|+q_{2}=1>0$ for any $z\in\Zcal$.
Thus, $\alpha>0$ and $z'\beta_{1}+\beta_{2}\geq\alpha$ for any $z\in\Zcal$.
Hence, $\beta\in\Bcal_{+}$ and $z_{0}'\beta_{1}+\beta_{2}=\eta$. 

Fix any $\delta>0$ such that $\delta\leq\alpha\|q_{1}\|_{2}/(1+\sqrt{\|z_{0}\|_{2}^{2}+1})$
and $\delta\leq\alpha/[2\sqrt{R^{2}+1}]$.

Now consider any $d=(d_{1}',d_{2})'$ such that $\|d-\beta\|_{2}\leq\delta$.
We notice that for any $z\in\Zcal$, $|z'(d_{1}-\beta_{1})+(d_{2}-\beta_{2})|\leq\sqrt{\|z\|_{2}^{2}+1}\|d-\beta\|_{2}\leq\sqrt{R^{2}+1}\delta\leq\alpha/2$.
Thus, for any $z\in\Zcal$, $z'd_{1}+d_{2}\geq z'\beta_{1}+\beta_{2}-\alpha/2\geq\alpha-\alpha/2>0$.
Thus, $d\in\Bcal_{+}$.

We observe that 
\[
\max_{z\in\BB(z_{0},\varepsilon)}d_{1}'z+d_{2}=d_{1}'z_{0}+d_{2}+\varepsilon\max_{\|v\|_{2}\leq1}d_{1}'v=d_{1}'z_{0}+d_{2}+\varepsilon\|d_{1}\|_{2}
\]
and 
\[
\min_{z\in\BB(z_{0},\varepsilon)}d_{1}'z+d_{2}=d_{1}'z_{0}+d_{2}+\varepsilon\min_{\|v\|_{2}\leq1}d_{1}'v=d_{1}'z_{0}+d_{2}-\varepsilon\|d_{1}\|_{2}.
\]

Notice that the maximum and minimum over $\BB(z_{0},\varepsilon)$
are achieved (since $\BB(z_{0},\varepsilon)$ is compact). Hence,
by the intermediate value theorem and the convexity of $\BB(z_{0},\varepsilon)$,
it suffices to show that $d_{1}'z_{0}+d_{2}-\varepsilon\|d_{1}\|_{2}\leq\eta\leq d_{1}'z_{0}+d_{2}+\varepsilon\|d_{1}\|_{2}$.
In other words, it suffices to show that 
\[
|d_{1}'z_{0}+d_{2}-\eta|\leq\varepsilon\|d_{1}\|_{2}.
\]

By $\eta=z_{0}'\beta_{1}+\beta_{2}$, we have $|d_{1}'z_{0}+d_{2}-\eta|=|(d_{1}-\beta_{1})'z_{0}+d_{2}-\beta_{2}|\leq\|(z_{0}',1)\|_{2}\|d-\beta\|_{2}\leq\sqrt{\|z_{0}\|_{2}^{2}+1}\delta$
and $\|d_{1}\|_{2}\geq\|\beta_{1}\|_{2}-\|\beta_{1}-d_{1}\|_{2}\geq\|\beta_{1}\|_{2}-\delta=\alpha\|q_{1}\|_{2}-\delta$.
By construction, $\sqrt{\|z_{0}\|_{2}^{2}+1}\delta\leq\alpha\|q_{1}\|_{2}-\delta$.
The above display holds. Hence, the result holds with $D=\{d:\|d-\beta\|_{2}<\delta\}$. 
\end{proof}
\begin{proof}[\textbf{Proof of Theorem \ref{thm: ID period fun}}]
The first claim follows by Lemma \ref{lem: ID part 3}. The third
claim follows by Lemma \ref{lem: ID part 6} and the second claim.
It remains to prove the second claim. 

We partition $\beta=(\beta_{1}',\beta_{2})'$. By assumption, there
exists $z_{0}\in\Zcal^{\circ}$ such that $k=(z_{0}'\beta_{1}+\beta_{2})/\eta$
is an integer. Clearly $k>0$ since $\beta\in\Bcal_{+}$ (i.e., $z'\beta_{1}+\beta_{2}>0$
for any $z\in\Zcal$). Consider any $b\in\Bcal_{+}$ such that $b=(b_{1}',b_{2})'\neq\beta$.
We proceed in two steps. 

\textbf{Step 1:} show that there exists $z_{*}\in\Zcal^{\circ}$ such
that $(z_{*}',1)\beta\neq(z_{*}',1)b$ and one of $(z_{*}',1)\beta/\eta$
and $(z_{*}',1)b/\eta$ is an integer. 

Suppose that $z_{0}'b_{1}+b_{2}\neq k\eta$. Then we take $z_{*}=z_{0}$
and have $(z_{*}',1)\beta=k\eta\neq(z_{*}',1)b$. Clearly, one of
$(z_{*}',1)\beta/\eta$ and $(z_{*}',1)b/\eta$ is an integer. 

Suppose that $z_{0}'b_{1}+b_{2}=k\eta$. We notice that $\beta_{1}\neq b_{1}$.
To see this, suppose $\beta_{1}=b_{1}$. Then $z_{0}'b_{1}=z_{0}'\beta_{1}$.
Notice that $z_{0}'b_{1}+b_{2}=k\eta=z_{0}'\beta_{1}+\beta_{2}$.
We have $b_{2}=\beta_{2}$. By $\beta_{1}=b_{1}$, this would imply
$\beta=b$, contradicting $\beta\neq b$. Hence, $\beta_{1}\neq b_{1}$
and thus at least one of $\beta_{1}$ and $b_{1}$ is not zero. Then
we only need to discuss three cases.

\uline{Case 1:} suppose $\beta_{1}\neq0$ and $b_{1}\neq0$. Since
$z_{0}\in\Zcal^{\circ}$ and $\beta_{1}\neq0$, $z_{1}=z_{0}+(I-\beta_{1}\beta_{1}/\|\beta_{1}\|_{2}^{2})\Delta\in\Zcal^{\circ}$
when $\|\Delta\|_{2}$ is small enough. We choose $\Delta$ such that
$b_{1}'(I-\beta_{1}\beta_{1}/\|\beta_{1}\|_{2}^{2})\Delta>0$; this
is possible because $b_{1}\neq\beta_{1}$, $\beta_{1}\neq0$ and $b_{1}\neq0$.
Then $z_{1}'\beta_{1}+\beta_{2}=z_{0}'\beta_{1}+\beta_{2}=k\eta$
and $z_{1}'b_{1}+b_{2}=z_{0}'b_{1}+b_{2}+(z_{1}-z_{0})'b_{1}=k\eta+b_{1}'(I-\beta_{1}\beta_{1}/\|\beta_{1}\|_{2}^{2})\Delta>k\eta$.
We now define $t=z_{1}'\beta_{1}+\beta_{2}$ and $s=z_{1}'b_{1}+b_{2}$.
Thus $s=k\eta>t>0$. In other words, we can take $z_{*}=z_{1}$.

\uline{Case 2:} suppose $\beta_{1}\neq0$ and $b_{1}=0$. Since
$z_{0}'b_{1}+b_{2}=k\eta$ and $b_{1}=0$, we have $b_{2}=k\eta$.
Since $\Zcal^{\circ}$ contains an open neighborhood of $z_{0}$ and
$\beta_{1}\neq0$, we can find $z_{1}$ close enough to $z_{0}$ such
that $(z_{1}-z_{0})'\beta_{1}>0$. This means that $z_{1}'\beta_{1}+\beta_{2}=(z_{1}-z_{0})'\beta_{1}+(z_{0}'\beta_{1}+\beta_{2})>k\eta$
and $z_{1}'b_{1}+b_{2}=b_{2}=k\eta$. We now define $s=z_{1}'\beta_{1}+\beta_{2}$
and $t=z_{1}'b_{1}+b_{2}$. Thus, $s>t=k\eta>0$. In other words,
we can take $z_{*}=z_{1}$.

\uline{Case 3:} suppose $\beta_{1}=0$ and $b_{1}\neq0$. The argument
is analogous to the above case. We repeat it here for completeness.
Since $z_{0}'\beta_{1}+\beta_{2}=k\eta$ and $\beta_{1}=0$, we have
$\beta_{2}=k\eta$. Since $\Zcal^{\circ}$ contains an open neighborhood
of $z_{0}$ and $b_{1}\neq0$, we can find $z_{1}$ close enough to
$z_{0}$ such that $(z_{1}-z_{0})'b_{1}>0$. This means that $z_{1}'b_{1}+b_{2}=(z_{1}-z_{0})'b_{1}+(z_{0}'b_{1}+b_{2})>k\eta$
and $z_{1}'\beta_{1}+\beta_{2}=\beta_{2}=k\eta$. We now define $s=z_{1}'b_{1}+b_{2}$
and $t=z_{1}'\beta_{1}+\beta_{2}$. Thus, $s>t=k\eta>0$. In other
words, we can take $z_{*}=z_{1}$.

\textbf{Step 2:} show the final result.

We now prove the result in two cases.

\uline{Case 1:} suppose $(z_{*}',1)\beta>(z_{*}',1)b$. Define
$s=(z_{*}',1)'\beta$ and $t=(z_{*}',1)'b$. Then $s>t>0$ and either
$s/\eta$ or $t/\eta$ is an integer. By Lemma \ref{lem: ID part 5},
there exists $\varepsilon\in(0,(s-t)/4]$ such that 
\[
\inf_{\alpha\in\RR}[G(s-\varepsilon+\alpha)-G(\alpha)]>\sup_{\xi\in\RR}[G(t+\varepsilon+\xi)-G(\xi)]>0.
\]

Let $\BB=\{v\in\RR^{K-1}:\|v\|_{2}<1\}$ be the open unit ball. Notice
that $z_{*}\in\Zcal^{\circ}$. Clearly, there exists $\tau>0$ such
that $z_{*}+\tau\BB\subset\Zcal^{\circ}$, $\tau|v'\beta_{1}|\leq\varepsilon$
and $\tau|v'b_{1}|\leq\varepsilon$ for any $v\in\BB$. Define $\mathcal{Z}_{\tau}=z_{*}+\tau\BB$.
By construction, $\Zcal_{\tau}$ is an open set contained in $\Zcal^{\circ}$.
Notice that for any $z\in\Zcal_{\tau}$, $(z',1)'\beta=(z_{*}',1)'\beta+(z-z_{*})'\beta_{1}\geq(z_{*}',1)'\beta-\varepsilon=s-\varepsilon$
and $(z',1)'b=(z_{*}',1)'b+(z-z_{*})'\beta_{1}\leq(z_{*}',1)'b+\varepsilon=t+\varepsilon$.
Thus, the above display implies that for any $z\in Z_{\tau}$, $(z',1)'\beta\geq s-\varepsilon>t+\varepsilon\geq(z',1)'b$
and thus
\begin{multline*}
\inf_{\alpha\in\RR}[G((z',1)'\beta+\alpha)-G(\alpha)]\geq\inf_{\alpha\in\RR}[G(s-\varepsilon+\alpha)-G(\alpha)]\\
>\sup_{\xi\in\RR}[G(t+\varepsilon+\xi)-G(\xi)]\geq\sup_{\xi\in\RR}[G((z',1)'b+\xi)-G(\xi)]>0.
\end{multline*}

By Lemma \ref{lem: ID part 4}, it follows that for any $z\in\Zcal_{\tau}$,
$\Acal((z',1)'\beta)\bigcap\Acal((z',1)'b)=\emptyset$. 

\uline{Case 2:} suppose $(z_{*}',1)\beta<(z_{*}',1)b$. The argument
is analogous to Case 1 except that we swap the roles of $\beta$ and
$b$. We omit it for simplicity. 

Therefore, we have proved that for any $b\in\Bcal_{+}$ with $b\neq\beta$,
there exists an open set $\Zcal_{b}\subset\Zcal$ such that for any
$z\in\Zcal_{b}$, $\Acal(w'\beta)\bigcap\Acal(w'b)=\emptyset$ with
$w=(z',1)'$. Clearly $\Zcal_{b}\bigcap\Zcal\neq\emptyset$. By Lemma
\ref{lem: topology fact 1}, $P(Z\in\Zcal_{b})>0$. 
\end{proof}

\subsection{Proof of Theorem \ref{thm: neccessity sign}}

In the proof of Theorem \ref{thm: neccessity part 1}, we proved that
any $\beta=(\beta_{1}',\beta_{2})'\in\Bcal_{+}$ is observationally
equivalent to $b=(\beta_{1}',\beta_{2}+\delta)'$ for some $\delta>0$.
Now we pick any $\beta\in\Bcal_{+,0}$. Then $b=(\beta_{1}',\beta_{2}+\delta)'=(\beta_{1}',\delta)\in\Bcal_{+,+}$.
The desired result follows. 

\subsection{Proof of results in Section \ref{sec: test}}
\begin{proof}[\textbf{Proof of Lemma \ref{lem: test}}]
Notice that $\rho(q)=E\oneb\{W'q\geq0\}(Y_{1}-Y_{0})=E\oneb\{W'q\geq0\}\cdot\phi(X)$.
Then by construction $\rho(q)\leq E\oneb\{\phi(X)\geq0\}\cdot\phi(X)$
for any $q$, which means
\[
\sup_{q\in\RR^{K}}\rho(q)\leq E\oneb\{\phi(X)\geq0\}\cdot\phi(X).
\]

On the other hand,
\[
\sup_{q\in\RR^{K}}\rho(q)\geq\rho(\beta)=E\oneb\{W'\beta\geq0\}\cdot\phi(X)\overset{\text{(i)}}{=}E\oneb\{\phi(X)\geq0\}\cdot\phi(X).
\]
where (i) follows by Lemma \ref{lem: manski lem 1}. The above two
displays imply $\sup_{q\in\RR^{K}}\rho(q)=E\oneb\{\phi(X)\geq0\}\cdot\phi(X)=E\max\{\phi(X),0\}$.
The proof of the other claim follows analogously with $\inf_{q}\rho(q)=\rho(-\beta)=E\min\{\phi(X),0\}$. 
\end{proof}
\begin{proof}[\textbf{Proof of Theorem \ref{thm: test}}]
 Define $S_{n}(q)=\sqrt{n}(\hrho_{n}(q)-\rho(q))$. Notice that the
class of mappings $W\mapsto\oneb\{W'q\geq0\}$ indexed by $q$ has
VC-dimension at most $K+2$ (Lemmas 2.6.15 and 2.6.18 of \citet{van1996weak}).
Clearly, the class $\oneb\{W'q\geq0\}(Y_{1}-Y_{0})$ indexed by $q$
is bounded. By Theorem 2.5.2 of \citet{van1996weak}, $S_{n}(\cdot)$
converges weakly to a mean-zero Gaussian process $S_{*}(\cdot)$.
By Theorem 3.6.1 of \citet{van1996weak}, we can approximate the distribution
of $S_{*}$ via the nonparametric bootstrap. 

Of course, we need to check that $S_{*}$ is not degenerate under
$H_{0}$. By Lemma \ref{lem: manski lem 1}, under $H_{0}$, $P(W'\beta\leq0)=1$,
\[
\rho(-\beta)=E\oneb\{W'\beta\leq0\}(Y_{1}-Y_{0})=E(Y_{1}-Y_{0})
\]
and thus
\begin{align*}
E(S_{n}(-\beta))^{2} & =E\oneb\{W'\beta\leq0\}(Y_{1}-Y_{0})^{2}-(\rho(\beta))^{2}\\
 & =E(Y_{1}-Y_{0})^{2}-\left(E(Y_{1}-Y_{0})\right)^{2}={\rm Var}(Y_{1}-Y_{0}).
\end{align*}

Therefore, $\oneb\{W'\beta<0\}(Y_{1}-Y_{0})$ is not degenerate since
$Y_{1}-Y_{0}$ is non-zero variance (due to the assumption of $P(Y_{1}=Y_{0})<1$). 

Let $c_{1-\alpha}^{*}$ be such that $\limsup_{n\rightarrow\infty}P\left(\sup_{q\in\RR^{K}}S_{*}(q)>c_{1-\alpha}^{*}\right)=\alpha$.
If $\tau_{1}\leq-\tau_{2}$, then
\begin{align*}
 & \sqrt{n}\left(\min\{\hat{\tau}_{1},-\hat{\tau}_{2}\}-\min\{\tau_{1},-\tau_{2}\}\right)\\
 & =\sqrt{n}\left(\min\{\hat{\tau}_{1},-\hat{\tau}_{2}\}-\tau_{1}\right)\\
 & =\sqrt{n}\min\{\hat{\tau}_{1}-\tau_{1},\tau_{1}-\hat{\tau}_{2}\}\\
 & \leq\sqrt{n}(\hat{\tau}_{1}-\tau_{1})=\sqrt{n}\left(\sup_{q\in\RR^{K}}\hat{\rho}(q)-\sup_{q\in\RR^{K}}\rho(q)\right)\leq\sqrt{n}\sup_{q\in\RR^{K}}(\hat{\rho}(q)-\rho(q))=\sup_{q\in\RR^{K}}S_{n}(q).
\end{align*}

If $\tau_{1}>-\tau_{2}$, then
\begin{align*}
 & \sqrt{n}\left(\min\{\hat{\tau}_{1},-\hat{\tau}_{2}\}-\min\{\tau_{1},-\tau_{2}\}\right)\\
 & =\sqrt{n}\left(\min\{\hat{\tau}_{1},-\hat{\tau}_{2}\}+\tau_{2}\right)\\
 & =\sqrt{n}\min\{\hat{\tau}_{1}+\tau_{2},\tau_{2}-\hat{\tau}_{2}\}\\
 & \leq\sqrt{n}(\tau_{2}-\hat{\tau}_{2})=\sqrt{n}\left(\inf_{q\in\RR^{K}}\rho(q)-\inf_{q\in\RR^{K}}\hat{\rho}(q)\right)\leq\sqrt{n}\sup_{q\in\RR^{K}}(\rho(q)-\hat{\rho}(q))=\sup_{q\in\RR^{K}}\left(-S_{n}(q)\right).
\end{align*}

Therefore, 
\[
\sqrt{n}\left(\hat{\tau}_{*}-\tau_{*}\right)\leq\left(\sup_{q\in\RR^{K}}S_{n}(q)\right)\cdot\oneb\{\tau_{1}\leq-\tau_{2}\}+\left(\sup_{q\in\RR^{K}}(-S_{n}(q))\right)\cdot\oneb\{\tau_{1}>-\tau_{2}\}.
\]

Therefore, 
\begin{multline*}
P\left(\sqrt{n}\left(\hat{\tau}_{*}-\tau_{*}\right)>c_{1-\alpha}^{*}\right)\\
\leq P\left(\sup_{q\in\RR^{K}}S_{n}(q)>c_{1-\alpha}^{*}\right)\cdot\oneb\{\tau_{1}\leq-\tau_{2}\}+P\left(\sup_{q\in\RR^{K}}(-S_{n}(q))>c_{1-\alpha}^{*}\right)\cdot\oneb\{\tau_{1}>-\tau_{2}\}.
\end{multline*}

Since $S_{n}(\cdot)$ converges weakly to $S_{*}(\cdot)$, $P\left(\sup_{q\in\RR^{K}}S_{n}(q)>c_{1-\alpha}^{*}\right)$
and $P\left(\sup_{q\in\RR^{K}}(-S_{n}(q))>c_{1-\alpha}^{*}\right)$
converge to $\limsup_{n\rightarrow\infty}P\left(\sup_{q\in\RR^{K}}(S_{*}(q))>c_{1-\alpha}^{*}\right)$
and $\limsup_{n\rightarrow\infty}P\left(\sup_{q\in\RR^{K}}(-S_{*}(q))>c_{1-\alpha}^{*}\right)$,
respectively. Both limits are equal to $\alpha$; this is because
$-S_{*}(\cdot)$ and $S_{*}(\cdot)$ have the same distribution. The
result follows since we already proved that $S_{*}(\cdot)$ can be
bootstrapped.%
\end{proof}
\bibliographystyle{apalike}
\bibliography{SC_biblio}

\begin{thebibliography}{}

\bibitem[Aliprantis and Border, 2006]{aliprantis2006infinite}
Aliprantis, C.~D. and Border, K.~C. (2006).
\newblock {\em Infinite Dimensional Analysis: A Hitchhiker's Guide}.
\newblock Springer Science \& Business Media.

\bibitem[Andersen, 1970]{andersen1970asymptotic}
Andersen, E.~B. (1970).
\newblock Asymptotic properties of conditional maximum-likelihood estimators.
\newblock {\em Journal of the Royal Statistical Society: Series B
  (Methodological)}, 32(2):283--301.

\bibitem[Arellano and Bonhomme, 2011]{arellano2011nonlinear}
Arellano, M. and Bonhomme, S. (2011).
\newblock Nonlinear panel data analysis.
\newblock {\em Annu. Rev. Econ.}, 3(1):395--424.

\bibitem[Bonhomme, 2012]{bonhomme2012functional}
Bonhomme, S. (2012).
\newblock Functional differencing.
\newblock {\em Econometrica}, 80(4):1337--1385.

\bibitem[Chamberlain, 1980]{chamberlain1980analysis}
Chamberlain, G. (1980).
\newblock Analysis of covariance with qualitative data.
\newblock {\em The Review of Economic Studies}, 47(1):225--238.

\bibitem[Chamberlain, 2010]{chamberlain2010binary}
Chamberlain, G. (2010).
\newblock Binary response models for panel data: Identification and
  information.
\newblock {\em Econometrica}, 78(1):159--168.

\bibitem[Chernozhukov et~al., 2013]{chernozhukov2013average}
Chernozhukov, V., Fern{\'a}ndez-Val, I., Hahn, J., and Newey, W. (2013).
\newblock Average and quantile effects in nonseparable panel models.
\newblock {\em Econometrica}, 81(2):535--580.

\bibitem[Chernozhukov et~al., 2015]{chernozhukov2015nonparametric}
Chernozhukov, V., Fernandez-Val, I., Hoderlein, S., Holzmann, H., and Newey, W.
  (2015).
\newblock Nonparametric identification in panels using quantiles.
\newblock {\em Journal of Econometrics}, 188(2):378--392.

\bibitem[Chernozhukov et~al., 2019]{chernozhukov2019nonseparable}
Chernozhukov, V., Fern{\'a}ndez-Val, I., and Newey, W.~K. (2019).
\newblock Nonseparable multinomial choice models in cross-section and panel
  data.
\newblock {\em Journal of Econometrics}, 211(1):104--116.

\bibitem[Davezies et~al., 2022]{DaveziesID2021}
Davezies, L., D'Haultf{\oe}uille, X., and Laage, L. (2022).
\newblock Identification and estimation of average marginal effects in fixed
  effects logit models.
\newblock {\em arXiv:2105.00879}.

\bibitem[Davezies et~al., 2023]{Mugnier2009.08108}
Davezies, L., D'Haultf{\oe}uille, X., and Mugnier, M. (2023).
\newblock Fixed-effects binary choice models with three or more periods.
\newblock {\em Quantitative Economics}, 14(3):1105--1132.

\bibitem[Dudley, 2002]{dudley_2002}
Dudley, R.~M. (2002).
\newblock {\em Real Analysis and Probability}.
\newblock Cambridge Studies in Advanced Mathematics. Cambridge University
  Press, 2 edition.

\bibitem[Gao et~al., 2023]{gao2023logical}
Gao, W.~Y., Li, M., and Xu, S. (2023).
\newblock Logical differencing in dyadic network formation models with
  nontransferable utilities.
\newblock {\em Journal of Econometrics}, 235(1):302--324.

\bibitem[Hoderlein and White, 2012]{hoderlein2012nonparametric}
Hoderlein, S. and White, H. (2012).
\newblock Nonparametric identification in nonseparable panel data models with
  generalized fixed effects.
\newblock {\em Journal of Econometrics}, 168(2):300--314.

\bibitem[Honor{\'e} and Tamer, 2006]{honore2006bounds}
Honor{\'e}, B.~E. and Tamer, E. (2006).
\newblock Bounds on parameters in panel dynamic discrete choice models.
\newblock {\em Econometrica}, 74(3):611--629.

\bibitem[Horowitz, 2009]{horowitz2009semiparametric}
Horowitz, J.~L. (2009).
\newblock {\em Semiparametric and nonparametric methods in econometrics},
  volume~12.
\newblock Springer.

\bibitem[Kim and Pollard, 1990]{kim1990cube}
Kim, J. and Pollard, D. (1990).
\newblock Cube root asymptotics.
\newblock {\em The Annals of Statistics}, pages 191--219.

\bibitem[Komarova, 2013]{komarova2013binary}
Komarova, T. (2013).
\newblock Binary choice models with discrete regressors: Identification and
  misspecification.
\newblock {\em Journal of Econometrics}, 177(1):14--33.

\bibitem[Leeb and P{\"o}tscher, 2005]{Leeb2005}
Leeb, H. and P{\"o}tscher, B.~M. (2005).
\newblock Model selection and inference: Facts and fiction.
\newblock {\em Econometric Theory}, 21(01):21--59.

\bibitem[Liu et~al., 2024]{liu2105.12891}
Liu, L., Poirier, A., and Shiu, J.-L. (2024).
\newblock Identification and estimation of partial effects in nonlinear
  semiparametric panel models.
\newblock {\em Journal of Econometrics: forthcoming}.

\bibitem[Manski, 1987]{manski1987semiparametric}
Manski, C.~F. (1987).
\newblock Semiparametric analysis of random effects linear models from binary
  panel data.
\newblock {\em Econometrica}, 55:357--362.

\bibitem[Pakes and Porter, 2024]{pakes2024moment}
Pakes, A. and Porter, J. (2024).
\newblock Moment inequalities for multinomial choice with fixed effects.
\newblock {\em Quantitative Economics}, 15(1):1--25.

\bibitem[Rasch, 1960]{rasch1960probabilistic}
Rasch, G. (1960).
\newblock Probabilistic models for some intelligence tests and attainment
  tests.
\newblock {\em Danish Institute for Educational Research, Copenhagen, Denmark}.

\bibitem[Rockafellar, 1970]{rockafellar1970convex}
Rockafellar, R.~T. (1970).
\newblock {\em Convex Analysis}, volume~36.
\newblock Princeton University Press.

\bibitem[Salem, 1943]{salem1943some}
Salem, R. (1943).
\newblock On some singular monotonic functions which are strictly increasing.
\newblock {\em Transactions of the American Mathematical Society},
  53(3):427--439.

\bibitem[Seo and Otsu, 2018]{seo2018local}
Seo, M.~H. and Otsu, T. (2018).
\newblock Local m-estimation with discontinuous criterion for dependent and
  limited observations.
\newblock {\em The Annals of Statistics}, 46(1):344--369.

\bibitem[Shi et~al., 2018]{shi2018estimating}
Shi, X., Shum, M., and Song, W. (2018).
\newblock Estimating semi-parametric panel multinomial choice models using
  cyclic monotonicity.
\newblock {\em Econometrica}, 86(2):737--761.

\bibitem[Stock and Yogo, 2002]{stock2002testing}
Stock, J.~H. and Yogo, M. (2002).
\newblock Testing for weak instruments in linear iv regression.
\newblock {\em NBER Working Paper}, (t0284).

\bibitem[Tak{\'a}cs, 1978]{takacs1978increasing}
Tak{\'a}cs, L. (1978).
\newblock An increasing continuous singular function.
\newblock {\em The American Mathematical Monthly}, 85(1):35--37.

\bibitem[Toth, 2017]{toth2017}
Toth, P. (2017).
\newblock Toth, peter, semiparametric estimation in network formation models
  with homophily and degree heterogeneity.
\newblock {\em Available at SSRN: https://ssrn.com/abstract=2988698 or
  http://dx.doi.org/10.2139/ssrn.2988698}.

\bibitem[van~der Vaart and Wellner, 1996]{van1996weak}
van~der Vaart, A. and Wellner, J. (1996).
\newblock {\em Weak Convergence and Empirical Processes: With Applications to
  Statistics}.
\newblock Springer Science \& Business Media.

\end{thebibliography}

\end{document}